\newtheorem{theorem}{Theorem}[section]
\newtheorem{lemma}[theorem]{Lemma}
\newtheorem{corollary}[theorem]{Corollary}
\newenvironment{proof}[1][Proof]{\begin{trivlist}
\item[\hskip \labelsep {\bfseries #1}]}{\end{trivlist}}
\newenvironment{definition}[1][Definition]{\begin{trivlist}
\item[\hskip \labelsep {\bfseries #1}]}{\end{trivlist}}
\newenvironment{remark}[1][Remark]{\begin{trivlist}
\item[\hskip \labelsep {\bfseries #1}]}{\end{trivlist}}
\newcommand{\qed}{\nobreak \ifvmode \relax \else
      \ifdim\lastskip<1.5em \hskip-\lastskip
      \hskip1.5em plus0em minus0.5em \fi \nobreak
      \vrule height0.75em width0.5em depth0.25em\fi}
\tikzstyle{startstop} = [rectangle, rounded corners, minimum width=3cm, minimum height=1cm,text centered, draw=black, fill=red!30]
\tikzstyle{io} = [trapezium, trapezium left angle=70, trapezium right angle=110, minimum width=3cm, minimum height=1cm, text centered,draw=black, fill=blue!30]
\tikzstyle{process} = [rectangle, minimum width=3cm, minimum height=1cm, text centered, draw=black, fill=orange!30]
\tikzstyle{decision} = [diamond, minimum width=3cm, minimum height=1cm, draw=black, fill=green!30]
\tikzstyle{arrow} = [thick,->,>=stealth]
\title{Analysis of stochastic approximation schemes with set-valued maps in the absence of a stability guarantee and their stabilization}
\author{Vinayaka G. Yaji and Shalabh Bhatnagar,\\ Department of Computer Science and Automation,
        \\ Indian Institute of Science, Bangalore.\\
        vgyaji@gmail.com, shalabh@csa.iisc.ernet.in}
\begin{document}

\maketitle
\begin{abstract}
In this paper, we analyze the behavior of stochastic approximation schemes with set-valued maps in the absence of a stability guarantee. We prove that after a large number of iterations if the stochastic approximation process enters the domain of attraction of an attracting set it gets locked into the attracting set with high probability. We demonstrate that the above result is an effective instrument for analyzing stochastic approximation schemes in the absence of a stability guarantee, by using it obtain an alternate criteria for convergence in the presence of a locally attracting set for the mean field and by using it to show that a feedback mechanism, which involves resetting the iterates at regular time intervals, stabilizes the scheme when the mean field possesses a globally attracting set, thereby guaranteeing convergence. The results in this paper build on the works of \it{V.S. Borkar, C. Andrieu }\rm and \it{H. F. Chen }\rm, by allowing for the presence of set-valued drift functions.   
\end{abstract}

\section{Introduction}
\label{intro}
It is well known that several optimization and control tasks can be cast as a root finding problem. That is, given $f:\mathbb{R}^d\rightarrow\mathbb{R}^d$, one needs to find $x^*\in\mathbb{R}^d$, such that $f(x^*)=0$ (given such a point exists). Due to practical considerations, one usually has access to noisy measurements/estimations of the function whose root needs to be determined. An approach to solving such a problem with noisy measurements of $f$, is given by the recursion,
\begin{equation}
\label{std_rec}
X_{n+1}-X_{n}-a(n)M_{n+1}=a(n)f(X_n),
\end{equation}
where  $\{M_n\}_{n\geq1}$, denotes the noise arising in the measurement of $f$ and having fixed an initial condition ($X_0\in\mathbb{R}^d$), the iterates $\{X_n\}_{n\geq1}$ are generated according to recursion $\eqref{std_rec}$. \cite{benaim3} under certain assumptions which include the Lipschitz continuity of the function $f$, boundedness of the iterates along almost every sample path (that is $\mathbb{P}(\sup_{n\geq0}\|X_n\|<\infty)=1$) and a condition which ensures that the eventual contribution of the additive noise terms is negligible, showed that the linearly interpolated trajectory of recursion $\eqref{std_rec}$ tracks the flow of the ordinary differential equation (o.d.e.) given by,
\begin{equation}
\label{std_ode}
\frac{dx}{dt}=f(x).
\end{equation}
Such a trajectory is called an \it{asymptotic pseudotrajectory }\rm for the flow of  o.d.e.$\eqref{std_ode}$ (for a precise definition see \cite{benaim3}). Suppose the set of zeros of $f$ is a globally asymptotically stable set for the flow of o.d.e. $\eqref{std_ode}$, then it was shown that the limit set of an asymptotic pseudotrajectory was contained in such a set and hence the iterates $\{X_n\}_{n\geq0}$ converge in the limit to a root of the function $f$.

 In order to analyze recursion $\eqref{std_rec}$ when the function $f$ is no longer Lipschitz continuous or even continuous, but is just measurable satisfying the linear growth property, that is for every $x\in \mathbb{R}^d$, $\|f(x)\|\leq K(1+\|x\|)$ for some $K>0$, or when there is a non-additive noise/control component taking values in a compact set whose law is not known 
(in which case the recursion $\eqref{std_rec}$ takes the form $X_{n+1}-X_{n}-a(n)M_{n+1}=a(n)f(X_n,U_n)$, where $U_n$ denotes the noise/control), the above mentioned o.d.e. method needed to be extended to recursions with much weaker requirements on the function $f$. This was accomplished in \cite{benaim1}, where the asymptotic behavior of the recursion given by,
\begin{equation}
\label{std_rec_1}
X_{n+1}-X_{n}-a(n)M_{n+1}\in a(n)F(X_n),
\end{equation}
was studied, where $F$ is a set-valued map satisfying some conditions (while the other quantities have same interpretation as in $\eqref{std_rec}$). Under the assumption of stability of iterates (that is $\mathbb{P}(\sup_{n\geq0}\|X_n\|<\infty)=1$) and appropriate conditions on the additive noise terms, in \cite{benaim1}, it was shown that the linearly interpolated trajectory of recursion $\eqref{std_rec_1}$ tracks the flow of the differential inclusion (d.i.) given by, 
\begin{equation}
\label{std_di}
\frac{dx}{dt}\in F(x).
\end{equation}
We refer the reader to \cite[Ch.~5.3]{borkartxt} for a detailed argument as to how the measurable case and the case with unknown noise/control be recast in the form of recursion $\eqref{std_rec_1}$. For a brief summary of the convergence analysis of recursion $\eqref{std_rec_1}$ we refer the reader to section \ref{mtvtn} of this paper. 

Common to the analysis of both recursion $\eqref{std_rec}$ and $\eqref{std_rec_1}$ is the assumption on the stability of the iterates, that is $\mathbb{P}(\sup_{n\geq0}\|X_n\|<\infty)=1$. The condition of stability is highly non-trivial and difficult to verify. Over the years significant effort has gone into providing sufficient conditions for stability (see \cite{borkarmeyn}, \cite{arunstab}). In \cite{borkarlkp}, it was shown that for recursion $\eqref{std_rec}$, in the absence of stability guarantee, the probability of converging to an attracting set of o.d.e. $\eqref{std_ode}$ given that the iterates lie in a neighborhood of it converged to one as the index ($n$) in which the iterate entered the neighborhood of the attracting set increased to infinity. This probability of the iterates converging to an attracting set given that the iterate lies in a neighborhood of it is called the \it{lock-in probability }\rm and in \cite{borkarlkp} a lower bound for the same was used to obtain sample complexity bounds for recursion $\eqref{std_rec}$. Further a tighter lower bound for the lock-in probability was derived in \cite{kamal} under a slightly stronger noise assumption and used to obtain convergence guarantee when the law of the iterates are tight. In this paper we extend the results in \cite{borkarlkp} to the case of stochastic approximation schemes with set-valued maps as in recursion $\eqref{std_rec_1}$.

\subsection{Contributions and organization of the paper}   
We first provide a lower bound for the lock-in probability of stochastic approximation schemes with set-valued maps as in recursion $\eqref{std_rec_1}$. The bound is derived under an assumption on the additive noise terms which is stronger than the corresponding in \cite{borkarlkp}, which is necessitated due to the lack of Lipschitz continuity of the drift function $F$. We establish that,
\begin{equation*}
\mathbb{P}\left(X_n\to A\ \mathrm{as}\ n\to\infty|X_{n_0}\in\mathcal{O}'\right)\geq 1-2de^{-\tilde{K}/b(n_0)},
\end{equation*}
for $n_0$ large, where, $A\subseteq\mathbb{R}^d$, denotes an attracting set of DI \ref{std_di}, $\mathcal{O}'$ is an open neighborhood of $A$ with compact closure, $\tilde{K}$ is some positive constant and $\{b(n)\}_{n\geq0}$ is a sequence of reals converging to zero, which are step size dependent. 

Having summarized the convergence analysis under stability in section \ref{mtvtn}, we state the lock-in probability bound in section \ref{mr} and provide a few implications of the same. Using the lock-in probability result we provide an alternate criteria for convergence in the presence of a locally attracting set which removes the need to verify stability. A detailed comparison between the obtained convergence guarantee and the corresponding in the presence of stability is also provided. 

Proof of the lock-in probability result is presented in section \ref{prf_lip}. The proof relies heavily on the insights obtained from the analysis in \cite{borkarlkp} for single-valued maps. From the analysis in \cite{borkarlkp}, it is evident that the Lipschitz continuity of the drift function $f$ plays a crucial role in obtaining events and decoupling error contributions which in turn are necessary to obtain the bound in the inequality above. But in the recursion studied in this paper (that is recursion \eqref{std_rec_1}), the drift function $F$ is set-valued and the assumptions under which we study the said recursion (which are summarized in section \ref{recass}), the drift function $F$ is not even continuous. We overcome this problem by first obtaining a sequence of locally Lipschitz continuous set-valued maps which approximate the drift function $F$ from above and then parameterizing them using the Stiener selection procedure. The associated results are summarized in section \ref{usc_app}. This enables us to write recursion $\eqref{std_rec_1}$ in the form of recursion $\eqref{std_rec}$, but with locally Lipschitz continuous drift functions. Further the relation between the solutions of differential inclusions with the approximating set-valued maps as their vector field and those of DI $\eqref{std_di}$, is established in section \ref{sol_apprx}. Having written recursion $\eqref{std_rec_1}$ in the form of recursion $\eqref{std_rec}$, we then collect sample paths of interest in section \ref{prf_lip3}. Along the sample paths that are collected the iterates are such that, having entered a neighborhood of the attracting set at iteration $n_0$, the iterates will infinitely often enter the said neighborhood and the time elapsed between successive visits to the neighborhood of the attracting set can be upper bounded by a constant which is mean field dependent. Further we show that the probability of occurrence of such sample paths can be lower bounded by error contributions due to additive noise terms alone after a large number of iterations. Using the concentration inequality for martingale sequences we obtain the lock-in probability bound in section \ref{nbd}. 

Using the lock-in probability result we design a feedback mechanism which enables us to stabilize the stochastic approximation scheme in the presence of a globally attracting set for DI $\eqref{std_di}$. The feedback mechanism involves resetting the iterates at regular time intervals if they are found to be lying outside a certain compact set. This approach to stabilization has been studied in various forms for stochastic approximation schemes with single-valued drift functions as in recursion $\eqref{std_rec}$, in \cite{hfc_0}, \cite{hfc_1}, \cite{matti} and \cite{andvih} to name a few. We extend the same to the case of set-valued drift functions. The main idea in the analysis of such a scheme is to show that along almost every sample path of the modified recursion, the number of resets that are performed is finite, thereby guaranteeing that eventually the iterates lie within a compact set. We observe that the lock-in probability result (to be precise the approach adopted to obtain the lock-in probability result) plays a central role in showing that the number of resets performed remain finite. Having shown that the iterates eventually lie within a compact set, we use the convergence arguments from \cite{benaim1} to argue that the iterates generated by the modified scheme converge to the globally attracting set of DI $\eqref{std_di}$. The modified scheme is presented and explained in detail in section \ref{appl}. The proof of the finite resets theorem is presented in section \ref{prf_fnrst}. The procedure employed to collect sample paths in the proof of the lock-in probability result can be used to collect sample paths where only finite number of resets have occurred in the modified scheme and this in turn enables  us show that the number of resets are finite almost surely.

Finally, we conclude by providing a few directions for future work in section \ref{concl}.
\section{Recursion and assumptions}
\label{recass}
Let $\left(\Omega,\mathscr{F},\mathbb{P}\right)$ be a probability space and $\left\{X_n\right\}_{n\geq0}$ be a sequence of $\mathbb{R}^{d}$-valued 
random variables on $\Omega$, such that for every $n\geq0$,
\begin{equation}
\label{rec}
 X_{n+1}-X_{n}-a(n)M_{n+1}\in a(n)F(X_n),
\end{equation}
where, 
\begin{itemize}
 \item [(A1)] $F:\mathbb{R}^d\rightarrow\left\{\text{subsets of }\mathbb{R}^d\right\}$ is a set-valued map which for every $x\in\mathbb{R}^d$ 
 satisfies the following:
 \begin{itemize}
 \item [(i)] $F(x)$ is a convex and compact subset of $\mathbb{R}^d$,
 \item [(ii)] there exists $K>0$ (independent of $x$) such that $\sup_{y\in F(x)}\left\|y\right\|\leq K(1+\left\|x\right\|)$,
 \item [(iii)] for every $\mathbb{R}^d$-valued sequence $\{x_n\}_{n\geq1}$ converging to $x$ and for every sequence $\{y_{n}\in F(x_n)\}_{n\geq1}$ 
 converging to $y\in\mathbb{R}^d$, we have that $y\in F(x)$.
 \end{itemize}
 \item [(A2)] $\{a(n)\}_{n\geq0}$ is a sequence of positive real numbers satisfying,
 \begin{itemize}
 \item [(i)] $\sum_{n=0}^{\infty}a(n)=\infty$,
 \item [(ii)] $\sum_{n=0}^{\infty}\left(a(n)\right)^2<\infty$.
 \end{itemize}
 \item [(A3)] $\{M_n\}_{n\geq1}$ is a $\mathbb{R}^d$-valued, martingale difference sequence with respect to the filtration $\{\mathscr{F}_n:=\sigma(X_m,M_m,\ m\leq n)\}$. Furthermore, $\{M_n\}_{n\geq1}$ are  such that, 
 \begin{equation*}
\|M_{n+1}\|\leq K(1+\|x_n\|)\ a.s., 
 \end{equation*}
 for every $n\geq0$, for some constant $K>0$.
\end{itemize}

Assumption $(A1)$ ensures that the set-valued map $F$ is a Marchaud map. The condition $(A1)$(ii) is called the \it{linear growth property }\rm since it ensures that the size of the sets $F(x)$ grow linearly with respect to the distance from the origin. The condition $(A1)$(iii) is called the \it{closed graph property }\rm since it states that the graph of the set-valued map $F$, defined as, 
\begin{equation*}
\left\{(x,y)\in\mathbb{R}^{2d}:x\in\mathbb{R}^{d},\ y\in F(x)\right\}, 
\end{equation*}
is a closed subset of $\mathbb{R}^{2d}$. The map $F$ being a Marchaud map ensures that the differential inclusion (DI) given by,
\begin{equation}
\label{mfld}
\frac{dx}{dt}\in F(x),
\end{equation}
possesses at least one solution through every initial condition. By a solution of DI $\eqref{mfld}$ with initial condition $x_0\in\mathbb{R}^d$, we mean an absolutely continuous function $\bm{\mathrm{x}}:\mathbb{R}\rightarrow\mathbb{R}^{d}$ such that $\bm{\mathrm{x}}(0)=x_0$ and for almost every $t\in\mathbb{R}$, $\frac{d\bm{\mathrm{x}}(t)}{dt}\in F(\bm{\mathrm{x}}(t))$. DI $\eqref{mfld}$ is the \it{mean field }\rm of recursion $\eqref{rec}$ and its dynamics play an important role in describing the asymptotic behavior of recursion $\eqref{rec}$. 

Assumption $(A2)$ states the conditions to be satisfied by the step size sequence $\{a(n)\}_{n\geq0}$. Square summability (that is $(A2)$(ii)) is needed later in the analysis for obtaining a probability bound on certain tail events associated with the additive noise terms $\{M_n\}_{n\geq1}$. 

 Assumption $(A3)$,  defines the \it{martingale noise model}\rm. These terms denote the noise arising in the measurement of $F(\cdot)$. This condition holds in several reinforcement learning applications (see \cite[Ch.~10]{borkartxt})
 \begin{remark}
Clearly when $\{M_n\}_{n\geq1}$ are i.i.d. zero mean and bounded, assumption $(A3)$ is satisfied. Further, since the drift function in recursion $\eqref{rec}$ is a set-valued map, scenarios where the measurement noise terms possess a bounded bias can be recast in the form of recursion $\eqref{rec}$ as explained below.

Consider the recursion given by,
\begin{equation}
\label{apd}
 X_{n+1}-X_{n}-a(n)M_{n+1}-a(n)\eta_{n+1}=a(n) f(X_n),\ n\geq0,
\end{equation}
where $f:\mathbb{R}^d\rightarrow\mathbb{R}^d$ is a single-valued Lipschitz continuous map, for every $n\geq0$, $\eta_{n+1}$ denotes the bias in the measurement noise. Let the bias terms $\{\eta_n\}_{n\geq1}$ be bounded by a positive constant, say $\epsilon>0$ (that is, for every $n\geq1$, $\|\eta_n\|\leq\epsilon$). Then, recursion $\eqref{apd}$ can be written in the form of recursion $\eqref{rec}$ with set-valued map $F$, given by, $F(x)=\{f(x)+\eta: \|\eta\|\leq\epsilon\}$, for every $x\in\mathbb{R}^d$. We refer the reader to \cite[ch.~5.3]{borkartxt} for several other variants of the standard stochastic approximation scheme which can be analyzed with the help of recursion $\eqref{rec}$.
 \end{remark}

\section{Lock-in probability for stochastic recursive inclusions}

In order to state the main result of this paper, definition of the flow of  a DI,  an attracting set for such a dynamical system are needed. We recall these notions below and we state them with respect to the mean field of recursion $\eqref{rec}$ (for a detailed description and associated results see \cite{benaim1}).

The  \it{flow }\rm of DI $\eqref{mfld}$ is given by the set-valued map $\Phi:\mathbb{R}\times\mathbb{R}^d\rightarrow\{\text{subsets of }\mathbb{R}^d\}$, where for every $(t,x)\in\mathbb{R}\times\mathbb{R}^d$, 
\begin{equation}
\Phi(t,x):=\left\{\bm{\mathrm{x}}(t)\in\mathbb{R}^d:\bm{\mathrm{x}}(\cdot)\text{ is a solution of DI }\eqref{mfld}\ \text{with }\bm{\mathrm{x}}(0)=x\right\}.
\end{equation}

A compact set $A\subset\mathbb{R}^d$ is an \it{attracting set }\rm for the flow of DI $\eqref{mfld}$, if there exists an open  neighborhood of $A$, say $\mathcal{O}$, with the property that for every $\epsilon>0$, there exists a time $T>0$ (depending on $\epsilon$ and $\mathcal{O}$) such that for every $t\geq T$ and for every $x\in U$, $\Phi(t,x)\in N^{\epsilon}(A)$, where $N^{\epsilon}(A)$ denotes the $\epsilon$-neighborhood of $A$. Such a neighborhood $\mathcal{O}$ of an attracting set $A$ is called the \it{fundamental neighborhood }\rm of $A$. 

The set of initial conditions in $\mathbb{R}^d$ from which the flow is attracted to an attracting set $A$ is called the \it{basin of attraction }\rm and is denoted by $B(A)$. Formally, 
\begin{equation*}
B(A):=\left\{x\in\mathbb{R}^d: \cap_{t\geq0}\overline{\left\{\Phi(q,x):q\geq t\right\}}\subseteq A\right\}.
\end{equation*}
An attracting set $A$ is said to be \it{globally attracting }\rm if, $B(A)=\mathbb{R}^d$. 

%--------------------------------------------------------------------------------------------------------------------------------------------------------------------------------------------------------------------------------------
\subsection{Summary of the asymptotic analysis under stability}
\label{mtvtn}

Let $t(0):=0$ and for every $n\geq1$, $t(n):=\sum_{k=0}^{n-1}a(k)$. The linearly interpolated trajectory of recursion $\eqref{rec}$, is given by the stochastic process $\bar{X}:\Omega\times\mathbb{R}\rightarrow\mathbb{R}^d$, where for every $(\omega,t)\in\Omega\times[0,\infty)$,
\begin{equation}
\label{lit}
\bar{X}(\omega,t):=\left(\frac{t-t(n)}{t(n+1)-t(n)}\right)X_{n+1}(\omega)+\left(\frac{t(n+1)-t}{t(n+1)-t(n)}\right)X_n(\omega),
\end{equation}
where $n$ is such that $t\in[t(n),t(n+1))$ and for every $(\omega,t)\in\Omega\times(-\infty,0)$, $\bar{X}(\omega,t):=X_0(\omega)$. 

For $\omega\in\Omega$, the limit set map of $\bar{X}$ is given by, $\lambda:\Omega\rightarrow\{\text{subsets of }\mathbb{R}^d\}$ where for every $\omega\in\Omega$,
\begin{equation}
\label{ls}
\lambda(\omega):=\cap_{t\geq0}\overline{\left\{\bar{X}(\omega,q):q\geq t\right\}}.
\end{equation}
In \cite{benaim1}, under assumptions $(A1)-(A3)$ along with the additional assumption of stability of the iterates (that is $\mathbb{P}(\sup_{n\geq0}\|X_n\|<\infty)=1$), it was shown that for almost every $\omega\in\Omega$, the linearly interpolated trajectory of recursion $\eqref{rec}$, $\bar{X}(\omega,\cdot)$, is an \it{asymptotic pseudotrajectory }\rm for the flow of DI $\eqref{mfld}$. More precisely, for almost every $\omega\in\Omega$, $\bar{X}(\omega,\cdot)$ was shown to satisfy the following:
\begin{itemize}
\item [(a)] The family of shifted trajectories given by $\{\bar{X}(\omega,\cdot+t)\}_{t\geq0}$ is relatively compact in $\mathcal{C}(\mathbb{R},\mathbb{R}^d)$ where $\mathcal{C}(\mathbb{R},\mathbb{R}^d)$ denotes the metric space  of all continuous functions on $\mathbb{R}$ taking values in $\mathbb{R}^d$ with  metric $\bm{\mathrm{D}}$, which for every $\bm{\mathrm{z}},\ \bm{\mathrm{z}}'\in\mathcal{C}(\mathbb{R},\mathbb{R}^d)$ is given by
\begin{equation}
\label{mtrc}
\bm{\mathrm{D}}(\bm{\mathrm{z}},\bm{\mathrm{z}}')=\sum_{k=1}^{\infty}\frac{1}{2^k}\min\{\|\bm{\mathrm{z}}-\bm{\mathrm{z}}'\|_{[-k,k]},1\},
\end{equation}
where $\|\bm{\mathrm{z}}-\bm{\mathrm{z}}'\|_{[-k,k]}:=\sup_{t\in[-k,k]}\|\bm{\mathrm{z}}(t)-\bm{\mathrm{z}}'(t)\|$.
\item [(b)] Every limit point of the shifted trajectories $\{\bar{X}(\omega,\cdot+t)\}_{t\geq0}$ is a solution of the DI $\eqref{mfld}$.
\end{itemize}
From \cite[Thm.~4.3]{benaim1}, it follows that for almost every $\omega\in\Omega$, the limit set of the linearly interpolated trajectory $\bar{X}(\omega,\cdot)$, $\lambda(\omega)$, is a non-empty, compact and an internally chain transitive (ICT) set for the flow of DI $\eqref{mfld}$ (see \cite[Defn.~VI]{benaim1} for definition of an ICT set). Now using \cite[Thm.~3.23]{benaim1} the main convergence result of \cite{benaim1} follows and is stated below.
\begin{theorem}
\label{aptls}
Let $A\subseteq\mathbb{R}^d$ be an attracting set for the flow of DI $\eqref{mfld}$.
Under assumptions $(A1)-(A3)$, 
\begin{itemize}
\item [(a)] for almost every $\omega\in\{\omega\in\Omega:\sup_{n\geq0}\|X_n(\omega)\|<\infty\}\cap\{\omega\in\Omega:\lambda(\omega)\cap B(A)\neq\emptyset\}$, $\lambda(\omega)\subseteq A$ and therefore as $n\to\infty$, $X_n(\omega)\to A$.
\item [(b)] if $B(A)=\mathbb{R}^d$ (that is $A$ is a globally attracting set), then for almost every $\omega\in \{\omega\in\Omega:\sup_{n\geq0}\|X_n(\omega)\|<\infty\}$, $\lambda(\omega)\subseteq A$ and therefore as $n\to\infty$, $X_n(\omega)\to A$. 
\end{itemize}
\end{theorem}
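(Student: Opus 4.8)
The plan is to deduce the statement from the machinery already recalled above — the asymptotic pseudotrajectory property of $\bar X(\omega,\cdot)$ and the identification of $\lambda(\omega)$ as an internally chain transitive (ICT) set — together with the relation between ICT sets and attracting sets from \cite[Thm.~3.23]{benaim1}. The only point needing genuine care is that the facts recalled above are stated under the \emph{global} stability hypothesis $\mathbb{P}(\sup_{n\geq0}\|X_n\|<\infty)=1$, whereas here boundedness of the iterates is only assumed along the sample paths of interest; so a localization step comes first.

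For the localization I would fix, for each integer $M\geq1$, the event $\Omega_M:=\{\omega\in\Omega:\sup_{n\geq0}\|X_n(\omega)\|\leq M\}$, so that $\{\omega:\sup_n\|X_n(\omega)\|<\infty\}=\bigcup_{M\geq1}\Omega_M$, and consider the stopped recursion in which the iterates and the noise are frozen at the first time $\tau_M$ that $\|X_n\|$ exceeds $M$. This modified process agrees with $\{X_n\}$ on $\Omega_M$ for every $n$ (there $\tau_M=\infty$); on the other hand it is globally bounded on $\Omega$ (by $M$ up to an $O(a(n))$ correction), the relevant noise increments are bounded by $K(1+M)$ almost surely by $(A3)$, and $(A2)$ is unaffected. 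Consequently the arguments of \cite{benaim1} apply to it verbatim: almost surely its interpolated trajectory is an asymptotic pseudotrajectory of DI $\eqref{mfld}$, and by \cite[Thm.~4.3]{benaim1} its limit set is non-empty, compact and ICT. Intersecting with $\Omega_M$, where the modified and original processes coincide, and taking the union over $M$, I obtain a $\mathbb{P}$-null set $N$ such that $\lambda(\omega)$ is non-empty, compact and ICT for the flow of DI $\eqref{mfld}$ for every $\omega\in\{\sup_n\|X_n\|<\infty\}\setminus N$.

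Granting this, the rest is short. For $(a)$, fix $\omega$ in the event in the statement and outside $N$: then $\lambda(\omega)$ is a non-empty compact ICT set with $\lambda(\omega)\cap B(A)\neq\emptyset$, so \cite[Thm.~3.23]{benaim1} gives $\lambda(\omega)\subseteq A$. Part $(b)$ is the special case $B(A)=\mathbb{R}^d$, where $\lambda(\omega)\cap B(A)\neq\emptyset$ amounts to $\lambda(\omega)\neq\emptyset$ and thus holds on all of $\{\sup_n\|X_n\|<\infty\}\setminus N$. Finally, to get $X_n(\omega)\to A$ from $\lambda(\omega)\subseteq A$ I would argue by compactness: the curve $\bar X(\omega,\cdot)$ is bounded on $[0,\infty)$ and $A$ is compact, so if $\mathrm{dist}(\bar X(\omega,t),A)\not\to0$ there would be $\delta>0$ and $t_k\uparrow\infty$ with $\mathrm{dist}(\bar X(\omega,t_k),A)\geq\delta$; a convergent subsequence of $\{\bar X(\omega,t_k)\}$ would have its limit in $\lambda(\omega)$ by $\eqref{ls}$ yet at distance $\geq\delta$ from $A$, a contradiction. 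Since $X_n(\omega)=\bar X(\omega,t(n))$ with $t(n)\uparrow\infty$ by $(A2)$(i), this yields $X_n(\omega)\to A$.

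I expect the main obstacle to be exactly the localization in the second paragraph: all the substantive analysis — the asymptotic pseudotrajectory property, the ICT characterization of $\lambda(\omega)$, and the statement that an ICT set meeting a basin of attraction lies in the attractor — is imported from \cite{benaim1} as a black box, so the genuinely new work is in making precise that these almost-sure conclusions, established under global stability, persist on the event $\{\sup_n\|X_n\|<\infty\}$ after stopping the recursion at level $M$ and letting $M\to\infty$. The remaining steps are routine.
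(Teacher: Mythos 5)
Your overall route is the paper's own: Theorem \ref{aptls} is not proved in the paper but imported from \cite{benaim1} --- the asymptotic pseudotrajectory property, the identification of $\lambda(\omega)$ as a nonempty compact ICT set via \cite[Thm.~4.3]{benaim1}, and the attractor statement \cite[Thm.~3.23]{benaim1} --- so the only content you add is the localization step, and that is where your argument has a flaw as written. If you freeze the iterates \emph{and} the recursion at $\tau_M$, the frozen process is no longer of the form \eqref{rec}: for $n\geq\tau_M$ you would need $0\in F(X_{\tau_M})$, which assumption $(A1)$ does not provide. So the claim that the arguments of \cite{benaim1} apply ``verbatim'' to the stopped recursion is not licit; the stopped process is a stochastic approximation for a modified set-valued map, and its ICT sets need not be ICT sets of the flow of DI \eqref{mfld}, which is what \cite[Thm.~3.23]{benaim1} needs.

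The standard repair is to localize only the martingale, exploiting the fact that \cite[Prop.~1.3]{benaim1} and \cite[Thm.~4.3]{benaim1} are pathwise statements requiring only (i) boundedness of the trajectory along the sample path and (ii) the noise condition, i.e.\ convergence of $\zeta_n=\sum_{m<n}a(m)M_{m+1}$ along that path. For each $M$, the stopped martingale $\zeta_{n\wedge\tau_M}$ has increments bounded by $a(n)K(1+M)$ by $(A3)$, hence converges a.s.\ by $(A2)$(ii) and the martingale convergence theorem; on $\Omega_M$ it coincides with $\zeta_n$ and the unmodified trajectory is bounded by $M$, so after taking the union over $M$ the hypotheses of the deterministic theorems of \cite{benaim1} hold a.s.\ on $\{\sup_{n\geq0}\|X_n\|<\infty\}$, with no auxiliary process needed. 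With that correction, the rest of your argument --- invoking \cite[Thm.~3.23]{benaim1} for part $(a)$, specializing $B(A)=\mathbb{R}^d$ for part $(b)$, and the compactness argument extracting $X_n(\omega)\to A$ from $\lambda(\omega)\subseteq A$ --- is fine and matches what the paper implicitly relies on.
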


The assumption of stability of the iterates used to obtain the above convergence result is highly non-trivial and difficult to verify. Moreover the proof method used to prove the above convergence result cannot be modified in a straight forward manner to obtain a similar convergence guarantee. This warrants an alternate approach to study the behavior of recursion $\eqref{rec}$ in the absence of stability guarantee and we accomplish this by extending the lock-in probability result from \cite{borkarlkp} to the set-valued case. Using the obtained lock-in probability bound we recover convergence guarantee similar to Theorem \ref{aptls} while eliminating the need to verify stability.

%---------------------------------------------------------------------------------------------------------------------------------------------------------------------------------------------------------------------------------------
\subsection{Main result and its implications}
\label{mr} 

Before we state the main result, we state an assumption which fixes the attracting set of interest.
\begin{itemize}
\item [(A4)] Let $A\subseteq \mathbb{R}^d$, be an attracting set of DI $\eqref{mfld}$ (the mean field of recursion $\eqref{rec}$) with $\mathcal{O}\subseteq\mathbb{R}^d$ as its fundamental neighborhood of attraction. 
\end{itemize}

Let $\mathcal{O}'$ be an open neighborhood of the attracting set $A$ (as in $(A4)$) such that $\bar{\mathcal{O}'}$ is compact and $\bar{\mathcal{O}'}\subseteq \mathcal{O}$. Then the main result of the paper can be stated as follows.

\begin{theorem}
\label{mnres}
\emph{(Lock-in probability)}
Under assumptions $(A1)-(A4)$, there exists a constant $\tilde{K}>0$ (depending on the attracting set $A$ and $\mathcal{O}'$) and an $N_0\geq1$ such that, for every $n_0\geq N_0$, for every $E\in \mathscr{F}_{n_0}$ satisfying $E\subseteq\{\omega\in\Omega:X_{n_0}(\omega)\in\mathcal{O}'\}$ and $\mathbb{P}(E)>0$, we have that,
\begin{equation*}
\mathbb{P}(X_n\to A\ \text{as }n\to\infty| E)\geq 1-2de^{-\tilde{K}/b(n_0)},
\end{equation*}
where, for every $n\geq0$, $b(n):=\sum_{k=n}^{\infty}(a(k))^2$.
\end{theorem}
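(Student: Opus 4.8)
\emph{Proof strategy.} The argument follows the blueprint of \cite{borkarlkp} for single-valued recursions; the obstruction is that $F$ is set-valued and, under $(A1)$, not even continuous, so there is no Lipschitz drift on which to base the tracking estimates. The first step is to reduce \eqref{rec} to a recursion with a locally Lipschitz drift, using the approximation and parameterization machinery of Sections~\ref{usc_app}--\ref{sol_apprx}. I would approximate $F$ from above by a family $\{F_\varepsilon\}_{\varepsilon>0}$ of locally Lipschitz Marchaud maps with $F(x)\subseteq F_\varepsilon(x)$ for every $x\in\mathbb{R}^d$ and $\varepsilon>0$, such that $F_\varepsilon(\cdot)\to F(\cdot)$ in the Hausdorff metric, uniformly on compacts, as $\varepsilon\downarrow0$ (so that solutions of $\dot x\in F_\varepsilon(x)$ over a fixed compact time interval, with initial conditions in a fixed compact set, converge to solutions of \eqref{mfld}). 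Via the Steiner-point parameterization I would write $F_\varepsilon(x)=\{h_\varepsilon(x,u):u\in\bar B\}$, with $\bar B$ the closed unit ball of $\mathbb{R}^d$ and $h_\varepsilon$ locally Lipschitz in $x$ uniformly in $u$ and continuous in $u$; a measurable-selection argument then yields $\mathscr{F}_{n+1}$-measurable random variables $U_n\in\bar B$ with $X_{n+1}=X_n+a(n)(h_\varepsilon(X_n,U_n)+M_{n+1})$ for all $n\geq0$. Thus \eqref{rec} is written in the form \eqref{std_rec} with the locally Lipschitz drift $x\mapsto h_\varepsilon(X_n,U_n)$, the ``control'' $U_n$ entering harmlessly because only $h_\varepsilon(X_n,U_n)\in F_\varepsilon(X_n)$ is ever used, and $(A3)$ is untouched.

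Next I would transport the attracting structure of $\Phi$ to the inclusions $\dot x\in F_\varepsilon(x)$. Using $(A4)$ and $\bar{\mathcal{O}'}\subseteq\mathcal{O}$, fix a (smooth) Lyapunov function $W$ for $A$ relative to $\Phi$ and sublevel-set neighborhoods $A\subseteq V_2\subseteq\bar V_2\subseteq V_1\subseteq\bar V_1\subseteq\mathcal{O}$ with $\bar V_1$ compact and $\bar{\mathcal{O}'}\subseteq V_1$, so that the inner product of $\nabla W(x)$ with every element of $F(x)$ is at most a negative constant for $x$ in the shell $\bar V_1\setminus V_2$. Since $F_\varepsilon\to F$ uniformly on compacts, for $\varepsilon$ small this strict-descent property persists along $F_\varepsilon$ on the shell, and hence for such $\varepsilon$: (i)~$\bar V_1$ is strongly forward invariant under $\dot x\in F_\varepsilon(x)$; (ii)~every solution of $\dot x\in F_\varepsilon(x)$ started in $\bar V_1$ lies in $V_2$ for all $t\geq T$, for a fixed $T>0$; and (iii)~by the same estimates, both conclusions survive for $\eta$-perturbed solutions of $\dot x\in F_\varepsilon(x)$ whenever $0<\eta\leq\eta_0$ for a threshold $\eta_0=\eta_0(A,\mathcal{O}')>0$. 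Establishing (i)--(iii) — that is, making the attracting/forward-invariance structure of $\Phi$ robust under replacing $F$ by $F_\varepsilon$ and then under perturbation — is where the set-valuedness and discontinuity of $F$ genuinely bite, and it is what I expect to be the main obstacle; this is precisely the role of Sections~\ref{usc_app}--\ref{sol_apprx}.

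With (i)--(iii) in hand I would run a stopped-process argument. Put $\tau:=\inf\{n\geq n_0:X_n\notin V_1\}$; on $E$ one has $X_{n_0}\in\mathcal{O}'\subseteq V_1$, so $\tau>n_0$, and for $n_0\leq n<\tau$ both $\|h_\varepsilon(X_n,U_n)\|$ and, by $(A3)$, $\|M_{n+1}\|$ are at most a constant $C_1$ determined by $\bar V_1$. For each coordinate $i=1,\dots,d$ let $S^i_n:=\sum_{j=n_0}^{(n\wedge\tau)-1}a(j)M^i_{j+1}$; this is a zero-mean martingale, also under $\mathbb{P}(\cdot\mid E)$ since $E\in\mathscr{F}_{n_0}$, with increments bounded in absolute value by $a(j)C_1$. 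Write $b(n)=\sum_{k\geq n}(a(k))^2$, fix a threshold $\eta>0$ depending only on $A$ and $\mathcal{O}'$ and small enough that the resulting perturbation bound below does not exceed $\eta_0$, and define the ``good event'' $G:=\bigcap_{i=1}^d\{\sup_{n\geq n_0}|S^i_n|\leq\eta\}$. The interpolation and discretization errors accumulated along the trajectory over windows of a fixed length inside $[t(n_0),\infty)$ are $O(b(n_0))$ by $(A2)$, so there is an $N_0$ such that these errors are negligible relative to $\eta$ whenever $n_0\geq N_0$. On $E\cap G$ with $n_0\geq N_0$, the stopped interpolated trajectory is a $C\eta$-perturbed solution of $\dot x\in F_\varepsilon(x)$ started in $\mathcal{O}'\subseteq V_1$; by (iii) it never leaves $V_1$, so $\tau=\infty$, all the above bounds hold for every $n\geq n_0$, the stopped and unstopped trajectories coincide, and $\sup_n\|X_n\|<\infty$ on $E\cap G$. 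Its limit set then lies in $\bar V_1\subseteq\mathcal{O}\subseteq B(A)$ and is an internally chain transitive set for $\Phi$ by the asymptotic-pseudotrajectory property recalled in Section~\ref{mtvtn}; by Theorem~\ref{aptls}(a) it is contained in $A$, i.e. $X_n\to A$. Hence, up to a $\mathbb{P}$-null set, $E\cap G\subseteq\{X_n\to A\}\cap E$.

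It remains to bound $\mathbb{P}(G^c\mid E)$. By the Azuma--Hoeffding maximal inequality applied to each $S^i$ (increments $\leq a(j)C_1$, whence $\sum_{j\geq n_0}(a(j)C_1)^2=C_1^2b(n_0)$), one gets $\mathbb{P}(\sup_{n\geq n_0}|S^i_n|>\eta\mid E)\leq 2\exp(-\eta^2/(2C_1^2b(n_0)))$, and a union bound over $i=1,\dots,d$ gives $\mathbb{P}(G^c\mid E)\leq 2d\,e^{-\tilde K/b(n_0)}$ with $\tilde K:=\eta^2/(2C_1^2)>0$ depending only on $A$ and $\mathcal{O}'$. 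Combining this with the previous paragraph yields $\mathbb{P}(X_n\to A\text{ as }n\to\infty\mid E)\geq\mathbb{P}(G\mid E)\geq 1-2d\,e^{-\tilde K/b(n_0)}$ for all $n_0\geq N_0$, which is the assertion. (Following the roadmap of the paper one may, instead of the event ``$\tau=\infty$'', collect the event that the iterates re-enter $\mathcal{O}'$ infinitely often, with inter-visit times bounded by a mean-field-dependent constant and with the iterates remaining in a fixed compact set in between, using the linear growth property in $(A1)$ to keep excursions bounded; the limit set then still meets $\bar{\mathcal{O}'}\subseteq B(A)$, Theorem~\ref{aptls}(a) again gives convergence to $A$, and the probability estimate is unchanged.)
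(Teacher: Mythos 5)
Your overall skeleton — approximate $F$ from above by locally Lipschitz set-valued maps, parameterize them by a Steiner-type selection so that \eqref{rec} becomes a single-valued recursion with locally Lipschitz drift, collect a ``good'' noise event, bound its complement by a maximal martingale inequality plus a union bound over coordinates to get $2d\,e^{-\tilde K/b(n_0)}$, and conclude $X_n\to A$ on the good event via Theorem~\ref{aptls}$(a)$ — is the paper's skeleton, and your noise estimate (a single Azuma--Hoeffding tail bound with $\sum_{j\geq n_0}(a(j))^2=b(n_0)$, conditioning on $E\in\mathscr{F}_{n_0}$ being harmless) is a legitimate, even cleaner, variant of the paper's per-window bounds $2d\,e^{-\tilde K/(b(n_m)-b(n_{m+1}))}$ summed via the monotonicity of $e^{-\tilde K/x}/x$. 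The genuine gap is in the middle step, which you yourself flag as the main obstacle and defer to Sections~\ref{usc_app}--\ref{sol_apprx}: those sections do not deliver what you use. First, you posit a smooth Lyapunov function $W$ with strict descent along every element of $F(x)$ on a shell around $A$. Under $(A4)$, $A$ is only a compact attracting set with a fundamental neighborhood; it need not be invariant nor Lyapunov stable, and no converse Lyapunov theorem for differential inclusions is invoked in the paper or supplied by you, so the existence of such a $W$ is an unproven (and nontrivial) assertion on which your forward-invariance claims (i)--(iii) entirely rest. Second, your stated approximation premise --- $F_\varepsilon\to F$ in the Hausdorff metric uniformly on compacts --- is impossible in general: a uniform-on-compacts limit of continuous maps into $(\mathcal{K}_c(\mathbb{R}^d),\bm{\mathrm{H}})$ is continuous, whereas under $(A1)$ the map $F$ is merely u.s.c. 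What is true, and what the paper proves and uses instead, is the solution-level statement of Lemma~\ref{sola}$(f)$ (for $l$ large, every solution of $\dot x\in F^{(l)}(x)$ from a compact set is $\epsilon$-shadowed by a solution of \eqref{mfld}); the transfer of a descent inequality to $F^{(l)}$ could be repaired using the monotonicity $F\subseteq F^{(l+1)}\subseteq F^{(l)}$ and a Dini-type compactness argument, but that repair does not produce the Lyapunov function itself.

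For comparison, the paper avoids Lyapunov functions altogether: it fixes $\epsilon_0$ and the uniform attraction time $T_A$ from the definition of the attracting set, chops time into windows of length in $[T_A,T_u]$, and shows by induction (Lemma~\ref{events}) that if on each window the interpolated trajectory stays $\epsilon_0$-close to the solution set $S(T_m,\bar{\mathcal{O}'})$, then the iterates re-enter $\mathcal{O}'$ at every window endpoint and remain in a fixed compact set; per-window closeness is split as $\rho\leq\rho_1^{(l_0)}+\rho_2^{(l_0)}$, with $\rho_2^{(l_0)}<\epsilon_0/2$ secured once and for all by choosing $l_0$ via Lemma~\ref{sola}$(f)$ (Lemma~\ref{bd1}), and $\rho_1^{(l_0)}$ reduced, for $n_0$ large, to the window-wise martingale fluctuation by a Gronwall argument using the local Lipschitz parameterization (Lemma~\ref{bd0}); convergence on the collected event then follows from Theorem~\ref{aptls}$(a)$ exactly as you intend. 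Your parenthetical closing remark (collecting returns to $\mathcal{O}'$ with inter-visit times bounded by a mean-field-dependent constant) is in fact the paper's actual mechanism; if you replace the Lyapunov/forward-invariance step by that window-chaining argument, and replace uniform Hausdorff convergence by Lemma~\ref{sola}$(f)$, your proof closes.
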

There are two immediate implications of the above result and are stated below, one of which serves as an alternate convergence result in the absence of stability guarantees, that is it allows us to obtain the convergence guarantee in Theorem \ref{aptls}$(a)$ without the need to verify whether a given sample path satisfies $\sup_{n\geq0}\|X_n(\omega)\|<\infty$.
\begin{itemize}
\item [(1)] As a consequence of assumption $(A2)$(ii), we have that $\lim_{n\to\infty}b(n)=0$. Therefore from Theorem \ref{mnres}, if the observation that iterate lies in a neighborhood of the attracting set is made later in time ($n_0$), the probability of converging to the attracting set increases and converges to one as $n_0\to\infty$. Formally,
\begin{equation*}
\lim_{n_0\to\infty}\mathbb{P}(X_n\to A\ \text{as }n\to\infty|X_{n_0}\in\mathcal{O}')=1.
\end{equation*}
\item [(2)] Suppose $\mathbb{P}(\cap_{N\geq0}\cup_{n\geq N}\{X_n\in\mathcal{O}'\})>0$ (if $\mathbb{P}(\cap_{N\geq0}\cup_{n\geq N}\{X_n\in\mathcal{O}'\})=0$ then the iterates almost surely do not converge to the attracting set $A$). Then for every $N\geq0$, $\mathbb{P}(\cup_{n\geq N}\{X_n\in\mathcal{O}'\})>0$ and 
\begin{equation*}
\cup_{n\geq N}\{X_n\in\mathcal{O}'\}=\{X_N\in\mathcal{O}'\}\cup\left(\cup_{n>N}\{X_{k}\notin\mathcal{O}',\ \text{for }N\leq k\leq n-1, X_n\in \mathcal{O}'\}\right),
\end{equation*}
where, the union in the R.H.S. is disjoint. Then by Theorem \ref{mnres}, for every $N\geq N_0$,
\begin{align*}
\mathbb{P}(\{X_n\to A\ \text{as }n\to\infty\}&\cap(\cap_{N\geq0}\cup_{n\geq N}\{X_n\in\mathcal{O}'\}))\\&\geq\sum_{n\geq N}\mathbb{P}\left(\{X_n\to A\ \text{as }n\to\infty\}\cap\{X_{k}\notin\mathcal{O}',\ \text{for }N\leq k\leq n-1, X_n\in \mathcal{O}'\}\right)\\
&=\sum_{n\geq N}\bigg[\mathbb{P}\left(\{X_n\to A\ \text{as }n\to\infty|\{X_{k}\notin\mathcal{O}',\ \text{for }N\leq k\leq n-1, X_n\in \mathcal{O}'\}\right)\\
&\ \ \ \ \ \ \ \ \ \ \ \ \mathbb{P}\left(\{X_{k}\notin\mathcal{O}',\ \text{for }N\leq k\leq n-1, X_n\in \mathcal{O}'\}\right)\bigg]\\
&\geq \sum_{n\geq N}\left(1-2de^{-\tilde{K}/b(n)}\right)\mathbb{P}\left(\{X_{k}\notin\mathcal{O}',\ \text{for }N\leq k\leq n-1, X_n\in \mathcal{O}'\}\right)\\
&\geq \left(1-2de^{-\tilde{K}/b(N)}\right)\sum_{n\geq N}\mathbb{P}\left(\{X_{k}\notin\mathcal{O}',\ \text{for }N\leq k\leq n-1, X_n\in \mathcal{O}'\}\right)\\
&=\left(1-2de^{-\tilde{K}/b(N)}\right)\mathbb{P}\left(\cup_{n\geq N}\{X_n\in\mathcal{O}'\}\right)\\
&\geq \left(1-2de^{-\tilde{K}/b(N)}\right)\mathbb{P}\left(\cap_{N\geq 0}\cup_{n\geq N}\{X_n\in\mathcal{O}'\}\right).
\end{align*}
The above inequality is true for every $N\geq N_0$. Taking limit and using the fact that $\lim_{n\to\infty}b(n)=0$, we get that,
\begin{equation*}
\mathbb{P}(\{X_n\to A\ \text{as }n\to\infty\}\cap(\cap_{N\geq0}\cup_{n\geq N}\{X_n\in\mathcal{O}'\}))=\mathbb{P}\left(\cap_{N\geq 0}\cup_{n\geq N}\{X_n\in\mathcal{O}'\}\right).
\end{equation*}
Therefore from the above we can conclude that,
\begin{corollary}
\label{alt_conv}
Under assumptions $(A1)-(A4)$, for almost every $\omega\in\cap_{N\geq0}\cup_{n\geq N}\{X_n\in\mathcal{O}'\}$, $X_n(\omega)\to A$ as $n\to\infty$.
\end{corollary}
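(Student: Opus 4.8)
The plan is to reduce the corollary to the single statement that $\mathbb{P}\bigl(\{X_n\to A\}\cap\Lambda\bigr)=\mathbb{P}(\Lambda)$, where $\Lambda:=\cap_{N\geq0}\cup_{n\geq N}\{X_n\in\mathcal{O}'\}$ and $\{X_n\to A\}$ abbreviates $\{\omega:X_n(\omega)\to A\text{ as }n\to\infty\}$. Indeed, since $\{X_n\to A\}\cap\Lambda\subseteq\Lambda$ the inequality $\leq$ is immediate, so the whole content is the reverse inequality; and once it holds, $\mathbb{P}(\Lambda\setminus\{X_n\to A\})=0$, which is exactly the assertion that $X_n\to A$ for almost every $\omega\in\Lambda$. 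If $\mathbb{P}(\Lambda)=0$ there is nothing to prove, so I would assume $\mathbb{P}(\Lambda)>0$, whence also $\mathbb{P}\bigl(\cup_{n\geq N}\{X_n\in\mathcal{O}'\}\bigr)>0$ for every $N$.

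Fix $N\geq N_0$, with $N_0$ as in Theorem \ref{mnres}. The key step is to slice $\cup_{n\geq N}\{X_n\in\mathcal{O}'\}$ according to the first index $n\geq N$ at which the iterate enters $\mathcal{O}'$: set $E_N:=\{X_N\in\mathcal{O}'\}$ and, for $n>N$, $E_n:=\{X_k\notin\mathcal{O}'\text{ for }N\leq k\leq n-1,\ X_n\in\mathcal{O}'\}$. Then the $E_n$, $n\geq N$, are pairwise disjoint with union $\cup_{n\geq N}\{X_n\in\mathcal{O}'\}$, each $E_n$ lies in $\mathscr{F}_n$ (as $X_0,\dots,X_n$ are $\mathscr{F}_n$-measurable and $\mathcal{O}'$ is Borel), and $E_n\subseteq\{X_n\in\mathcal{O}'\}$. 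Hence for every $n\geq N$ with $\mathbb{P}(E_n)>0$, Theorem \ref{mnres} applies with $n_0=n$ and $E=E_n$ and yields $\mathbb{P}\bigl(\{X_m\to A\}\cap E_n\bigr)\geq\bigl(1-2de^{-\tilde K/b(n)}\bigr)\mathbb{P}(E_n)$, and this also holds trivially when $\mathbb{P}(E_n)=0$. Because $b(\cdot)$ is non-increasing, $1-2de^{-\tilde K/b(n)}\geq 1-2de^{-\tilde K/b(N)}$ for all $n\geq N$, so $\mathbb{P}\bigl(\{X_m\to A\}\cap E_n\bigr)\geq\bigl(1-2de^{-\tilde K/b(N)}\bigr)\mathbb{P}(E_n)$.

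Next I would observe that $\{X_n\to A\}\subseteq\Lambda$: since $A$ is compact and $\mathcal{O}'$ is an open neighborhood of $A$, $\mathcal{O}'$ contains some $\epsilon$-neighborhood of $A$, so $X_n\to A$ forces $X_n\in\mathcal{O}'$ for all large $n$. Thus $\{X_m\to A\}\cap E_n\subseteq\{X_m\to A\}\cap\Lambda$ for every $n$, and summing the previous bound over the disjoint family $\{E_n\}_{n\geq N}$ gives
\begin{equation*}
\mathbb{P}\bigl(\{X_m\to A\}\cap\Lambda\bigr)\ \geq\ \bigl(1-2de^{-\tilde K/b(N)}\bigr)\,\mathbb{P}\bigl(\cup_{n\geq N}\{X_n\in\mathcal{O}'\}\bigr)\ \geq\ \bigl(1-2de^{-\tilde K/b(N)}\bigr)\,\mathbb{P}(\Lambda).
\end{equation*}
This holds for every $N\geq N_0$; letting $N\to\infty$ and using $\lim_{N\to\infty}b(N)=0$ (from $(A2)$(ii)), the right-hand side tends to $\mathbb{P}(\Lambda)$ while the left-hand side is independent of $N$, so $\mathbb{P}\bigl(\{X_m\to A\}\cap\Lambda\bigr)\geq\mathbb{P}(\Lambda)$, which with the trivial reverse inequality gives the desired equality.

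The only substantive ingredient is Theorem \ref{mnres}; everything else is elementary bookkeeping, so I do not expect a genuine obstacle. The two points that do need care are (i) verifying that the first-entrance events $E_n$ are $\mathscr{F}_n$-measurable and contained in $\{X_n\in\mathcal{O}'\}$, so that Theorem \ref{mnres} is legitimately applicable to each slice (and that the degenerate $\mathbb{P}(E_n)=0$ slices contribute nothing), and (ii) the inclusion $\{X_n\to A\}\subseteq\Lambda$, which uses that $\mathcal{O}'$ is a genuine open neighborhood of the compact set $A$ rather than an arbitrary set meeting it.
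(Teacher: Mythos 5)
Your proposal is correct and follows essentially the same route as the paper: the first-entrance decomposition of $\cup_{n\geq N}\{X_n\in\mathcal{O}'\}$, application of Theorem \ref{mnres} to each slice $E_n\in\mathscr{F}_{n}$, monotonicity of $b(\cdot)$, and letting $N\to\infty$ to conclude $\mathbb{P}(\{X_n\to A\}\cap\Lambda)=\mathbb{P}(\Lambda)$. Your explicit checks (measurability of the slices, the zero-probability slices, and the inclusion $\{X_n\to A\}\subseteq\Lambda$) only make precise points the paper uses implicitly.
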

\begin{remark}
In comparison with Theorem \ref{aptls}$(a)$, the condition that $\omega\in\cap_{N\geq0}\cup_{n\geq N}\{X_n\in\mathcal{O}'\}$ is stronger than the requirement that $\omega\in \{\lambda(\omega)\cap B(A)\neq\emptyset\}$ because the former requires the iterate sequence to enter an open neighborhood of $A$ with compact closure infinitely often while the latter requires the iterates to enter the basin of attraction of $A$ infinitely often which is larger than $\mathcal{O}'$. But in the presence of stability we have that, 
\begin{equation*}
\{\sup_{n\geq0}\|X_n\|<\infty\}\cap\{\lambda(\cdot)\cap B(A)\neq\emptyset\}\subseteq\cap_{N\geq0}\cup_{n\geq N}\{X_n\in\mathcal{O}'\}.
\end{equation*} 
Further, as a consequence of Corollary \ref{alt_conv}, we have that,
\begin{equation*}
 \mathbb{P}(\{\sup_{n\geq0}\|X_n\|<\infty\}\cap\{\lambda(\cdot)\cap B(A)\neq\emptyset\})=\mathbb{P}(\cap_{N\geq0}\cup_{n\geq N}\{X_n\in\mathcal{O}'\}),
 \end{equation*}
 or in other words, the sample paths which visit $\mathcal{O}'$ infinitely often and are unstable, occur with zero probability.
\end{remark}
\end{itemize} 
\section{Application: Stabilization via resetting}
\label{appl}
In this section we modify recursion $\eqref{rec}$ in such a way that the modified procedure yields sample paths which are stable (that is lie in a compact set almost surely) which in turn allows us to recover the convergence result as in Theorem \ref{aptls}$(b)$ without the need to verify stability, in the presence of a globally attracting set for the mean field. That is, we replace assumption $(A4)$ with the following stronger requirement.

\begin{itemize}
\item [(A4)'] Let $A\subseteq\mathbb{R}^d$ be a globally attracting set for the flow of DI $\eqref{mfld}$.
\end{itemize}

The modification that we propose involves resetting the iterates at regular time intervals if they are found to be lying outside a certain compact set. Let the initial condition $X_0(\omega)=x_0\in\mathbb{R}^d$ for every $\omega\in \Omega$ and $\{r_n\in (0,\infty)\}_{n\geq0}$ be such that, 
\begin{itemize}
\item [(1)] $\|x_0\|<r_0$,
\item [(2)] for every $n\geq0$, $r_n<r_{n+1}$,
\item [(3)] $\lim_{n\to\infty}r_n=\infty$.
\end{itemize}
The modified scheme, henceforth referred to as \it{stabilized stochastic recursive inclusion }\rm (SSRI) is where every sample path is generated as outlined in Algorithm \ref{ssri}.

\begin{algorithm}[H]
\caption{SSRI given $x_0$ and $\{r_k\}_{k\geq0}$}
\label{ssri}
\begin{algorithmic}
\State $n\gets0$\Comment{Initialize iteration count}
\State $k\gets0$\Comment{Initialize reset count}
\State $t_e\gets0$\Comment{Initialize time elapsed since last check}
\State $T_W>0$\Comment{Initialize window length}
\State $n_W\gets1$\Comment{Initialize window count}
\State ${X'}_0(\omega)\gets x_0$\Comment{Initialize initial condition}
\While{$n\geq0$}
\State $X_{n+1}(\omega)-X_n'(\omega)-a(n)M_{n+1}(\omega)\in a(n)F(X_n'(\omega))$\Comment{Obtain $X_{n+1}(\omega)$}
\State $t_e\gets t_e+a(n)$\Comment{Update the time elapsed}
\If{$t_e\geq T_W$}\Comment{Is time elapsed greater than the window length?}
\If{$n_W=1$}\Comment{Have sufficient number of windows elapsed?}
\If{$\|X_{n+1}(\omega)\|> r_k$}\Comment{Is the iterate lying outside a compact set?}
\State $X_{n+1}'(\omega)=x_0$\Comment{Perform reset}
\State $k\gets k+1$\Comment{Increment reset count}
\Else
\State $X_{n+1}'(\omega)\gets X_{n+1}(\omega)$\Comment{Perform no reset}
\EndIf
\State $n_W=2^k$\Comment{Reset window count}
\Else
\State $n_W\gets n_W-1$\Comment{Decrement window count}
\State $X_{n+1}'(\omega)=X_{n+1}(\omega)$\Comment{Perform no reset}
\EndIf
\State $t_e\gets0$\Comment{Reset time elapsed}
\Else
\State $X_{n+1}'(\omega)\gets X_{n+1}(\omega)$\Comment{Perform no reset}
\EndIf 
\State $n\gets n+1$\Comment{Increment iteration count}
\EndWhile
\end{algorithmic}
\end{algorithm}

A flowchart depicting the flow of control in Algorithm \ref{ssri} is presented in Figure \ref{flwchrt}. In order to understand the algorithm let us consider the scenario where the $k^{th}$ reset has been performed at iteration index $n_0$. Then the algorithm checks whether the iterate lies in the compact set $r_kU$ ( closed ball of radius $r_k$ centered at the origin) after approximately $2^kT_W$ amount of time has elapsed (for the relation between time and iteration index see section \ref{mtvtn}). Now either a reset occurs or the iterate is left unchanged. 
\begin{itemize}
\item [(a)] If the iterate is left unchanged then the next reset check is performed after $2^kT_W$ amount of time has elapsed.
\item [(b)] If the iterate is reset, then, the next check is performed after $2^{k+1}T_W$ amount of time has elapsed.
\end{itemize}
In fact it would suffice if the time between successive reset checks were set to be greater than a certain threshold which is determined by the minimum time needed by the flow of the mean field (that is DI $\eqref{mfld}$) to reach the attracting set $A$ from any initial condition in a compact neighborhood of it. But in practical scenarios one may not be able to compute such a time and hence may not be able to determine the required threshold. This approach of increasing time duration between successive reset checks with increasing reset count allows us to bypass this problem. The choice of exponentially increasing durations is one of convenience as it simplifies notations involved in proving certain results later.

\begin{figure}[h]
\centering
\resizebox{16cm}{17cm}{
\begin{tikzpicture}[node distance=2.5cm]
\node (start) [startstop] {Start};
\node (initialize) [io,below of=start,align=left] {$n\gets0$ (iteration count)\\$k\gets0$ (reset count)\\$t_e\gets0$ (time since last reset check)\\ $T_W>0$ (window length)\\ $n_w\gets1$ (window count)\\ $X_0(\omega)\gets x_0$ (initial condition)};
\node (while) [decision,below of=initialize,align=center,yshift=-1.4cm, inner sep=-4pt]{While the\\ iteration count is\\ non-negative, i.e., \\ Is $n\geq0$?};
\node (iteration) [process, below of=while,yshift=-1.2cm,align=center]{Obtain $X_{n+1}(\omega)$ such that,\\$X_{n+1}(\omega)-X_n'(\omega)-a(n)M_{n+1}(\omega)\in a(n)F(X_n'(\omega))$\\ and\\ update time elapsed since the last check, i.e.,\\ $t_e\gets t_e+a(n)$.};
\node (window_length) [decision, below of=iteration,align =center,inner sep=-8pt,yshift=-1.7cm]{Is the\\time elapsed since\\the last check greater\\ than the window\\length?, i.e.,\\ Is $t_e\geq T_W$?};
\node (window_count) [decision,right of=window_length,align=center,inner sep=-4.55pt,xshift=2.3cm]{Have sufficient\\number of windows\\elapsed?, i.e.,\\ Is $n_W=1$?};
\node (norm_check) [decision, right of = window_count,align=center,inner sep=-3.2pt,xshift=2.5cm]{Is the iterate\\lying outside the ball\\of radius $r_k$?, i.e.,\\Is $\|X_{n+1}(\omega)\|>r_k$?};
\node (n_reset_1) [process, below of =window_count, align=center,yshift=-1.5cm]{Perform no reset, i.e.,\\$X_{n+1}'(\omega)\gets X_{n+1}(\omega)$\\and decrement\\window count, i.e.,\\$n_W\gets n_W-1$};
\node (n_reset_2) [process, below of=norm_check,align=center,yshift=-1.5cm]{Perform no reset,i.e,\\$X_{n+1}'(\omega)\gets X_{n+1}(\omega)$};
\node (reset) [process, right of=n_reset_2,align=center,xshift=1.5cm]{Perform reset, i.e.,\\$X_{n+1}'(\omega)\gets x_0$\\and increment\\reset count, i.e.,\\$k\gets k+1$};
\node (reset_window_count) [process, below of=n_reset_2, align=center,yshift=-1.5cm]{Reset window\\count, i.e.,\\$n_W\gets2^k$};
\node (reset_time_elapsed) [process, below of=n_reset_1,align=center,yshift=-1.5cm]{Reset time\\elapsed, i.e.,\\$t_e\gets0$};
\node (n_reset_3) [process, below of=window_length,align=center,yshift=-1.5cm]{Perform no reset,i.e,\\$X_{n+1}'(\omega)\gets X_{n+1}(\omega)$};
\node (iteration_count_inc) [process, below of=n_reset_3,align=center,yshift=-1.5cm]{Increment iteration\\count, i.e.,\\$n\gets n+1$};
\node (stop) [startstop,right of =while,xshift=2cm]{Stop};
\draw [arrow] (start)--(initialize);
\draw [arrow] (initialize)--(while);
\draw [arrow] (while)--node[anchor=east]{yes}(iteration);
\draw [arrow] (iteration)--(window_length);
\draw [arrow] (window_length)--node[anchor=east]{no}(n_reset_3);
\draw [arrow] (n_reset_3)--(iteration_count_inc);
\draw [arrow] (window_count)--node[anchor=east]{no}(n_reset_1);
\draw [arrow] (norm_check)--node[anchor=east]{no}(n_reset_2);
\draw [arrow] (n_reset_1)--(reset_time_elapsed);
\draw [arrow] (n_reset_2)--(reset_window_count);
\draw [arrow] (while)--node[anchor=south]{no}(stop);
\draw [arrow] (window_length)--node[anchor=south]{yes}(window_count);
\draw [arrow] (window_count)--node[anchor=south]{yes}(norm_check);
\draw [arrow] (norm_check)-|node[anchor=south]{yes}(reset);
\draw [arrow] (reset)|-(reset_window_count);
\draw [arrow] (reset_window_count)--(reset_time_elapsed);
\draw [arrow] (reset_time_elapsed)--(iteration_count_inc);
\coordinate (left while) at ($(while.west)+(-2.75cm,0)$);
\coordinate (left iter_count_inc) at ($(iteration_count_inc.west) + (-3.0cm,0)$);
\draw [arrow] (iteration_count_inc.west)--(left iter_count_inc)--(left while)--(while.west);
\end{tikzpicture}
}
\caption{Flowchart depicting the flow of control in Algorithm \ref{ssri}}\label{flwchrt}
\end{figure}
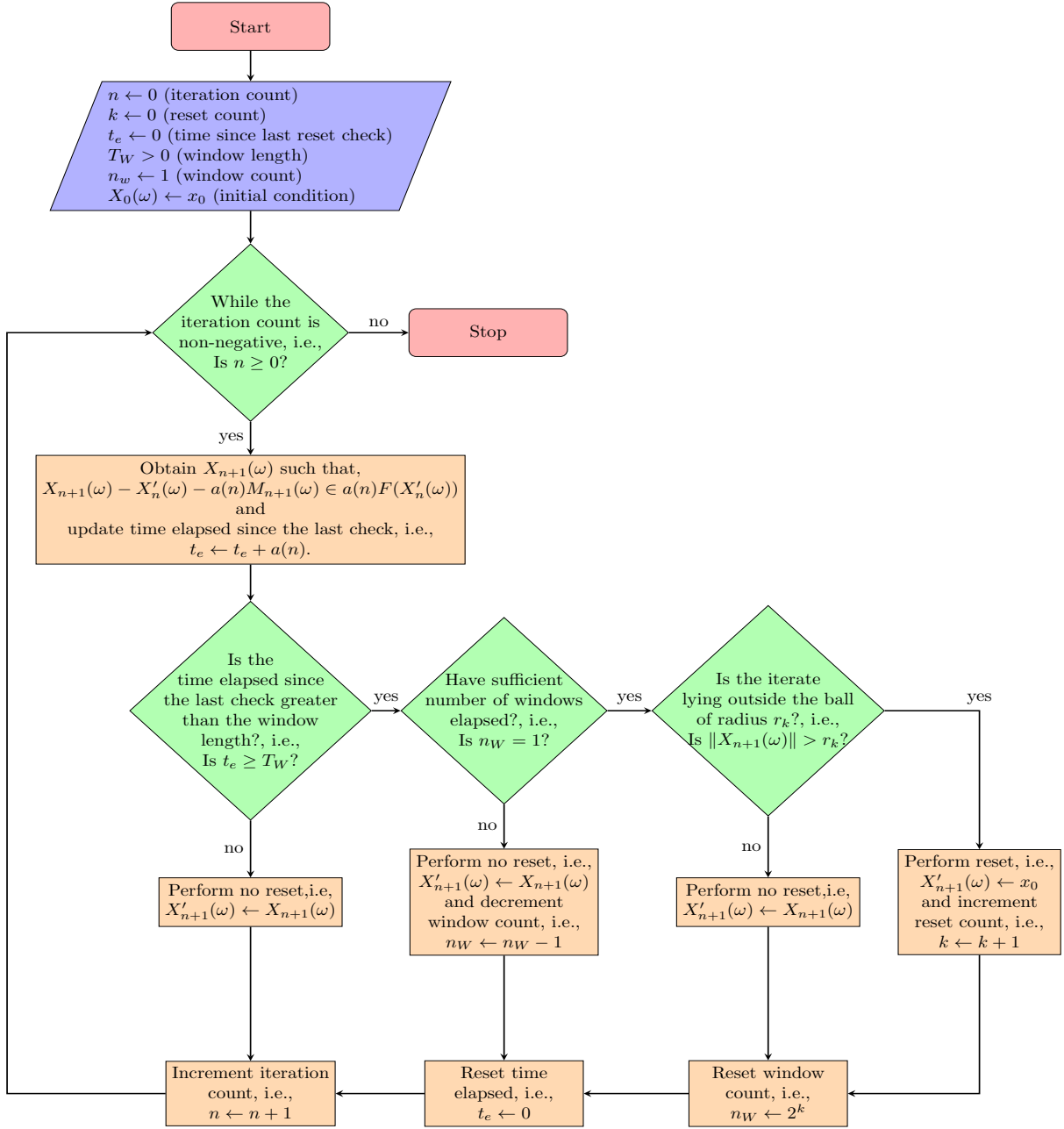

For every $n\geq1$, define the indicator random variable $\chi_n:\Omega\rightarrow\{0,1\}$ such that, for every $\omega\in\Omega$, 
\begin{equation}
\label{indi}
\chi_n(\omega)=\begin{cases} 
                            0 &\mathrm{if }\ X_n(\omega)=X_n'(\omega), \\
                            1 &\mathrm{if }\ X_n(\omega)\neq X_n'(\omega). 
                          \end{cases}
\end{equation}
We assume that the noise terms $\{M_n\}_{n\geq1}$ satisfy the following version of assumption $(A3)$. 
\begin{itemize}
\item [(A3)'] $\{M_n\}$ is a martingale difference sequence with respect to the filtration $\{\mathscr{F}_n\}_{n\geq1}$, where, for every $n\geq1$, $\mathscr{F}_n$ denotes the smallest $\sigma$-algebra generated by the iterates $X_m$ (that is the iterates before the reset operation) and noise terms $M_m$, for $0\leq m\leq n$ (then it is easy to show that for every $n\geq1$, $X_n'$  and hence $\chi_n$ are $\mathscr{F}_n$ measurable). Since for every $n\geq1$, $M_n$ denotes the noise arising in the estimation (or measurement) of $F$ at $X_{n-1}'$, we assume that the energy of the noise depends on $X_{n-1}'$. That is for every $n\geq0$, $\|M_{n+1}\|\leq K(1+\|X_n'\|)\ a.s.$. 
\end{itemize}

The next theorem says that, for almost every sample path generated by Algorithm \ref{ssri}, the total number of resets is finite, thereby guaranteeing stability. The proof of this theorem (provided in section \ref{prf_fnrst}) crucially hinges on a lower bound for the probability of the event that there are no future resets given that there are a certain number of resets up until iteration $n_0$ for some large $n_0$. Specifically it requires the probability of the above mentioned event to converge to one as $n_0$ tends to infinity and this is guaranteed by Theorem \ref{mnres}.
\begin{theorem}
\label{fnrst}
\emph{(Finite resets)}
Under assumptions $(A1), (A2), (A3)'$ and $(A4)'$, $\mathbb{P}\left(\left\{\omega\in\Omega: \sum_{n=1}^{\infty}\chi_n(\omega)<\infty\right\}\right)\\=1$.
\end{theorem}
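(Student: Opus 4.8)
The plan is to show $\mathbb{P}(\mathscr{N}=\infty)=0$, where $\mathscr{N}:=\sum_{n\geq1}\chi_n$ is the total number of resets, by extracting this from a conditional form of Theorem \ref{mnres}. First I would fix the geometry. Since $A$ is globally attracting (assumption $(A4)'$), choose a bounded open neighbourhood $\mathcal{O}'$ of $A$ with $\bar{\mathcal{O}'}$ compact, contained in a fundamental neighbourhood of $A$ (so that Theorem \ref{mnres} is available for this $\mathcal{O}'$, with constants $\tilde K,N_0$), then a smaller bounded open $\mathcal{O}''$ with $A\subseteq\mathcal{O}''\subseteq\bar{\mathcal{O}''}\subseteq\mathcal{O}'$. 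Using that $x_0\in B(A)=\mathbb{R}^d$, that solutions of DI $\eqref{mfld}$ from $x_0$ are uniformly bounded (linear growth $(A1)$(ii) together with a Gronwall estimate), and that a decreasing family of compact sets whose intersection lies in an open set is eventually contained in that open set, one obtains a \emph{fixed} time $T^*>0$ with $\Phi(t,x_0)\subseteq\mathcal{O}''$ for all $t\geq T^*$. Let $\hat{\mathcal{O}}$ denote the bounded region built in the proof of Theorem \ref{mnres} (from $\bar{\mathcal{O}'}$, the fundamental neighbourhood, and the mean-field-dependent inter-visit time) inside which the ``good'' sample paths of the lock-in argument are confined once they enter $\mathcal{O}'$. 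Finally fix $k^*$ so large that, for all $k\geq k^*$: $\hat{\mathcal{O}}\subseteq\{x:\|x\|<r_k\}$ (possible since $r_k\uparrow\infty$), $2^kT_W>T^*+1$, and $h(k)\geq N_0$, where $h$ is introduced below.

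Let $\sigma_k$ be the iteration index at which the $k$-th reset occurs, with $\sigma_k:=\infty$ off $R_k:=\{\mathscr{N}\geq k\}$; then $\sigma_k$ is a stopping time for $\{\mathscr{F}_n\}$ and $R_k=\{\sigma_k<\infty\}\in\mathscr{F}_{\sigma_k}$. The crucial step is the conditional estimate: there is a nonincreasing $q(\cdot)$ with $q(n)\to0$ such that, on $R_k\cap\{k\geq k^*\}$,
\[
\mathbb{P}\big(\text{a }(k{+}1)\text{-th reset occurs}\ \big|\ \mathscr{F}_{\sigma_k}\big)\leq q(\sigma_k).
\]
To prove this I would re-run the sample-path-collection argument of Section \ref{prf_lip}, now anchored at the stopping time $\sigma_k$, where $X'_{\sigma_k}=x_0$. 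Because the window counter $n_W$ is reset to $2^k$ after the $k$-th reset, the first reset \emph{check} after $\sigma_k$ occurs only after an elapsed time of about $2^kT_W>T^*+1$; over the window $[t(\sigma_k),t(\sigma_k)+T^*+1]$, of \emph{fixed} length, the interpolated iterates track the flow of DI $\eqref{mfld}$ from $x_0$ with an error tending to $0$ as $\sigma_k\to\infty$, so, off an event of conditional probability $\varepsilon^{(1)}(\sigma_k)\to0$, the iterate enters $\mathcal{O}'$ at some index $m_k$ with $t(m_k)-t(\sigma_k)\leq T^*+1$, i.e.\ strictly before the first reset check $c_1$. One then applies Theorem \ref{mnres} with $n_0=m_k$ to the reset-free continuation $Y$ started from $x_0$ at time $\sigma_k$ (which, by $(A3)'$ and a careful bookkeeping of the noise, genuinely satisfies $(A1)$–$(A3)$, the iterates of $Y$ agreeing with those of Algorithm \ref{ssri} up to the next reset): off a further event of conditional probability $2de^{-\tilde K/b(m_k)}$, $Y$ converges to $A$ and remains in $\hat{\mathcal{O}}$ for all times $\geq m_k$. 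On the intersection of these two good events, $Y$ stays inside $r_jU$ for every $j\geq k$ from time $m_k$ on and stays bounded on the short window $[\sigma_k,m_k]$; since resets occur only at check times and $Y$ coincides with the SSRI iterates up to the first reset, an induction over successive check times shows that no reset ever occurs after the $k$-th. Setting $q(n):=\sup_{m\geq n}\big(\varepsilon^{(1)}(m)+2de^{-\tilde K/b(m)}\big)$ gives the displayed bound (finite and $\to0$ because $b(m)\to0$).

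To conclude, note that $a(n)\to0$ (from $(A2)$(ii)), so $a(n)\leq C$ for some $C$, whence on $R_k$ the inequality $t(\sigma_k)\geq\sum_{j=0}^{k-1}2^jT_W=(2^k-1)T_W$ forces $\sigma_k\geq h(k):=(2^k-1)T_W/C\uparrow\infty$. Hence on $R_k\cap\{k\geq k^*\}$ we have $q(\sigma_k)\leq\bar q(k):=q(\lfloor h(k)\rfloor)$, a deterministic bound with $\bar q(k)\to0$. Since $R_{k+1}\subseteq R_k$ and $\mathbf{1}_{R_k}$ is $\mathscr{F}_{\sigma_k}$-measurable,
\[
\mathbb{P}(R_{k+1})=\mathbb{E}\big[\mathbf{1}_{R_k}\,\mathbb{P}(\text{a }(k{+}1)\text{-th reset occurs}\mid\mathscr{F}_{\sigma_k})\big]\leq\bar q(k)\,\mathbb{P}(R_k)\leq\bar q(k)\qquad(k\geq k^*),
\]
so $\mathbb{P}(\mathscr{N}=\infty)=\mathbb{P}\big(\cap_{j}R_j\big)\leq\inf_{k\geq k^*}\bar q(k)=0$, i.e.\ $\mathbb{P}(\mathscr{N}<\infty)=1$, as claimed.

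I expect the main obstacle to be the conditional estimate of the second paragraph: transporting, essentially verbatim, the delicate event-construction and noise-decoupling of the proof of Theorem \ref{mnres} to the setting where the ``starting index'' $\sigma_k$ is a random stopping time and the underlying recursion carries resets. Two points need care — exploiting the strong-Markov-type structure at $\sigma_k$ together with $(A3)'$ so that the reset-free continuation from $x_0$ genuinely satisfies $(A1)$–$(A3)$ and Theorem \ref{mnres} truly applies to it, and confining the flow-tracking step to the window of \emph{fixed} length $T^*+1$ (rather than the growing length $2^kT_W$), so that the tracking error still vanishes as $k\to\infty$. The geometric set-up of paragraph one and the Borel–Cantelli-type summation of paragraph three are routine by comparison; the details of the adapted sample-path collection are deferred to Section \ref{prf_fnrst}.
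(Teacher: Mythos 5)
Your overall architecture is the same as the paper's: condition on the occurrence of the $k$-th reset, use the lock-in machinery of Section \ref{prf_lip} to show that the conditional probability of any further reset is at least $1-2de^{-\tilde K/b(\cdot)}$ evaluated at an index that necessarily diverges with $k$, and let $k\to\infty$. (The paper packages the conditioning by decomposing $\{\sum_n\chi_n=k\}$ over the deterministic index $n_0$ of the $k$-th reset, with $E(k,n_0)\in\mathscr{F}_{n_0}$ and the crude bound $n_0\geq k$; your stopping-time/conditional-expectation formulation at $\sigma_k$ and your estimate $t(\sigma_k)\geq(2^k-1)T_W$ are an equivalent repackaging.) The substantive difference is in the key step. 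You keep $\mathcal{O}'$ a small neighbourhood of $A$, and therefore need an extra transient phase: a fixed-horizon tracking estimate anchored at $\sigma_k$ to bring the iterate from $x_0$ into $\mathcal{O}'$ before the first check, followed by an application of the lock-in argument at the random entry index $m_k$. The paper avoids this phase entirely: since $A$ is globally attracting, every sufficiently large ball is a fundamental neighbourhood (\cite[Lemma~3.13]{benaim1}), so it sets $\mathcal{O}:=r_{k_1+1}\mathring{U}$ and $\mathcal{O}':=r_{k_1}\mathring{U}$; then $X'_{n_0}=x_0\in\mathcal{O}'$ immediately after every reset, the subsequence $\{n_m\}$ is built directly from the check schedule (blocks of $2^{k_2}$ windows of length $T_W$, each block of length between $T_A$ and $T_u=2^{k_2}T_W+2^{k_2}$), and ``iterate in $\mathcal{O}'$ at a check index'' literally means ``no reset'', giving the induction over successive checks with no confinement region $\hat{\mathcal{O}}$ and no condition $2^kT_W>T^*+1$; the exponential windows are only needed to guarantee $2^kT_W\geq T_A$ eventually. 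Your route is workable but buys nothing and costs an additional probabilistic estimate at a stopping time plus a disintegration over the values of $m_k$.

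Two points in your sketch are genuine gaps as written and must be repaired. First, Theorem \ref{mnres} as stated only bounds the probability of convergence to $A$; it says nothing about confinement, so ``$Y$ remains in $\hat{\mathcal{O}}$'' cannot be obtained by invoking the theorem at $n_0=m_k$. You must re-run the proof-level event construction, i.e.\ work with the events $\cap_m\{\rho_1^{(l_0)}+\rho_2^{(l_0)}<\epsilon_0\}$, whose sample paths are uniformly bounded by Lemma \ref{events}$(b)$, and adapt it to the reset recursion — this is exactly what the paper's Steps 3--8 do (trajectories $\tilde X(\cdot,\cdot;n_0)$ restarted at $X'_{n_0}$, parametrization of $F$ evaluated at $X'_n$, and the noise bookkeeping under $(A3)'$ via $\|M_{n+1}\|\leq K(1+\|X'_n\|)$ together with $\|X'_{j+1}\|\leq\|X_{j+1}\|$); moreover, since your $m_k$ is random you need a further decomposition over its values before the event-conditional bounds of Lemma \ref{bd1} type can be applied. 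Second, the existence of a single time $T^*$ with $\Phi(t,x_0)\subseteq\mathcal{O}''$ for all $t\geq T^*$ requires uniformity over the whole (generally uncountable) set of solutions from $x_0$ on $[0,\infty)$; a Gronwall bound only controls finite horizons, so you need the compactness of solution sets (Lemma \ref{sola}$(d)$) and the attractor results of \cite{benaim1}, i.e.\ essentially the same input the paper uses to declare large balls fundamental neighbourhoods. Both repairs are feasible, but they are precisely the work that the paper's choice $\mathcal{O}'=r_{k_1}\mathring{U}$ eliminates.
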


As a consequence of the above theorem, we have the following.
\begin{itemize}   
\item [(a)] Let $\omega\in\{\omega\in\Omega: \sum_{n=1}^{\infty}\chi_n(\omega)<\infty\}$. Then there exists an $N\geq1$ and $R>0$ (depending on $\omega$) such that, for every $n\geq N$, $X_n(\omega)=X_n'(\omega)$ and $\sup_{n\geq N}\|X_n(\omega)\|\leq R$. Therefore $\sum_{n\geq N}\mathbb{E}[(a(n))^2\|M_{n+1}\|^2|\mathscr{F}_n](\omega)\leq \sum_{n\geq N}(a(n))^2K^2\left(1+\|X_n'(\omega)\|\right)^2\leq K^2(1+R)^2\sum_{n\geq N}(a(n))^2\\<\infty$, where the last inequality follows from assumption $(A2)$(ii). Therefore,
\begin{equation*}
\{\omega\in\Omega:\sum_{n\geq1}\chi_n(\omega)<\infty\}\subseteq\{\omega\in\Omega:\sum_{n=0}^{\infty}\mathbb{E}[(a(n))^2\|M_{n+1}\|^2|\mathscr{F}_n](\omega)<\infty\}.
\end{equation*}
Therefore by Theorem \ref{fnrst}, we have that the $\mathbb{P}(\sum_{n=0}^{\infty}\mathbb{E}[(a(n))^2\|M_{n+1}\|^2|\mathscr{F}_n]<\infty)=1$ and by martingale convergence theorem (see \cite[Section~11.3, Thm.~11]{borkartxt}) we have that, the square integrable martingale $\{\sum_{m=0}^{n-1}a(m)M_{m+1},\mathscr{F}_n\}_{n\geq1}$ converges almost surely. 
\item [(b)] Thus for $\omega$ lying in a probability one set, there exists $N\geq1$ and $R>0$ (depending on $\omega$) such that along this sample path the iterates $\{X_n(\omega)\}_{n\geq N}$, can be viewed as being generated by recursion $\eqref{rec}$ with initial condition $X_N(\omega)$, their norms are bounded by $R$ uniformly and the additive noise terms $\{M_n(\omega)\}_{n\geq N}$ satisfy the hypothesis of \cite[Prop.~1.3]{benaim1}. Then by arguments similar to those of Theorem \ref{aptls}$(b)$ we have that,
\begin{corollary}
Under assumptions $(A1)-(A3)$ and $(A4)'$, for almost every $\omega$, the iterates generated by Algorithm \ref{ssri}, $\{X_n'(\omega)\}_{n\geq0}$, are such that $X_n'(\omega)\to A$ as $n\to\infty$.
\end{corollary}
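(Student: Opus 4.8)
The plan is to deduce the corollary from Theorem \ref{fnrst} (which I may assume) together with the asymptotic-pseudotrajectory machinery behind Theorem \ref{aptls}$(b)$ and \cite{benaim1}. By Theorem \ref{fnrst}, the event that only finitely many resets are performed has probability one, so I would fix $\omega$ in this event: then there is $N=N(\omega)\ge1$ such that no reset occurs at any iteration $\ge N$, hence $X_n(\omega)=X'_n(\omega)$ for all $n\ge N$, and the tail $\{X_n(\omega)\}_{n\ge N}$ is, up to a shift of index, a trajectory of recursion $\eqref{rec}$ with the (random but finite) initial condition $X_N(\omega)$ and noise $\{M_{n+1}\}_{n\ge N}$, which satisfies $\|M_{n+1}\|\le K(1+\|X_n(\omega)\|)$ for $n\ge N$ by $(A3)'$ since there $X'_n(\omega)=X_n(\omega)$; thus this tail meets $(A1)-(A3)$.

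Next I would check that this trajectory is stable. The iterates $X_0(\omega),\dots,X_{N-1}(\omega)$ are finitely many almost-surely-finite values, hence bounded; and for $n\ge N$, once resets cease the reset count is frozen at some final value $k_{\max}(\omega)$, so reset checks recur with a \emph{fixed} inter-check window of about $2^{k_{\max}}T_W$ units of interpolated time, each such check is passed so that $\|X_n(\omega)\|\le r_{k_{\max}}$ at the (infinitely many) check indices, and over each window --- a time interval of fixed finite length --- a Gronwall estimate using the linear growth in $(A1)$(ii) and the almost-sure noise bound of $(A3)'$ keeps $\|X_n(\omega)\|$ below some finite $R(\omega)$. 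Hence $\sup_{n\ge0}\|X_n(\omega)\|<\infty$, i.e.\ $\omega$ belongs to the stability set $\{\sup_{n\ge0}\|X_n\|<\infty\}$.

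Then I would feed this bounded tail into the convergence analysis. Boundedness together with $(A2)$(ii) gives $\sum_{n\ge N}(a(n))^2\,\mathbb{E}\!\left[\|M_{n+1}\|^2\mid\mathscr{F}_n\right](\omega)\le K^2(1+R(\omega))^2\sum_{n\ge N}(a(n))^2<\infty$, so by the martingale convergence theorem (\cite[Section~11.3, Thm.~11]{borkartxt}) the square-integrable martingale $\{\sum_{m=0}^{n-1}a(m)M_{m+1},\mathscr{F}_n\}$ converges almost surely and the additive-noise contribution over any bounded time window vanishes in the limit --- the noise hypothesis of \cite[Prop.~1.3]{benaim1}. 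Consequently, for almost every $\omega$ the shifted interpolated trajectory $\bar X(\omega,\cdot+t(N))$ is an asymptotic pseudotrajectory for the flow of DI $\eqref{mfld}$; by \cite[Thm.~4.3]{benaim1} its limit set --- which coincides with $\lambda(\omega)$ --- is a non-empty, compact, internally chain transitive set for that flow, and exactly as in the proof of Theorem \ref{aptls}$(b)$, since $(A4)'$ makes $A$ globally attracting, every such set is contained in $A$ (\cite[Thm.~3.23]{benaim1}). Thus $X_n(\omega)\to A$, and since $X'_n(\omega)=X_n(\omega)$ for $n\ge N(\omega)$, also $X'_n(\omega)\to A$; as this holds for almost every $\omega$, the corollary follows.

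The only substantive ingredient will be Theorem \ref{fnrst} itself --- which rests on the lock-in bound of Theorem \ref{mnres} --- supplying the ``almost surely finitely many resets'' that reduces SSRI to a stable run of $\eqref{rec}$. Modulo that, the single point requiring care will be the boundedness step: one must use that after the last reset the reset count is frozen, so the windows between consecutive passed checks have a fixed finite length, which legitimizes finite-horizon Gronwall bounds on the a priori possibly large intermediate iterates; after that, the conclusion is a direct invocation of the asymptotic-pseudotrajectory theory of \cite{benaim1} as already used for Theorem \ref{aptls}$(b)$.
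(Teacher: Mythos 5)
Your proposal is correct and follows essentially the same route as the paper: Theorem \ref{fnrst} reduces SSRI to a tail of recursion $\eqref{rec}$ that is bounded, whence the conditional quadratic variation of the noise is summable, the martingale converges (satisfying the hypothesis of \cite[Prop.~1.3]{benaim1}), and the asymptotic-pseudotrajectory argument of Theorem \ref{aptls}$(b)$ gives $X_n(\omega)\to A$ and hence $X_n'(\omega)\to A$. The only difference is that you spell out the boundedness step (frozen reset count, fixed inter-check window, passed checks giving $\|X_n\|\le r_{k_{\max}}$ at check indices, Gronwall in between), which the paper simply asserts when it states the existence of $N$ and $R$.
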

\end{itemize}

\section{Proof of the lock-in probability theorem (Thm.~\ref{mnres})}
\label{prf_lip}
Proof of the lock-in probability result follows as a consequence of a series of lemmas. The overall structure can be summarized as follows.
\begin{itemize}
\item [(a)] Our first aim is to replace the set-valued map in recursion $\eqref{rec}$ with an equivalent single-valued locally Lipschitz continuous function with an additional parameter. In order to accomplish this, we first embed the graph of the set-valued map $F$ in the graph of a sequence of locally Lipschitz continuous set-valued maps. These maps are then parametrized using the Stiener selection procedure which preserves the modulus of continuity.
\item [(b)] The relation between the solutions of DI $\eqref{mfld}$ and that of differential inclusions with continuous set-valued maps which approximate $F$ (as in (a) above) is established. \item [(c)] An ordinary differential equation(o.d.e) is defined using an appropriate single-valued parametrization of $F$ (as in (a) above). The existence of solutions to such an o.d.e. and further its uniqueness follow from Caratheodary's existence theorem and locally Lipschitz nature of the vector field respectively. The solutions of this o.d.e. aide in separating the probability contributions due to the additive noise terms and the set-valued nature of the drift function. Using the results from part (b) above, we conclude that after a large number of iterations, the probability contribution is only due to the additive noise terms.
\item [(d)] We finally review the standard probability lower bounding procedure for the additive noise terms from \cite[Ch.~4]{borkartxt}. Using this bound in the result obtained in part (c) 
above gives us the desired lock-in probability bound.
\end{itemize}
Throughout, we use $U$ to denote the closed unit ball in $\mathbb{R}^d$ centered at the origin. Further, for every $Y_1,Y_2\subseteq \mathbb{R}^d$ and $r\in\mathbb{R}$, define,
\begin{itemize}
\item $Y_1+Y_2:=\{y_1+y_2:y_1\in Y_1\ \mathrm{and}\ y_2\in Y_2\}$,
\item $rY_1:=\{ry_1:y_1\in Y_1\}$.
\end{itemize}
%--------------------------------------------------------------------------------------------------------------------------------------------------------------------------------------------------------------------------------------

\subsection{Upper semicontinuous set-valued maps and their approximation}
\label{usc_app}
First we recall definitions of continuous set-valued maps and locally Lipschitz continuous set-valued maps. These notions are taken from \cite[Ch.~1]{aubindi}.
\begin{definition}
A set-valued map $F:\mathbb{R}^d\rightarrow\{\text{compact subsets of }\mathbb{R}^d\}$ is,
\begin{itemize}
\item \it{upper semicontinuous }\rm(u.s.c.) if, for every $x\in\mathbb{R}^d$, for every $\epsilon>0$, there exists a $\delta>0$ (depending on $x$ and $\epsilon$) such that, for every $x'\in\mathbb{R}^d$ satisfying $\|x'-x\|<\delta$, we have that $F(x')\subseteq F(x)+\epsilon U$, where $F(x)+\epsilon U:=\{y+\epsilon u: y\in F(x),\ u\in U\}$.
\item \it{lower semicontinuous }\rm(l.s.c.) if, for every $x\in\mathbb{R}^d$, for every $\mathbb{R}^d$-valued sequence $\{x_n\}_{n\geq1}$ converging to $x$, for every $y\in F(x)$, there exists a sequence $\{y_n\in F(x_n)\}_{n\geq1}$ converging to $y$.
\item \it{continuous }\rm if, it is both u.s.c. and l.s.c.
\item \it{locally Lipschitz continuous }\rm if, for every $x_0\in\mathbb{R}^d$, there exists $\delta>0$ and $L>0$ (depending on $x_0$) such that for every $x,x'\in x_0+\delta U$, we have that 
$F(x)\subseteq F(x')+L\|x-x'\|U$.
\end{itemize} 
\end{definition} 

Let $\mathcal{K}(\mathbb{R}^d)$ denote the family of all non-empty compact subsets of $\mathbb{R}^d$. Let $\bm{\mathrm{H}}:\mathcal{K}(\mathbb{R}^d)\times\mathcal{K}(\mathbb{R}^d)\rightarrow[0,\infty)$ be defined such that, for every $S_1,\ S_2\in \mathcal{K}(\mathbb{R}^d)$, 
\begin{equation}
\bm{\mathrm{H}}(S_1,S_2):=\max\left\{\sup_{s_1\in S_1}\inf_{s_2\in S_2}\left\|s_1-s_2\right\|,\ \sup_{s_2\in S_2}\inf_{s_1\in S_1}\left\|s_1-s_2\right\|\right\}.
\end{equation}

With $\bm{\mathrm{H}}$ as defined above, $(\mathcal{K}(\mathbb{R}^d),\bm{\mathrm{H}})$ is a complete metric space (for a proof see \cite[Thm.~1.1.2]{shoumei}). The notions of continuity and local Lipschitz continuity of a set-valued map can be restated using the metric defined above and is stated as a lemma below for easy reference (for a proof see \cite[Ch.~1, section~5, Cor.~1]{aubindi}).
\begin{lemma}
A set-valued map $F:\mathbb{R}^d\rightarrow\mathcal{K}(\mathbb{R}^d)$ is
\begin{itemize}
\item [(a)] Continuous, if and only if, for every $x_0\in\mathbb{R}^d$, for every $\epsilon>0$, there exists $\delta>0$ (depending on $x_0$ and $\epsilon$), such that for every $x\in x_0+\delta U$, $\bm{\mathrm{H}}(F(x),F(x_0))<\epsilon$.
\item [(b)] locally Lipschitz continuous, if and only if, for every $x_0\in\mathbb{R}^d$, there exists $\delta>0$ and $L>0$ (depending on $x_0$), such that for every $x,\ x'\in x_0+\delta U$, 
$\bm{\mathrm{H}}(F(x),F(x'))\leq L\|x-x'\|$.
\end{itemize} 
\end{lemma}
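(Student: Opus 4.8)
The plan is to reduce both parts to one elementary observation about the \emph{excess} of one compact set over another and then translate the Hausdorff metric $\bm{\mathrm{H}}$ accordingly. First I would record the auxiliary fact: for non-empty compact $S_1,S_2\subseteq\mathbb{R}^d$ and $r\geq0$, the inclusion $S_1\subseteq S_2+rU$ holds if and only if $\sup_{s_1\in S_1}\inf_{s_2\in S_2}\|s_1-s_2\|\leq r$. The point is that, since $S_2$ is compact and $s_2\mapsto\|s_1-s_2\|$ is continuous, the inner infimum is attained, so ``$s_1\in S_2+rU$'' is literally ``$\inf_{s_2\in S_2}\|s_1-s_2\|\leq r$''. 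Writing $e(S_1,S_2):=\sup_{s_1\in S_1}\inf_{s_2\in S_2}\|s_1-s_2\|$, the definition of $\bm{\mathrm{H}}$ gives $\bm{\mathrm{H}}(S_1,S_2)=\max\{e(S_1,S_2),e(S_2,S_1)\}$, hence $\bm{\mathrm{H}}(S_1,S_2)\leq r$ if and only if both $S_1\subseteq S_2+rU$ and $S_2\subseteq S_1+rU$. (Replacing ``$<\epsilon$'' by ``$\leq\epsilon/2$'' where needed makes the strict/non-strict distinction immaterial below.)

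Turning to part (a): continuity means u.s.c. together with l.s.c. The u.s.c. condition at $x_0$ --- for every $\epsilon>0$ there is $\delta>0$ with $\|x-x_0\|<\delta\Rightarrow F(x)\subseteq F(x_0)+\epsilon U$ --- is, by the auxiliary fact, exactly that $e(F(x),F(x_0))\leq\epsilon$ whenever $\|x-x_0\|<\delta$. It then remains to show that the sequential l.s.c. property of $F$ at $x_0$ is equivalent to: for every $\epsilon>0$ there is $\delta>0$ with $\|x-x_0\|<\delta\Rightarrow F(x_0)\subseteq F(x)+\epsilon U$, i.e. $e(F(x_0),F(x))\leq\epsilon$. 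For ``$\Leftarrow$'', given $x_n\to x_0$ and $y\in F(x_0)$, pick $y_n\in F(x_n)$ attaining $\operatorname{dist}(y,F(x_n))$; since $\operatorname{dist}(y,F(x_n))\leq e(F(x_0),F(x_n))$, which the hypothesis forces to be $\leq\epsilon$ for $n$ large, we get $y_n\to y$. For ``$\Rightarrow$'', if the metric condition fails there are $\epsilon_0>0$, $x_n\to x_0$ and $y_n\in F(x_0)$ with $\operatorname{dist}(y_n,F(x_n))>\epsilon_0$; compactness of $F(x_0)$ gives a subsequence $y_{n_k}\to y\in F(x_0)$, l.s.c. along $x_{n_k}\to x_0$ produces $z_k\in F(x_{n_k})$ with $z_k\to y$, and then $\operatorname{dist}(y_{n_k},F(x_{n_k}))\leq\|y_{n_k}-y\|+\|y-z_k\|\to0$, contradicting $>\epsilon_0$. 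Combining the two halves, $F$ is continuous at $x_0$ iff for every $\epsilon>0$ there is $\delta>0$ making $\max\{e(F(x),F(x_0)),e(F(x_0),F(x))\}=\bm{\mathrm{H}}(F(x),F(x_0))<\epsilon$ on $x_0+\delta U$, which is (a).

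For part (b), fix $x_0$ and the associated $\delta,L$. If $F$ is locally Lipschitz in the stated sense, then for all $x,x'\in x_0+\delta U$ we get $F(x)\subseteq F(x')+L\|x-x'\|U$ and, swapping the roles of $x$ and $x'$, also $F(x')\subseteq F(x)+L\|x-x'\|U$; by the auxiliary fact these two inclusions are precisely $\bm{\mathrm{H}}(F(x),F(x'))\leq L\|x-x'\|$. Conversely, if $\bm{\mathrm{H}}(F(x),F(x'))\leq L\|x-x'\|$ on $x_0+\delta U$ then $e(F(x),F(x'))\leq L\|x-x'\|$, which by the auxiliary fact is the defining inclusion $F(x)\subseteq F(x')+L\|x-x'\|U$. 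This proves (b).

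\emph{Main obstacle.} Everything except the l.s.c. equivalence in part (a) is a mechanical translation through the excess functional $e(\cdot,\cdot)$; the genuine content is that sequential lower semicontinuity coincides with the vanishing of $e(F(x_0),F(x))$ as $x\to x_0$. Compactness of the values of $F$ is used in two essential spots --- to make distances to $F(x)$ attained (so membership in $S+rU$ is equivalent to a distance bound) and to extract the convergent subsequence $y_{n_k}$ in the contradiction argument --- which is exactly why the lemma is confined to $\mathcal{K}(\mathbb{R}^d)$-valued maps.
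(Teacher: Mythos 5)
Your proof is correct. Note that the paper itself does not prove this lemma at all --- it is stated with a pointer to Aubin--Cellina (\cite[Ch.~1, Sec.~5, Cor.~1]{aubindi}) --- so there is no in-paper argument to compare against; what you have written is a valid self-contained substitute, and it follows the standard route. The auxiliary fact that for nonempty compact $S_1,S_2$ and $r\geq0$ one has $S_1\subseteq S_2+rU$ iff $e(S_1,S_2):=\sup_{s_1\in S_1}\inf_{s_2\in S_2}\|s_1-s_2\|\leq r$ is exactly right (attainment of the inner infimum uses compactness of $S_2$ and that $U$ is the closed ball), and it immediately disposes of the u.s.c.\ half of (a) and all of (b), where the definition's symmetric quantification over $x,x'\in x_0+\delta U$ lets you swap roles to get both inclusions. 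The only genuinely nontrivial step, the equivalence of sequential l.s.c.\ with $e(F(x_0),F(x))\to0$ as $x\to x_0$, is handled correctly in both directions: the nearest-point selection $y_n\in F(x_n)$ for ``$\Leftarrow$'' is legitimate because $F(x_n)$ is nonempty and compact, and the contradiction argument for ``$\Rightarrow$'' correctly extracts $y_{n_k}\to y\in F(x_0)$ by compactness of $F(x_0)$ and then applies the sequential definition along $x_{n_k}\to x_0$. Your parenthetical remark about strict versus non-strict inequalities (and, implicitly, open balls $\|x-x_0\|<\delta$ in the definitions versus the closed ball $x_0+\delta U$ in the statement) is the right way to dismiss those cosmetic mismatches --- shrinking $\delta$ and using $\epsilon/2$ settles them. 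In short: correct, complete for the purposes of this lemma, and essentially the textbook argument the paper outsources.
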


Before we proceed further we look at a certain form of locally Lipschitz continuous set-valued maps that arise later. The next lemma defines such maps and also states that the sum of two locally Lipschitz continuous set-valued maps is again a locally Lipschitz continuous set-valued map, a result needed later to obtain locally Lipschitz continuous single-valued parametrization of map $F$ in recursion $\eqref{rec}$.

\begin{lemma}
\label{llc}
\begin{itemize}
\item [(a)] If $f:\mathbb{R}^d\rightarrow\mathbb{R}$ is a locally Lipschitz continuous map and $C\in\mathcal{K}(\mathbb{R}^d)$, then the set-valued map $F:\mathbb{R}^d\rightarrow\mathcal{K}(\mathbb{R}^d)$, given by $F(x):=f(x)C$ for every $x\in\mathbb{R}^d$, is a locally Lipschitz continuous set-valued map. 
\item [(b)] If for every $i\in\{1,2\}$, $F_i:\mathbb{R}^d\rightarrow\mathcal{K}(\mathbb{R}^d)$ is a locally Lipschitz continuous set-valued map, then the set-valued map $F:\mathbb{R}^d\rightarrow\mathcal{K}(\mathbb{R}^d)$, given by $F(x):=F_1(x)+F_2(x)$ for every $x\in\mathbb{R}^d$, is a locally Lipschitz continuous set-valued map.
\end{itemize}
\end{lemma}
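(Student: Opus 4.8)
The plan is to prove Lemma \ref{llc} by reducing both parts to the metric characterization of local Lipschitz continuity given in the preceding lemma, namely that $G$ is locally Lipschitz at $x_0$ iff there exist $\delta, L>0$ with $\bm{\mathrm{H}}(G(x),G(x'))\le L\|x-x'\|$ for all $x,x'\in x_0+\delta U$. Thus in each case it suffices, fixing an arbitrary $x_0$, to produce a neighborhood and a Lipschitz constant for the Hausdorff metric $\bm{\mathrm{H}}$.

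For part (a), I would first record the elementary scaling identity for $\bm{\mathrm{H}}$: for any $C\in\mathcal{K}(\mathbb{R}^d)$ and scalars $s,t$, one has $\bm{\mathrm{H}}(sC,tC)\le |s-t|\,\sup_{c\in C}\|c\|$. This follows directly from the definition of $\bm{\mathrm{H}}$, since for $c\in C$ the point $sc$ is at distance $|s-t|\,\|c\|$ from $tc\in tC$, and symmetrically; note $\sup_{c\in C}\|c\|$ is finite because $C$ is compact. Now given $x_0$, use local Lipschitz continuity of $f$ to get $\delta>0$ and $L_f>0$ with $|f(x)-f(x')|\le L_f\|x-x'\|$ on $x_0+\delta U$. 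Combining, $\bm{\mathrm{H}}(f(x)C,f(x')C)\le L_f\big(\sup_{c\in C}\|c\|\big)\|x-x'\|$ on $x_0+\delta U$, so $F(x)=f(x)C$ is locally Lipschitz with constant $L:=L_f\sup_{c\in C}\|c\|$. A minor point to check is that $f(x)C$ is indeed compact (continuous image of the compact set $\{f(x)\}\times C$ under scalar multiplication), so $F$ genuinely maps into $\mathcal{K}(\mathbb{R}^d)$.

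For part (b), the key observation is the subadditivity of $\bm{\mathrm{H}}$ under Minkowski sums: for $S_1,S_2,T_1,T_2\in\mathcal{K}(\mathbb{R}^d)$, $\bm{\mathrm{H}}(S_1+S_2, T_1+T_2)\le \bm{\mathrm{H}}(S_1,T_1)+\bm{\mathrm{H}}(S_2,T_2)$. This is a standard fact, provable directly by estimating $\inf$ over a sum set by sums of infima and using the triangle inequality inside the norm. Given $x_0$, apply the metric characterization to each $F_i$ to obtain $\delta_i, L_i>0$; set $\delta:=\min\{\delta_1,\delta_2\}$ and $L:=L_1+L_2$. Then for $x,x'\in x_0+\delta U$, subadditivity gives $\bm{\mathrm{H}}(F_1(x)+F_2(x), F_1(x')+F_2(x'))\le L_1\|x-x'\|+L_2\|x-x'\| = L\|x-x'\|$, and one also notes $F_1(x)+F_2(x)$ is compact (sum of two compact sets). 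Hence $F=F_1+F_2$ is locally Lipschitz.

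There is no real obstacle here; the proof is essentially a bookkeeping exercise once the two metric inequalities for $\bm{\mathrm{H}}$ — scalar-multiple estimate and Minkowski-sum subadditivity — are in hand. The only thing requiring slight care is making sure all the intermediate set-valued objects take values in $\mathcal{K}(\mathbb{R}^d)$ (nonempty compact), which is immediate from compactness of $C$ and of each $F_i(x)$, and continuity of the algebraic operations. If one prefers to avoid citing the subadditivity and scaling facts as "standard," they can each be proven in two or three lines inline; I would likely do so for self-containedness, since the paper is otherwise careful to state the $\bm{\mathrm{H}}$-characterizations explicitly.
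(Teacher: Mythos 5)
Your proposal is correct, and it differs from the paper's proof mainly in packaging: you route everything through the Hausdorff-metric characterization of local Lipschitz continuity (the equivalence stated in the lemma preceding Lemma \ref{llc}) together with two elementary inequalities for $\bm{\mathrm{H}}$, namely the scaling estimate $\bm{\mathrm{H}}(sC,tC)\le |s-t|\,\sup_{c\in C}\|c\|$ and subadditivity under Minkowski sums, $\bm{\mathrm{H}}(S_1+S_2,T_1+T_2)\le \bm{\mathrm{H}}(S_1,T_1)+\bm{\mathrm{H}}(S_2,T_2)$; both facts are true and your sketched verifications are sound. The paper instead works directly with the set-inclusion form of the definition: in part (a) it shows $F(x)\subseteq F(x')+rL\|x-x'\|U$ pointwise over $c\in C$, and in part (b) it decomposes $y=y_1+y_2$, writes $y_i=y_i'+L_i\|x-x'\|u_i$, and uses convexity of $U$ to absorb $\frac{L_1u_1+L_2u_2}{L_1+L_2}$ into a single element of $U$, yielding $F(x)\subseteq F(x')+(L_1+L_2)\|x-x'\|U$. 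What your approach buys is that the convex-combination step disappears entirely (subadditivity of $\bm{\mathrm{H}}$ does the work), and the two metric inequalities are reusable general facts; what the paper's approach buys is self-containedness at the level of the raw definition, without leaning on the equivalence lemma, which it only quotes without proof from Aubin--Cellina. Your remarks on compactness of $f(x)C$ and of $F_1(x)+F_2(x)$ are correct and harmless, though the paper does not bother with them since the codomain $\mathcal{K}(\mathbb{R}^d)$ is part of the hypothesis/statement. If you write your version out, do prove the two $\bm{\mathrm{H}}$-inequalities inline as you suggest, since the paper nowhere states them.
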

\begin{proof}
\begin{itemize}
\item [(a)] Fix $x_0\in\mathbb{R}^d$ and let $r:=\sup_{c\in C}\|c\|$. Since $f$ is locally Lipschitz continuous, there exists $\delta^f_{x_0}>0$ and $L^f_{x_0}>0$ such that for every $x,x'\in x_0+\delta^f_{x_0}U$, $|f(x)-f(x')|\leq L^f_{x_0}\|x-x'\|$. Let $x,x'\in x_0+\delta^{f}_{x_0}U$. Then for any $c\in C$, 
\begin{align*}
\|f(x)c-f(x')c\|&=|f(x)-f(x')|\|c\|\\
                      &\leq rL^f_{x_0}\|x-x'\|.
\end{align*}
Therefore for every $x,x'\in x_0+\delta^{f}_{x_0}U$, for every $c\in C$, $f(x')c-f(x)c\in rL^{f}_{x_0}\|x-x'\|U$. Thus for every $x,x'\in x_0+\delta^{f}_{x_0}U$, $F(x')\subseteq F(x)+rL^{f}_{x_0}\|x-x'\|U$, from which it follows that the set-valued map $F$ is locally Lipschitz continuous at $x_0$ with $\delta:=\delta^f_{x_0}$ and $L:=rL^{f}_{x_0}$. Since $x_0\in\mathbb{R}^d$ is arbitrary, the above argument gives us that $F$ is locally Lipschitz continuous at every $x_0$.

\item [(b)] Fix $x_0\in\mathbb{R}^d$. Since for every $i\in\{1,2\}$, $F_i$ are locally Lipschitz continuous, there exists $\delta_i>0$ and $L_i>0$ such that for every $x,x'\in x_0+\delta_i U$, $F_i(x)\subseteq F_i(x')+L_i\|x-x'\| U$. Let $\delta:=\min\{\delta_1,\delta_2\}$, $L:=L_1+L_2$ and $x,x'\in x_0+\delta U$. For any $y\in F(x)$, there exists $y_1\in F_1(x)$ and $y_2\in F_2(x)$ such that $y=y_1+y_2$ . By our choice of $\delta$, we have $y'_1\in F_1(x')$, $y'_2\in F_2(x')$ and $u_1, u_2\in U$, such that for every $i\in\{1,2\}$, $y_i=y'_i+L_i\|x-x'\|u_i$. Therefore,
\begin{align}
\label{tmp0}
y&=y_1+y_2\nonumber\\\nonumber
  &=y'_1+L_1\|x-x'\|u_1 +y'_2+L_2\|x-x'\|u_2\\
  &=y'_1+y'_2+\left(L_1+L_2\right)\left\|x-x'\right\|\left(\frac{L_1u_1+L_2u_2}{L_1+L_2}\right).
\end{align}
Clearly $y'_1+y'_2\in F(x')$ and since $U$ is a convex subset of $\mathbb{R}^d$, $\frac{L_1u_1+L_2u_2}{L_1+L_2}\in U$. From $\eqref{tmp0}$ we get that, $F(x)\subseteq F(x')+(L_1+L_2)\|x-x'\| U$, for every $x,x' \in x_0+\delta U$. Therefore $F$ is locally Lipschitz continuous at $x_0$. Since $x_0$ is arbitrary, the above argument gives us that $F$ is locally Lipschitz continuous.\qed  
\end{itemize}
\end{proof}  

Consider a set-valued map $F$ satisfying assumption $(A1)$. A simple contradiction argument gives us that $F$ is u.s.c. It is not possible to represent such u.s.c. set-valued maps with a single-valued continuous map with an additional parameter. But instead one can approximate them from above as explained next. The first step is to embed the graph of the map $F$ in that of a sequence of continuous set-valued maps as stated in the lemma below. For the proof of the lemma below notions of a paracompact topological space, an open covering, its locally finite refinement and partition of unity subordinated to a locally finite covering are needed, which are summarized in Appendix \ref{topo} for easy reference.

\begin{lemma}
\label{ctem}
Let $F:\mathbb{R}^d\rightarrow\mathcal{K}(\mathbb{R}^d)$ be a set-valued map satisfying $(A1)$. Then, there exists a sequence of continuous set-valued maps $\{F^{(l)}:\mathbb{R}^d\rightarrow\mathcal{K}(\mathbb{R}^d)\}_{l\geq1}$, such that for every $l\geq1$,
\begin{itemize}
\item [(a)] for every $x\in\mathbb{R}^d$, $F^{(l)}(x)$ is a non-empty, convex and compact subset of $\mathbb{R}^d$,
\item [(b)] for every $x\in\mathbb{R}^d$, $F(x)\subseteq F^{(l+1)}(x)\subseteq F^{(l)}(x)$,
\item [(c)] there exists $K^{(l)}>0$, such that for every $x\in\mathbb{R}^d$, $\sup_{y\in F^{(l)}(x)}\|y\|\leq K^{(l)}(1+\|x\|)$,
\item [(d)] $F^{(l)}$ is a locally Lipschitz continuous set valued map.
\end{itemize}
Furthermore, 
\begin{itemize}
\item [(e)] for every $x\in\mathbb{R}^d$, $F(x)=\cap_{l\geq1} F^{(l)}(x)$.
\end{itemize}
\end{lemma}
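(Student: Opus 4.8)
The plan is to construct the maps $F^{(l)}$ explicitly by a "thickening and smoothing" procedure on the graph of $F$, exploiting paracompactness of $\mathbb{R}^d$. For each fixed $l\geq1$ I would proceed as follows. Since $F$ is u.s.c. (by the contradiction argument from $(A1)(iii)$ together with the linear growth bound $(A1)(ii)$), for every $x\in\mathbb{R}^d$ there is an open ball $B(x,\rho_x)$, with $\rho_x<1/l$, such that $F(x')\subseteq F(x)+\tfrac{1}{l}U$ for all $x'\in B(x,\rho_x)$. The collection $\{B(x,\rho_x/2)\}_{x\in\mathbb{R}^d}$ is an open cover of $\mathbb{R}^d$; by paracompactness extract a locally finite open refinement $\{V_i\}_{i\in I}$ and a subordinate partition of unity $\{\phi_i\}_{i\in I}$. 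Each $V_i$ sits inside some $B(x_i,\rho_{x_i}/2)$; set $C_i:=\overline{\mathrm{co}}\bigl(F(x_i)+\tfrac{1}{l}U\bigr)$ (a convex compact set, with the linear-growth bound $\sup_{y\in C_i}\|y\|\le K(1+\|x_i\|)+\tfrac1l \le K^{(l)}(1+\|x\|)$ uniformly on $V_i$ after adjusting the constant, because $\rho_{x_i}<1/l$ forces $\|x_i\|\le\|x\|+1$). Then define
\begin{equation*}
F^{(l)}(x):=\sum_{i\in I}\phi_i(x)\,C_i,
\end{equation*}
the Minkowski-weighted combination, which by local finiteness is a finite sum near each point.

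Next I would verify properties $(a)$–$(d)$ for this $F^{(l)}$. Convexity and compactness of $F^{(l)}(x)$: a convex combination (in the Minkowski sense) of convex compact sets is convex and compact — non-emptiness is automatic since each $C_i\ne\emptyset$ and $\sum\phi_i=1$. Linear growth $(c)$ follows from the uniform bound on the $C_i$ relevant near $x$ and $\sum_i\phi_i(x)=1$. For the inclusion $F(x)\subseteq F^{(l)}(x)$ in $(b)$: if $\phi_i(x)>0$ then $x\in V_i\subseteq B(x_i,\rho_{x_i}/2)$, so $F(x)\subseteq F(x_i)+\tfrac1l U\subseteq C_i$; hence $F(x)\subseteq\sum_i\phi_i(x)C_i$ because $F(x)=\sum_i\phi_i(x)F(x)\subseteq\sum_i\phi_i(x)C_i$. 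Local Lipschitz continuity $(d)$: since the partition of unity can be taken locally Lipschitz (standard — one builds it from smooth bump functions, which is allowed as $\mathbb{R}^d$ is a smooth manifold), each $x\mapsto\phi_i(x)C_i$ is locally Lipschitz by Lemma~\ref{llc}(a), and a finite sum of locally Lipschitz set-valued maps is locally Lipschitz by Lemma~\ref{llc}(b); local finiteness makes the sum finite on a neighbourhood of any point, so $F^{(l)}$ is locally Lipschitz.

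The monotonicity $F^{(l+1)}(x)\subseteq F^{(l)}(x)$ in $(b)$ does not follow automatically from independent constructions, so here I would be careful: I would build the sequence so that the cover (equivalently the partition of unity and the index set) used at level $l+1$ refines the one at level $l$, and that $C_i^{(l+1)}\subseteq C_j^{(l)}$ whenever $V_i^{(l+1)}\subseteq V_j^{(l)}$ and $\phi_j^{(l)}(x)>0$ on the support of $\phi_i^{(l+1)}$. One clean way: choose $\rho_x^{(l+1)}\le\rho_x^{(l)}$ and arrange the refinements nested; then on the support of $\phi_i^{(l+1)}$, the set $C_i^{(l+1)}=\overline{\mathrm{co}}(F(x_i^{(l+1)})+\tfrac{1}{l+1}U)$ is contained in $C_j^{(l)}=\overline{\mathrm{co}}(F(x_j^{(l)})+\tfrac1l U)$ because $x_i^{(l+1)}$ lies in $B(x_j^{(l)},\rho^{(l)})$, so $F(x_i^{(l+1)})\subseteq F(x_j^{(l)})+\tfrac1l U$, and $\tfrac{1}{l+1}<\tfrac1l$. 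Taking convex combinations preserves this inclusion, giving $F^{(l+1)}(x)\subseteq F^{(l)}(x)$. Finally $(e)$: since $F(x)\subseteq F^{(l)}(x)$ for all $l$, $F(x)\subseteq\cap_l F^{(l)}(x)$; conversely, if $y\notin F(x)$ then, $F(x)$ being compact, there is $\epsilon>0$ with $\mathrm{dist}(y,F(x))>\epsilon$, and u.s.c. gives $\delta>0$ with $F(x')\subseteq F(x)+\tfrac\epsilon2 U$ for $\|x'-x\|<\delta$; for $l$ large enough that $\tfrac1l<\tfrac\epsilon4$ and $1/l<\delta$, every $x_i$ with $\phi_i(x)>0$ satisfies $\|x_i-x\|<\tfrac1l<\delta$, so $C_i\subseteq\overline{\mathrm{co}}(F(x)+\tfrac\epsilon2 U+\tfrac1l U)\subseteq F(x)+\epsilon U$ (using convexity of $F(x)$), whence $F^{(l)}(x)\subseteq F(x)+\epsilon U$ and $y\notin F^{(l)}(x)$. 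Therefore $\cap_l F^{(l)}(x)\subseteq F(x)$.

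The main obstacle I expect is the simultaneous bookkeeping needed to get the \emph{nested} monotonicity $(b)$ together with local Lipschitz continuity $(d)$ and the uniform linear-growth constant $(c)$: constructing each level in isolation is easy, but arranging that the covers, partitions of unity, and enlarged sets at consecutive levels are compatibly nested — while keeping everything locally finite and the $C_i$ uniformly bounded near $x$ — requires setting up the recursion over $l$ with some care, and is the only genuinely non-routine part of the argument.
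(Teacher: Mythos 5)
Your overall strategy---a locally finite cover, a locally Lipschitz partition of unity subordinated to it, Minkowski combinations $\sum_i\phi_i(x)C_i$ of convex compact cells, and Lemma \ref{llc} together with local finiteness for part $(d)$---is the same as the paper's, and your verifications of $(a)$, $(c)$, $(d)$, of the inclusion $F(x)\subseteq F^{(l)}(x)$, and of $(e)$ are sound. The genuine gap is the nestedness $F^{(l+1)}(x)\subseteq F^{(l)}(x)$ in $(b)$, exactly the step you flag as delicate. First, the arithmetic does not close: from $F(x_i^{(l+1)})\subseteq F(x_j^{(l)})+\tfrac1l U$ you only get $C_i^{(l+1)}\subseteq\overline{\mathrm{co}}\bigl(F(x_j^{(l)})+\bigl(\tfrac1l+\tfrac1{l+1}\bigr)U\bigr)$, which is strictly larger than $C_j^{(l)}=\overline{\mathrm{co}}\bigl(F(x_j^{(l)})+\tfrac1l U\bigr)$; the remark that $\tfrac1{l+1}<\tfrac1l$ does not absorb the extra $\tfrac1{l+1}U$ --- you would need the u.s.c.\ tolerance at level $l$ to be at most $\tfrac1l-\tfrac1{l+1}=\tfrac1{l(l+1)}$, or radii decaying geometrically. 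Second, the containment $x_i^{(l+1)}\in B\bigl(x_j^{(l)},\rho^{(l)}_{x_j^{(l)}}\bigr)$ that you invoke requires the comparison $\rho^{(l+1)}_{x_i^{(l+1)}}<\rho^{(l)}_{x_j^{(l)}}$ between u.s.c.\ moduli at two \emph{different} points, which the pointwise condition $\rho_x^{(l+1)}\le\rho_x^{(l)}$ does not give, since the modulus is not uniform. Third, nesting the refinements pairs each level-$(l+1)$ patch with a single level-$l$ patch, but $F^{(l+1)}(x)\subseteq F^{(l)}(x)$ needs $C_i^{(l+1)}\subseteq C_j^{(l)}$ for \emph{every} $j$ with $\phi_j^{(l)}(x)>0$ (one then uses $\cap_j C_j^{(l)}\subseteq\sum_j\phi_j^{(l)}(x)C_j^{(l)}$); containment in just one $C_{j(i)}^{(l)}$ is not enough, because $C_{j(i)}^{(l)}\not\subseteq\sum_j\phi_j^{(l)}(x)C_j^{(l)}$ in general.

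The paper's construction removes all of this bookkeeping: the cell attached to a level-$l$ patch is $A_i^l:=\overline{\mathrm{co}}\bigl(F\bigl(B(2\epsilon_l,x_i^l)\bigr)\bigr)$ with $\epsilon_l=3^{-l}$ and $\mathrm{support}(\psi_i^l)\subseteq B(\epsilon_l,x_i^l)$ --- the closed convex hull of the image of $F$ over a ball, not a u.s.c.\ enlargement of a single value --- and the covers at different levels are chosen completely independently (the detailed verification of $(a)$--$(c)$ and $(e)$ is deferred to the cited prior work). Monotonicity is then automatic: if $\psi_i^{l+1}(x)>0$ and $\psi_j^{l}(x)>0$, the triangle inequality gives $B(2\epsilon_{l+1},x_i^{l+1})\subseteq B(2\epsilon_l,x_j^l)$ because $2\epsilon_{l+1}+\epsilon_{l+1}+\epsilon_l=2\epsilon_l$, hence $A_i^{l+1}\subseteq A_j^l$ for every active $j$, and $F^{(l+1)}(x)\subseteq F^{(l)}(x)$ follows by convexity; upper semicontinuity of $F$ is used only in $(e)$, where the radii $2\epsilon_l\to0$. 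If you either switch your $C_i$ to such image-over-a-ball hulls, or keep the enlargement form but let the tolerances decay geometrically (u.s.c.\ within $\delta_l-\delta_{l+1}$) and prove the for-all-$j$ containment by a direct triangle-inequality argument instead of nesting covers across levels, the rest of your proof goes through unchanged.
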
 
\begin{proof}
For any $\epsilon>0$, for every $x_0\in\mathbb{R}^d$, let $B(\epsilon,x_0):=\left\{x:\ \parallel x-x_0\parallel<\epsilon\right\}$. Let $\left\{\epsilon_l:=\frac{1}{3^l}\right\}_{l\geq1}$. Then for every $l\geq1$, $\mathscr{C}_l:=\left\{B(\epsilon_l,x_0):x_0\in\mathbb{R}^d\right\}$ is an open covering of $\mathbb{R}^d$. Since $\mathbb{R}^d$ is a metric space, it is paracompact (see \cite[Ch.~0, Sec.~1, Thm.~1]{aubindi}). Therefore for every $l\geq1$, there exists a locally finite open refinement of the covering $\mathscr{C}_l$ and let it be denoted by $\tilde{\mathscr{C}}_l:=\left\{C_i^l\right\}_{i\in I^{l}}$ where $I^l$ is an arbitrary index set. By \cite[Ch.~0, Sec.~1, Thm.~2]{aubindi}, there exists a locally Lipschitz continuous partition of unity, $\left\{\psi_i^{l}\right\}_{i\in I^l}$, subordinated to the covering $\tilde{\mathscr{C}}_l$. Therefore, 
for every $l\geq1$, for every $i\in I^l$, there exists $x_i^l$, such that $\mathrm{support}(\psi_i^l)\subseteq C_i^l\subseteq B(\epsilon_l,x_i^l)$. For every $l\geq1$, for every $x\in\mathbb{R}^d$, let $I^l(x):=\left\{i\in I^l:\psi_i^l(x)>0\right\}$ and by definition of $\psi_i^l$, we have that $0<|I^l(x)|<\infty$ and $\sum_{i\in I^l(x)}\psi_i^l(x)=1$.

For every $l\geq1$, define the set valued map $F^{(l)}:\mathbb{R}^d\rightarrow\left\{\text{subsets of }\mathbb{R}^d\right\}$, 
such that for every $x\in\mathbb{R}^d$, $F^{(l)}(x):=\sum_{i\in I^l(x)}\psi_i^{l}(x)A_i^{l}$, where $A_i^{l}:=\bar{co}\left(F\left(B\left(2\epsilon_l,x_i^l\right)\right)\right)$. 

The proofs of parts $(a),\ (b),\ (c)$ and $(e)$ of the lemma are exactly the same as that of \cite[Lemma~3.2]{vin1t}. We shall provide a proof of part $(d)$ of the lemma above from which continuity of the set-valued maps $F^{(l)}$ follows.
\begin{itemize}
\item [(d)] Fix $l\geq1$ and $x\in\mathbb{R}^d$. Since $\tilde{\mathscr{C}}_l$ is a locally finite open covering of $\mathbb{R}^d$, there exists $\delta>0$ (depending on $x$), such that 
$I^l(x,\delta):=\left\{i\in I^l:\ B(x,\delta)\cap C^l_i\neq\emptyset\right\}$ is finite. Since $\{\psi_i^l\}_{i\in I^l}$ is a locally Lipschitz continuous partition of unity subordinated to the covering $\tilde{\mathscr{C}}_l$, we have that for every $i\in I^l$, $\mathrm{support}(\psi_i^l)\subseteq C_i^l$. Therefore, for every $x'\in B(x,\delta)$, $F^{(l)}(x')=\sum_{i\in I^l(x,\delta)}\psi^l_i(x')A_i^l$.

From the proof of part $(a)$ of this lemma we know that for every $i\in I^l$, $A_i^l$ is a compact and convex subset of $\mathbb{R}^d$. Therefore from Lemma \ref{llc}$(a)$, we get that , for every $i\in I^l(x,\delta)$, the set-valued map given by $y\rightarrow \psi_i^l(y)A_i^l$ is locally Lipschitz continuous. Further since $|I^l(x,\delta)|<\infty$, from Lemma \ref{llc}$(b)$, we get that the set-valued map given by $y\rightarrow \sum_{i\in I^l(x,\delta)}\psi^l(y)A_i^l$ is locally Lipschitz continuous. Since the set-valued map $y\rightarrow \sum_{i\in I^l(x,\delta)}\psi^l_i(y)A_i^l$ restricted to $B(x,\delta)$ is the same as $F^{(l)}$ on $B(x,\delta)$, we get that $F^{(l)}$ is locally Lipschitz continuous at $x$. Since $x$ is arbitrary, the above argument gives us that $F^{(l)}$ is a locally Lipschitz continuous set-valued map.\qed 
\end{itemize}
\end{proof}

The continuous set-valued maps $F^{(l)}$ as obtained above can be now parametrized (that is represented with a single-valued continuous function with an additional parameter). Key to parametrization is a continuous selection procedure by which we mean a function $\sigma:\mathcal{K}(\mathbb{R}^d)\rightarrow\mathbb{R}^d$ which is continuous and is such that for every $Y\in\mathcal{K}(\mathbb{R}^d)$, $\sigma(Y)\in Y$. Since the maps $F^{(l)}$ are convex set-valued, it suffices to look for a selection procedure which is continuous restricted to the family of compact and convex subsets of $\mathbb{R}^d$. Further we want a selection procedure which would preserve the local Lipschitz continuity of the set-valued map $F^{(l)}$ in the parametrization as well. In order to accomplish this we shall use the Stiener selection procedure (for a definition see \cite[Thm.~9.4.1]{aubinsva}). The next lemma summarizes some properties of the Stiener selection procedure and an intersection lemma which form the central tools for parameterizing the set-valued maps $F^{(l)}$ (for a proof we refer the reader to \cite[Thm.~9.4.1]{aubinsva} and \cite[Lemma~9.4.2]{aubinsva}). Before we state the lemma we introduce some notation needed. Let $\mathcal{K}_c(\mathbb{R}^d)$ denote the family of all non-empty compact and convex subsets of $\mathbb{R}^d$. For any set $Y\subseteq\mathbb{R}^d$ and for any $x\in\mathbb{R}^d$, define $\bm{d}(x,Y):=\inf_{y\in Y}\|x-y\|$.

\begin{lemma}
\label{prmtl}
\begin{itemize}
\item [(a)] There exists a function $\sigma:\mathcal{K}_c(\mathbb{R}^d)\rightarrow\mathbb{R}^d$, such that for every $Y,Y_1, Y_2\in\mathcal{K}_c(\mathbb{R}^d)$, 
\begin{equation*}
\sigma(Y)\in Y\ \mathrm{and}\ \|\sigma(Y_1)-\sigma(Y_2)\|\leq d\ \bm{\mathrm{H}}(Y_1,Y_2).
\end{equation*}
\item [(b)] The map $\Pi:\mathcal{K}_c(\mathbb{R}^d)\times \mathbb{R}^d\rightarrow \mathcal{K}_c(\mathbb{R}^d)$, defined such that for every $Y\in\mathcal{K}_c(\mathbb{R}^d)$ and $x\in\mathbb{R}^d$, $\Pi(Y,u):=Y\cap\left(x+2\bm{d}(x,Y)U\right)$, is such that for every $Y_1,Y_2\in\mathcal{K}_c(\mathbb{R}^d)$ and for every $x_1,x_2\in\mathbb{R}^d$, 
\begin{equation*}
\bm{\mathrm{H}}(\Pi(Y_1,x_1),\Pi(Y_2,x_2))\leq5\left(\bm{\mathrm{H}}(Y_1,Y_2)+\|x_1-x_2\|\right).
\end{equation*}
\end{itemize}
\end{lemma}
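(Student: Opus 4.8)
The statement to prove is Lemma~\ref{prmtl}, which asserts the existence and properties of the Steiner selection $\sigma$ (part (a)) and the Lipschitz estimate for the truncation map $\Pi$ (part (b)). Since the excerpt already directs the reader to \cite[Thm.~9.4.1]{aubinsva} and \cite[Lemma~9.4.2]{aubinsva}, the natural proof is to cite those results; but to make the plan self-contained I would reconstruct the arguments as follows.

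\textbf{Part (a).} The plan is to define $\sigma$ via the Steiner point (also called the curvature centroid) of a convex body. For $Y \in \mathcal{K}_c(\mathbb{R}^d)$, write its support function $h_Y(p) := \sup_{y \in Y}\langle p, y\rangle$ for $p$ on the unit sphere $S^{d-1}$, and set
\begin{equation*}
\sigma(Y) := d \int_{S^{d-1}} p\, h_Y(p)\, \mu(dp),
\end{equation*}
where $\mu$ is the normalized (rotation-invariant, total mass $1$) surface measure on $S^{d-1}$. First I would verify $\sigma(Y) \in Y$: this follows because the Steiner point is a convex combination (integral average) of exposed points of $Y$, equivalently because for every direction $q$ one checks $\langle q, \sigma(Y)\rangle \le h_Y(q)$ using a known identity for the integral $\int_{S^{d-1}} \langle p,q\rangle_+\, \mu(dp)$. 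Then for the Lipschitz bound: since $h_{Y_1} - h_{Y_2}$ is controlled in sup-norm on $S^{d-1}$ by the Hausdorff distance, $\|h_{Y_1} - h_{Y_2}\|_\infty = \bm{\mathrm{H}}(Y_1, Y_2)$, and since $\|p\| = 1$ and $\mu$ is a probability measure, we get $\|\sigma(Y_1) - \sigma(Y_2)\| \le d \int_{S^{d-1}} |h_{Y_1}(p) - h_{Y_2}(p)|\, \mu(dp) \le d\, \bm{\mathrm{H}}(Y_1, Y_2)$, as claimed.

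\textbf{Part (b).} Here I would first record that $\Pi(Y, x) = Y \cap (x + 2\bm{d}(x, Y)U)$ is nonempty (it contains a nearest point of $Y$ to $x$) and compact and convex (intersection of two such sets), so $\Pi$ indeed maps into $\mathcal{K}_c(\mathbb{R}^d)$. The estimate $\bm{\mathrm{H}}(\Pi(Y_1, x_1), \Pi(Y_2, x_2)) \le 5(\bm{\mathrm{H}}(Y_1, Y_2) + \|x_1 - x_2\|)$ I would prove by the standard perturbation argument for intersections: one shows that every point of $\Pi(Y_1, x_1)$ is within $5(\bm{\mathrm{H}}(Y_1, Y_2) + \|x_1 - x_2\|)$ of $\Pi(Y_2, x_2)$ (and then swap the roles of the indices). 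Fix $y \in Y_1$ with $\|y - x_1\| \le 2\bm{d}(x_1, Y_1)$. There is $y' \in Y_2$ with $\|y - y'\| \le \bm{\mathrm{H}}(Y_1, Y_2)$; the main thing to check is that a small push of $y'$ toward the nearest point of $Y_2$ to $x_2$ lands inside the ball $x_2 + 2\bm{d}(x_2, Y_2)U$ while moving by at most a constant times $\bm{\mathrm{H}}(Y_1,Y_2) + \|x_1 - x_2\|$. Using $\bm{d}(x_2, Y_2) \ge \bm{d}(x_1, Y_1) - \bm{\mathrm{H}}(Y_1, Y_2) - \|x_1 - x_2\|$ (triangle inequality for the distance-to-a-set function under Hausdorff perturbation of the set) and convexity of $Y_2$, one interpolates between $y'$ and that nearest point; keeping careful track of the constants yields the factor $5$.

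\textbf{Main obstacle.} The bookkeeping in part (b) is the delicate part: one must control simultaneously the drift of the set ($\bm{\mathrm{H}}(Y_1, Y_2)$), the drift of the center ($\|x_1 - x_2\|$), and the change in the truncation radius $2\bm{d}(\cdot, \cdot)$, and the convexity of the target set is essential for the interpolation step to stay inside the truncating ball. Part (a) is comparatively routine once the Steiner-point formula is written down, the only subtlety being the verification that $\sigma(Y) \in Y$, which rests on the classical identity for $\int_{S^{d-1}} \langle p, q\rangle_+\, \mu(dp)$ that makes the normalizing constant $d$ the right one. Given that both facts are exactly \cite[Thm.~9.4.1]{aubinsva} and \cite[Lemma~9.4.2]{aubinsva}, in the paper itself I would simply state the lemma and refer to those citations, as the excerpt already does.
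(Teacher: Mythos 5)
The paper gives no proof of this lemma at all—it simply refers the reader to \cite[Thm.~9.4.1]{aubinsva} and \cite[Lemma~9.4.2]{aubinsva}—and your proposal ultimately does the same, with your reconstruction of the Steiner-point selection (part (a)) and the intersection perturbation argument (part (b)) being exactly the standard textbook route behind those citations, so the approaches coincide. One minor caution: in part (b) the naive interpolation toward the projection of $x_2$ onto $Y_2$ gives a constant worse than $5$ (one needs the obtuse-angle property of projections onto convex sets, and even then the bookkeeping is delicate), so your remark that ``careful tracking of constants yields the factor $5$'' hides the only nontrivial work—but since you defer to the cited lemma for that, this does not affect correctness.
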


We now use the results stated in the above lemma to parametrize the set-valued maps $F^{(l)}$.

\begin{lemma}
\label{param}
Let $\{F^{(l)}\}_{l\geq1}$ be as in Lemma \ref{ctem}. For every $l\geq1$, there exists a continuous function $f^{(l)}:\mathbb{R}^d\times U\rightarrow \mathbb{R}^d$ such that, 
\begin{itemize}
\item [(a)] for every $x\in\mathbb{R}^d$, $f^{(l)}(x,U)=F^{(l)}(x)$ where $f^{(l)}(x,U):=\left\{f^{(l)}(x,u):\ u\in U\right\}$.
\item [(b)] for $K^{(l)}>0$ as in Lemma \ref{ctem}, for every $(x,u)\in\mathbb{R}^d\times U$, $\|f^{(l)}(x,u)\|\leq K^{(l)}(1+\|x\|)$.
\item [(c)] for every $x_0\in\mathbb{R}^d$, there exists $\delta^{(l)}>0$ and $L^{(l)}>0$ (depending on $x_0$), such that for every $x,x'\in x_0+\delta^{(l)} U$, for every $u\in U$, 
\begin{equation*}
\left\|f^{(l)}(x,u)-f^{(l)}(x',u)\right\|\leq L^{(l)}\|x-x'\|.
\end{equation*}
\end{itemize} 
\end{lemma}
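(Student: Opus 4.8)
The plan is to obtain $f^{(l)}$ by composing the Steiner selection $\sigma$ of Lemma~\ref{prmtl}(a) with the intersection map $\Pi$ of Lemma~\ref{prmtl}(b), applied to $F^{(l)}(x)$ and a carefully chosen ``aiming point'' $\varphi^{(l)}(x,u)$ which ranges over a ball containing $F^{(l)}(x)$ as $u$ ranges over $U$, while depending locally Lipschitz-continuously on $x$ uniformly in $u$. Concretely, first set $c^{(l)}(x):=\sigma(F^{(l)}(x))$; by Lemma~\ref{ctem}(d) and the Hausdorff-metric restatement of local Lipschitz continuity, $F^{(l)}$ is locally Lipschitz in $\bm{\mathrm{H}}$, so by Lemma~\ref{prmtl}(a) the selection $c^{(l)}$ is locally Lipschitz. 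For the radius I would exploit the linear growth bound of Lemma~\ref{ctem}(c) to avoid any delicate estimate: put $\rho^{(l)}(x):=2K^{(l)}(1+\|x\|)$, which is globally Lipschitz and strictly positive, and note that $\|y\|\le K^{(l)}(1+\|x\|)$ for $y\in F^{(l)}(x)$ together with $c^{(l)}(x)\in F^{(l)}(x)$ force $F^{(l)}(x)\subseteq c^{(l)}(x)+\rho^{(l)}(x)U$. Then define
\[
\varphi^{(l)}(x,u):=c^{(l)}(x)+\rho^{(l)}(x)\,u,\qquad f^{(l)}(x,u):=\sigma\!\big(\Pi(F^{(l)}(x),\varphi^{(l)}(x,u))\big),
\]
which is well defined because $\Pi$ takes values in $\mathcal{K}_c(\mathbb{R}^d)$ (indeed $\Pi(Y,w)$ is the intersection of the compact convex set $Y$ with a closed ball and is nonempty, containing the metric projection of $w$ onto $Y$).

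Next I would verify the three required properties. For (a): since $f^{(l)}(x,u)\in\Pi(F^{(l)}(x),\varphi^{(l)}(x,u))\subseteq F^{(l)}(x)$ we get $f^{(l)}(x,U)\subseteq F^{(l)}(x)$; conversely, given $y\in F^{(l)}(x)$, the vector $u:=(y-c^{(l)}(x))/\rho^{(l)}(x)$ lies in $U$ because $\|y-c^{(l)}(x)\|\le\rho^{(l)}(x)$, and $\varphi^{(l)}(x,u)=y\in F^{(l)}(x)$, hence $\bm{d}(\varphi^{(l)}(x,u),F^{(l)}(x))=0$, so $\Pi(F^{(l)}(x),\varphi^{(l)}(x,u))=F^{(l)}(x)\cap\{y\}=\{y\}$ and therefore $f^{(l)}(x,u)=\sigma(\{y\})=y$. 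Property (b) is immediate from $f^{(l)}(x,u)\in F^{(l)}(x)$ and Lemma~\ref{ctem}(c). For (c), fix $x_0$ and choose $\delta^{(l)}>0$ on a neighborhood of which $F^{(l)}$ is $\bm{\mathrm{H}}$-Lipschitz with some constant $\Lambda$; then $c^{(l)}$ is $d\Lambda$-Lipschitz there and, since $|\rho^{(l)}(x)-\rho^{(l)}(x')|\le 2K^{(l)}\|x-x'\|$ and $\|u\|\le1$, the aiming point satisfies $\|\varphi^{(l)}(x,u)-\varphi^{(l)}(x',u)\|\le(d\Lambda+2K^{(l)})\|x-x'\|$ uniformly in $u\in U$. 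Chaining Lemma~\ref{prmtl}(a) and then (b),
\[
\|f^{(l)}(x,u)-f^{(l)}(x',u)\|\le d\,\bm{\mathrm{H}}\big(\Pi(F^{(l)}(x),\varphi^{(l)}(x,u)),\Pi(F^{(l)}(x'),\varphi^{(l)}(x',u))\big)\le 5d\big(\Lambda+d\Lambda+2K^{(l)}\big)\|x-x'\|,
\]
so (c) holds with $L^{(l)}:=5d\big((1+d)\Lambda+2K^{(l)}\big)$. Joint continuity of $f^{(l)}$ on $\mathbb{R}^d\times U$ then follows from the same chain of facts: $x\mapsto F^{(l)}(x)$ is $\bm{\mathrm{H}}$-continuous, $\varphi^{(l)}$ is continuous, and $\Pi$ and $\sigma$ are continuous.

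The only genuine subtlety — and the reason Lemma~\ref{prmtl} is needed at all — is that the obvious candidate, the metric projection of $\varphi^{(l)}(x,u)$ onto $F^{(l)}(x)$, is in general only Hölder continuous in $x$ with exponent $1/2$ and would not give property (c); replacing the projection by $\sigma\circ\Pi$ restores Lipschitz dependence at the harmless cost of enlarging the ``cap'' $\Pi(Y,w)$ to radius $2\bm{d}(w,Y)$, which still collapses to the single point $\{w\}$ exactly when $w\in Y$, and that collapse is all the surjectivity argument in (a) requires. Everything else reduces to assembling Lemma~\ref{ctem}, the two estimates of Lemma~\ref{prmtl}(a)--(b), and the Hausdorff-metric restatements of continuity, so I expect no further obstacle.
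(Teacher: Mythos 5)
Your construction is correct and takes essentially the same route as the paper: define $f^{(l)}=\sigma\circ\Pi$ applied to $F^{(l)}(x)$ with an aiming point that sweeps a ball containing $F^{(l)}(x)$ as $u$ ranges over $U$, then verify (a) via the collapse $\Pi(Y,w)=\{w\}$ when $w\in Y$, (b) from $f^{(l)}(x,u)\in F^{(l)}(x)$, and (c) by chaining Lemma \ref{prmtl}(a)--(b) with the local $\bm{\mathrm{H}}$-Lipschitz continuity of $F^{(l)}$. The only difference is cosmetic: the paper aims at $K^{(l)}(1+\|x\|)u$, a ball centered at the origin which already contains $F^{(l)}(x)$ by Lemma \ref{ctem}(c), so your Steiner-point center $c^{(l)}(x)$ and its extra Lipschitz estimate are unnecessary (they only enlarge the constant), though perfectly valid.
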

\begin{proof}
Fix $l\geq1$. Let the map $f^{(l)}:\mathbb{R}^d\times U\rightarrow\mathbb{R}^d$ be defined such that, for every $(x,u)\in\mathbb{R}^d\times U$, 
\begin{equation}
\label{prmdef}
f^{(l)}(x,u):=\sigma\left(\Pi\left(F^{(l)}(x),K^{(l)}(1+\|x\|)u\right)\right),
\end{equation}
where $\sigma$ and $\Pi$ are as in Lemma \ref{prmtl}.
\begin{itemize}
\item [(a)] By definition of $f^{(l)}$, $\sigma$ and $\Pi$, for every $(x,u)\in \mathbb{R}^d\times U$, we have that,
\begin{equation*}
f^{(l)}(x,u)\in \Pi(F^{(l)}(x),K^{(l)}(1+\|x\|)u)\subseteq F^{(l)}(x).
\end{equation*} 
Therefore, for every $x\in\mathbb{R}^d$, $f^{(l)}(x,U)\subseteq F^{(l)}(x)$. By Lemma \ref{ctem}$(c)$, we know that for every $x\in\mathbb{R}^d$, $\sup_{y\in F^{(l)}(x)}\|y\|\leq K^{(l)}(1+\|x\|)$. Thus for every $x\in\mathbb{R}^d$,  for any $y\in F^{(l)}(x)$, there exists $u\in U$, such that $y=K^{(l)}(1+\|x\|)u$. For such a $u\in U$, by definition of $\Pi$, we have that $\Pi(F^{(l)}(x),K^{(l)}(1+\|x\|)u)=y$ and hence $f^{(l)}(x,u)=\sigma\left(\Pi\left(F^{(l)}(x),K^{(l)}(1+\|x\|)u\right)\right)=y$. Therefore for every $x\in\mathbb{R}^d$, $F^{(l)}(x)\subseteq f^{(l)}(x,U)$ from which it follows that $f^{(l)}(x,U)=F^{(l)}(x)$, for every $x\in\mathbb{R}^d$.
\item [(b)] Follows from part $(a)$ of this lemma and Lemma \ref{ctem}$(c)$.
\item [(c)] Fix $x_0\in\mathbb{R}^d$. Since $F^{(l)}$ is a locally Lipschitz continuous set-valued map (see Lemma \ref{ctem}$(d)$), we obtain $\delta_{F^{(l)}}>0$ and $L_{F^{(l)}}>0$ (depending on $x_0$) such that for every $x,x'\in x_0+\delta_{F^{(l)}}U$, $\bm{\mathrm{H}}(F^{(l)}(x),F^{(l)}(x'))\leq L_{F^{(l)}}\|x-x'\|$. Set $\delta^{(l)}:=\delta_{F^{(l)}}$ and $L^{(l)}:=5d(L_{F^{(l)}}+K^{(l)})$. Then, for any $x,x'\in x_0+\delta^{(l)}U$, for every $u\in U$,
\begin{align}
\left\|f^{(l)}(x,u)-f^{(l)}(x',u)\right\|&=\left\|\sigma\left(\Pi\left(F^{(l)}(x),K^{(l)}(1+\|x\|)u\right)\right)-\sigma\left(\Pi\left(F^{(l)}(x'),K^{(l)}(1+\|x'\|)u\right)\right)\right\|\nonumber\\
                                                            \label{tmp1} 
                                                            &\leq d\ \bm{\mathrm{H}}\left(\Pi\left(F^{(l)}(x),K^{(l)}(1+\|x\|)u\right),\Pi\left(F^{(l)}(x'),K^{(l)}(1+\|x'\|)u\right)\right)\\
                                                            \label{tmp2}
                                                            &\leq 5d\ \left (\bm{\mathrm{H}}\left( F^{(l)}(x),F^{(l)}(x') \right) +\left \|K^{(l)}\left( 1+\|x\| \right)u-K^{(l)}\left( 1+\|x'\| \right)u \right\| \right)\\
                                                            &= 5d\ \left( \bm{\mathrm{H}}\left( F^{(l)}(x),F^{(l)}(x') \right)+K^{(l)}\left|\|x\|-\|x'\|\right|\|u\| \right)\nonumber\\
                                                            &\leq 5d\ \left( \bm{\mathrm{H}}\left( F^{(l)}(x),F^{(l)}(x') \right)+K^{(l)}\|x-x'\|\right)\nonumber\\
                                                            \label{tmp3}
                                                            &\leq 5d\ \left(L_{F^{(l)}}\|x-x'\|+K^{(l)}\|x-x'\|\right)\\
                                                            &= L^{(l)} \|x-x'\|\nonumber,
\end{align}
where, $\eqref{tmp1}$ follows from Lemma \ref{prmtl}$(a)$, $\eqref{tmp2}$ follows from Lemma \ref{prmtl}$(b)$ and $\eqref{tmp3}$ follows from our choice of $\delta^{(l)}$ and local Lipschitz continuity of $F^{(l)}$.\qed
\end{itemize}
\end{proof}

The set-valued map in recursion $\eqref{rec}$ can be replaced with the parametrization obtained in the lemma above as explained below.
\begin{itemize}
\item [(1)] For every $l\geq 1$, by Lemma \ref{ctem}$(b)$, we know that for every $x\in\mathbb{R}^d$, $F(x)\subseteq F^{(l)}(x)$. Therefore for every $l\geq1$, for every $n\geq0$, 
\begin{equation*}
X_{n+1}-X_{n}-a(n)M_{n+1}\in a(n)F^{(l)}(X_n).
\end{equation*}
\item [(2)] For every $l\geq1$, by Lemma \ref{param}$(a)$, we know that for every $x\in\mathbb{R}^d$, $F^{(l)}(x)=f^{(l)}(x,U)$. It can now be shown that for every $n\geq0$, there exists a $U$-valued random variable on $\Omega$, say $U^{(l)}_n$, such that for every $\omega\in\Omega$, for every $n\geq0$, 
\begin{equation}
\label{rec1}
X_{n+1}(\omega)-X_{n}(\omega)-a(n)M_{n+1}(\omega)= a(n)f^{(l)}(X_n(\omega),U^{(l)}_n(\omega))
\end{equation}
(for a proof see \cite[Lemma~6.1]{vin1t}).
\end{itemize}
%----------------------------------------------------------------------------------------------------------------------------------------------------------------------------------------------------------------------------------

\subsection{Solutions of the mean field and their approximation}
\label{sol_apprx}
In this section, we shall approximate the solutions of mean field (that is DI $\eqref{mfld}$) with the solutions of DI given by,
\begin{equation}
\label{amfld}
\frac{dx}{dt}\in F^{(l)}(x),
\end{equation}
for some $l\geq1$. In order to accomplish this we need some notations which are introduced next. 

For every $T>0$ and for every $x\in\mathbb{R}^d$,  let $S(T,x)$ denote the set of solutions of DI $\eqref{mfld}$ on $[0,T]$. Formally, 
\begin{small}
\begin{equation}
\label{ss0}
S(T,x):=\left\{\bm{\mathrm{x}}:[0,T]\rightarrow\mathbb{R}^d\ :\ \bm{\mathrm{x}}\ \text{is absolutely continuous with $\bm{\mathrm{x}}(0)=x$ and for }a.e.\ t\in[0,T], \frac{d\bm{\mathrm{x}}(t)}{dt}\in F(\bm{\mathrm{x}}(t))\right\}.
\end{equation}
\end{small}
Since $F$ is a Marchaud map, we have that for every $T>0$ and for every $x\in\mathbb{R}^d$, $S(T,x)\neq\emptyset$. Similarly for every $l\geq1$, for every $T>0$ and for every $x\in\mathbb{R}^d$, let $S^{(l)}(T,x)$ denote the set of solutions of DI $\eqref{amfld}$ on $[0,T]$. Formally,
\begin{small}
\begin{equation}
\label{ss1}
S^{(l)}(T,x)\!:=\!\left\{\!\bm{\mathrm{x}}:[0,T]\rightarrow\mathbb{R}^d\ \!\!:\ \!\! \bm{\mathrm{x}}\ \text{is absolutely continuous with $\bm{\mathrm{x}}(0)=x$ and for }a.e.\ t\in[0,T], \frac{d\bm{\mathrm{x}}(t)}{dt}\!\in\! F^{(l)}(\bm{\mathrm{x}}(t))\!\right\}.
\end{equation}
\end{small}
From Lemma \ref{ctem}, we know that for every $l\geq1$, $F^{(l)}$ is a Marchaud map and hence for every $T>0$ and for every $x\in\mathbb{R}^d$, $S^{(l)}(T,x)\neq\emptyset$. 

For any $Y\subseteq\mathbb{R}^d$, for any $T>0$, define $S(T,Y):=\cup_{y\in Y}S(T,y)$. Similarly, for every $l\geq1$, $S^{(l)}(T,Y):=\cup_{y\in Y}S^{(l)}(T,y)$. 

The next lemma summarizes some important relationships between the solutions of DI $\eqref{mfld}$ and those of DI $\eqref{amfld}$ needed later. It also states that for large enough $l\geq1$, the solutions of DI $\eqref{amfld}$  are within an $\epsilon$-neighborhood of the solutions of DI $\eqref{mfld}$ for every initial condition lying in a compact subset of $\mathbb{R}^d$.

\begin{lemma}
\label{sola}
For every $T>0$,
\begin{itemize}
\item [(a)] for every $l\geq1$, for every $x\in\mathbb{R}^d$, $S(T,x)\subseteq S^{(l+1)}(T,x)\subseteq S^{(l)}(T,x)$.
\item [(b)] for every $x\in\mathbb{R}^d$, $S(T,x)=\cap_{l\geq1}S^{(l)}(T,x)$.
\item [(c)] for any $Y\subseteq\mathbb{R}^d$, $S(T,Y)=\cap_{l\geq1}S^{(l)}(T,Y)$.
\item [(d)] for every $Y\subseteq\mathbb{R}^d$ compact, $S(T,Y)$ is a compact subset of $\mathcal{C}([0,T],\mathbb{R}^d)$ (the vector space of $\mathbb{R}^d$-valued continuous functions on $[0,T]$).
\item [(e)] for every $Y\subseteq\mathbb{R}^d$ compact, for every $l\geq1$, $S^{(l)}(T,Y)$ is a compact subset of $\mathcal{C}([0,T],\mathbb{R}^d)$.
\item [(f)] for every $Y\subseteq\mathbb{R}^d$ compact, for every $\epsilon>0$, there exists $l'\geq1$, such that for every $l\geq l'$, for every $\bm{\mathrm{x}}^{(l)}\in S^{(l)}(T,Y)$, there exists $\bm{\mathrm{x}}\in S(T,Y)$, such that $\sup_{t\in[0,T]}\|\bm{\mathrm{x}}(t)-\bm{\mathrm{x}}^{(l)}(t)\|< \epsilon$.
\end{itemize}
\end{lemma}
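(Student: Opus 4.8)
I would establish (a)--(f) in order. I expect parts (a)--(c) to be elementary consequences of the nesting $F(x)\subseteq F^{(l+1)}(x)\subseteq F^{(l)}(x)$ and of the identity $F(x)=\cap_{l\geq1}F^{(l)}(x)$ from Lemma~\ref{ctem}; parts (d)--(e) to be instances of the standard compactness theory for solution sets of Marchaud differential inclusions; and part (f) to follow from (a), (c) and (e) by an Arzel\`a--Ascoli argument, so that no differential-inclusion analysis beyond that used for (d)--(e) is required. The main obstacle I anticipate is the closedness (convergence-theorem) ingredient of (d)--(e); everything else is bookkeeping with the nesting of the $F^{(l)}$.

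For (a) I would simply note that any $\bm{\mathrm{x}}\in S(T,x)$ is absolutely continuous with $\bm{\mathrm{x}}(0)=x$ and satisfies, a.e.\ on $[0,T]$, $\frac{d\bm{\mathrm{x}}(t)}{dt}\in F(\bm{\mathrm{x}}(t))\subseteq F^{(l+1)}(\bm{\mathrm{x}}(t))\subseteq F^{(l)}(\bm{\mathrm{x}}(t))$ by Lemma~\ref{ctem}(b), which gives both inclusions at once. For (b), ``$\subseteq$'' is immediate from (a); for ``$\supseteq$'', given $\bm{\mathrm{x}}\in\cap_{l}S^{(l)}(T,x)$, for each $l$ there is a null set $N_l\subseteq[0,T]$ off which $\frac{d\bm{\mathrm{x}}(t)}{dt}\in F^{(l)}(\bm{\mathrm{x}}(t))$, so off the still-null union $\cup_l N_l$ we get $\frac{d\bm{\mathrm{x}}(t)}{dt}\in\cap_l F^{(l)}(\bm{\mathrm{x}}(t))=F(\bm{\mathrm{x}}(t))$ by Lemma~\ref{ctem}(e), whence $\bm{\mathrm{x}}\in S(T,x)$. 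For (c), ``$\subseteq$'' follows by taking unions over $y\in Y$ in (a); for ``$\supseteq$'', any $\bm{\mathrm{x}}\in\cap_l S^{(l)}(T,Y)$ has $\bm{\mathrm{x}}(0)\in Y$ (since $\bm{\mathrm{x}}\in S^{(1)}(T,Y)$), and because the initial point pins down which $S^{(l)}(T,y)$ can contain $\bm{\mathrm{x}}$, it lies in $\cap_l S^{(l)}(T,\bm{\mathrm{x}}(0))=S(T,\bm{\mathrm{x}}(0))\subseteq S(T,Y)$ by (b).

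For (d) and (e) I would fix a compact $Y$, set $R_0:=\sup_{y\in Y}\|y\|$, and use the linear growth bound --- with constant $K$ for $F$, and with the single constant $K^{(1)}$ for every $F^{(l)}$, since $F^{(l)}(x)\subseteq F^{(1)}(x)$ so one may take $K^{(l)}\leq K^{(1)}$ --- together with Gronwall's inequality to obtain a uniform bound $\|\bm{\mathrm{x}}(t)\|\leq R:=(R_0+KT)e^{KT}$ on $[0,T]$, hence a uniform Lipschitz bound $\|\frac{d\bm{\mathrm{x}}(t)}{dt}\|\leq K(1+R)$ a.e., for every solution of DI~\eqref{mfld} or of any DI~\eqref{amfld} starting in $Y$. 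Arzel\`a--Ascoli then yields relative compactness of $S(T,Y)$ and of each $S^{(l)}(T,Y)$ in $\mathcal{C}([0,T],\mathbb{R}^d)$, and closedness follows from the convergence theorem for differential inclusions with upper semicontinuous, convex- and compact-valued right-hand sides (see \cite[Ch.~2]{aubindi}): a uniform limit of solutions, whose derivatives one may take to converge weakly in $L^2([0,T],\mathbb{R}^d)$, is again a solution with the same right-hand side. Combining relative compactness and closedness gives (d) and (e).

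Finally, for (f) I would argue by contradiction. If it failed for some compact $Y$ and some $\epsilon>0$, there would be $l_k\to\infty$ and $\bm{\mathrm{x}}^{(l_k)}\in S^{(l_k)}(T,Y)$ with $\inf_{\bm{\mathrm{x}}\in S(T,Y)}\sup_{t\in[0,T]}\|\bm{\mathrm{x}}(t)-\bm{\mathrm{x}}^{(l_k)}(t)\|\geq\epsilon$ for all $k$. The uniform bounds established for (e), now applied with the single constant $K^{(1)}$, make $\{\bm{\mathrm{x}}^{(l_k)}\}_k$ uniformly bounded and equi-Lipschitz on $[0,T]$, so a subsequence $\bm{\mathrm{x}}^{(l_{k_j})}$ converges uniformly to some $\bm{\mathrm{x}}^*\in\mathcal{C}([0,T],\mathbb{R}^d)$. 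Fix $m\geq1$: once $l_{k_j}\geq m$, part (a) gives $\bm{\mathrm{x}}^{(l_{k_j})}\in S^{(l_{k_j})}(T,Y)\subseteq S^{(m)}(T,Y)$, and since $S^{(m)}(T,Y)$ is closed by (e) we obtain $\bm{\mathrm{x}}^*\in S^{(m)}(T,Y)$; as $m$ is arbitrary, $\bm{\mathrm{x}}^*\in\cap_{m\geq1}S^{(m)}(T,Y)=S(T,Y)$ by (c). But then $\epsilon\leq\sup_{t\in[0,T]}\|\bm{\mathrm{x}}^*(t)-\bm{\mathrm{x}}^{(l_{k_j})}(t)\|\to0$, a contradiction, proving (f). Thus, once the closedness/convergence-theorem ingredient of (d)--(e) is in hand, (f) is pure compactness bookkeeping, exploiting the nesting of the $F^{(l)}$ and the identity $S(T,Y)=\cap_m S^{(m)}(T,Y)$ rather than any fresh limiting argument.
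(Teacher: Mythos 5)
Your proposal is correct and follows essentially the same route as the paper: parts (a)--(c) from the nesting and intersection properties of Lemma~\ref{ctem}, parts (d)--(e) by compactness of the solution sets (which the paper simply cites from \cite[Lemma~3.1]{benaim1}, whereas you re-derive it via Gronwall, Arzel\`a--Ascoli and the convergence theorem for u.s.c.\ convex compact-valued inclusions), and part (f) by the same contradiction argument, extracting a uniformly convergent subsequence whose limit must lie in every $S^{(l)}(T,Y)$ and hence in $S(T,Y)=\cap_{l\geq1}S^{(l)}(T,Y)$, contradicting the assumed $\epsilon$-separation.
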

\begin{proof}
Fix $T>0$.
\begin{itemize}
\item [(a)] Fix $l\geq1$ and $x\in\mathbb{R}^d$. Let $\bm{\mathrm{x}}\in S(T,x)$. Then we have that $\bm{\mathrm{x}}$ is absolutely continuous with $\bm{\mathrm{x}}(0)=x$ and for $a.e.\ t\in[0,T]$, $\frac{d\bm{\mathrm{x}}(t)}{dt}\in F(\bm{\mathrm{x}}(t))$. By Lemma \ref{ctem}$(b)$, we know that for every $t\in [0,T]$, $F(\bm{\mathrm{x}}(t))\subseteq F^{(l+1)}(\bm{\mathrm{x}}(t))$. Therefore for $a.e.\ t\in[0,T]$, $\frac{d\bm{\mathrm{x}}(t)}{dt}\in F^{(l+1)}(\bm{\mathrm{x}}(t))$, from which we get that $\bm{\mathrm{x}}\in S^{(l+1)}(T,x)$.
Hence $S(T,x)\subseteq S^{(l+1)}(T,x)$. Using the fact that for every $x'\in\mathbb{R}^d$, $F^{(l+1)}(x')\subseteq F^{(l)}(x')$ (see Lemma \ref{ctem}$(b)$), a similar argument gives us that $S^{(l+1)}(T,x)\subseteq S^{(l)}(T,x)$.
\item [(b)] Fix $x\in\mathbb{R}^d$. From part $(a)$ of this lemma we have that $S(T,x)\subseteq \cap_{l\geq1}S^{(l)}(T,x)$. Let $\bm{\mathrm{x}}\in \cap_{l\geq1}S^{(l)}(T,x)$. Then $\bm{\mathrm{x}}$ is absolutely continuous with $\bm{\mathrm{x}}(0)=x$ and for every $l\geq1$, for $a.e.\ t\in[0,T]$, $\frac{d\bm{\mathrm{x}}(t)}{dt}\in F^{(l)}(\bm{\mathrm{x}}(t))$. 
Thus for $a.e.\ t\in[0,T]$, for every $l\geq1$, $\frac{d\bm{\mathrm{x}}(t)}{dt}\in F^{(l)}(\bm{\mathrm{x}}(t))$. Hence for $a.e\ t\in[0,T]$, $\frac{d\bm{\mathrm{x}}(t)}{dt}\in \cap_{l\geq1}F^{(l)}(\bm{\mathrm{x}}(t))=F(\bm{\mathrm{x}}(t))$, where the equality follows from Lemma \ref{ctem}$(e)$. Therefore $\bm{\mathrm{x}}\in S(T,x)$, from which we get that $\cap_{l\geq1}S^{(l)}(T,x)\subseteq S(T,x)$.
\item [(c)] Follows from part $(a)$ and $(b)$ of this lemma.
\item [(d) \& (e)] Follows from \cite[Lemma~3.1]{benaim1}.
\item [(f)] Suppose not. Then there exists $Y\subseteq\mathbb{R}^d$ compact and $\epsilon>0$, such that for every $l'\geq1$, there exists $l\geq l'$ and $\bm{\mathrm{x}}^{(l)}\in S^{(l)}(T,Y)$, such that $\bm{d}(\bm{\mathrm{x}}^{(l)},S(T,Y))\geq\epsilon$, where $\bm{d}(\bm{\mathrm{x}}^{(l)},S(T,Y)):=\inf_{\bm{\mathrm{x}}\in S(T,Y)}\sup_{t\in[0,T]}\|\bm{\mathrm{x}}^{(l)}(t)-\bm{\mathrm{x}}(t)\|$. Thus we can obtain a sequence of solutions, say $\{\bm{\mathrm{x}}^{(l_k)}\}_{k\geq1}$, such that for every $k\geq1$, $1\leq l_k<l_{k+1}$ and $\bm{\mathrm{x}}^{(l_k)}\in S^{(l_k)}(T,Y)$ with $\bm{d}(\bm{\mathrm{x}}^{(l_k)},S(T,Y))\geq\epsilon$. From part $(a)$ of this lemma, we have that for every $k\geq1$, $S^{(l_k)}(T,Y)\subseteq S^{(1)}(T,Y)$ and hence $\{\bm{\mathrm{x}}^{(l_k)}\}_{k\geq1}\subseteq S^{(1)}(T,Y)$. Since $Y\subseteq\mathbb{R}^d$ is compact, by part $(e)$ of this lemma we know that $S^{(1)}(T,Y)$ is a compact subset of $\mathcal{C}([0,T],\mathbb{R}^d)$. Thus there exists a subsequence of $\{\bm{\mathrm{x}}^{(l_k)}\}_{k\geq1}$, say $\{\bm{\mathrm{x}}^{(l_{k_j})}\}_{j\geq1}$ such that $\bm{\mathrm{x}}^{(l_{k_j})}\to \bm{\mathrm{x}}^*$ as $j\to\infty$ in $\mathcal{C}([0,T],\mathbb{R}^d)$ and $\bm{\mathrm{x}}^*\in S^{(1)}(T,Y)$. Since for every $j\geq1$, $\bm{d}(\bm{\mathrm{x}}^{(l_{k_j})},S(T,Y))\geq\epsilon$, we get that $\bm{d}(\bm{\mathrm{x}}^*,S(T,Y))\geq\epsilon$ and hence $\bm{\mathrm{x}}^*\notin S(T,Y)$. 
From part $(a)$ of this lemma, we get that for every $l\geq1$, for $J:=\min\{j\geq1:l_{k_j}\geq l\}$, $\{\bm{\mathrm{x}}^{(l_{k_j})}\}_{j\geq J}\subseteq S^{(l)}(T,Y)$. Further by part $(e)$ of this lemma we have that for every $l\geq 1$, $S^{(l)}(T,Y)$ is a compact subset of $\mathcal{C}([0,T],\mathbb{R}^d)$. Thus for every $l\geq1$, $\bm{\mathrm{x}}^*\in S^{(l)}(T,Y)$ and hence $\bm{\mathrm{x}}^*\in\cap_{l\geq1}S^{(l)}(T,Y)=S(T,Y)$ (see part $(c)$ of this lemma). This leads to a contradiction.\qed 
\end{itemize}
\end{proof}

The part $(f)$ of the above lemma provides the necessary approximation result. Further since the set-valued maps $F^{(l)}$ admit a single-valued parametrization ($f^{(l)}$ as in Lemma \ref{param}), a solution of DI $\eqref{amfld}$ can be viewed as a solution of the ordinary differential equation (o.d.e.) given by,
\begin{equation}
\label{pmfld}
\frac{dx}{dt}=f^{(l)}(x,u(t)), 
\end{equation}
 for some $u:[0,\infty)\rightarrow U$ measurable and vice versa. The lemma below summarizes some useful results on the solutions of o.d.e. $\eqref{pmfld}$ and its vector field.
 
 \begin{lemma}\label{pvf} For every $l\geq1$,
 \begin{itemize}
 \item [(a)]  for every $T>0$, for any $u:[0,T]\rightarrow U$ measurable, for every initial condition, the set of solutions of o.d.e. $\eqref{pmfld}$ is non-empty. That is, for every $x_0\in\mathbb{R}^d$, there exists $\bm{\mathrm{x}}:[0,T]\rightarrow \mathbb{R}^d$ such that, $\bm{\mathrm{x}}$ is absolutely continuous, $\bm{\mathrm{x}}(0)=x_0$ and for $a.e.\ t\in [0,T]$, $\frac{d\bm{\mathrm{x}}(t)}{dt}=f^{(l)}(\bm{\mathrm{x}}(t),u(t))$.
 
 \item [(b)] for every $T>0$, for every $Y\subseteq\mathbb{R}^d$ compact, there exists $C_1(Y,T,l)>0$, such that for every $u:[0,T]\rightarrow U$ measurable, every solution of o.d.e. $\eqref{pmfld}$ with initial condition in $Y$, say $\bm{\mathrm{x}}:[0,T]\rightarrow\mathbb{R}^d$, satisfies,
 \begin{equation*}
 \sup_{t\in[0,T]}\|\bm{\mathrm{x}}(t)\|\leq C_1(Y,T,l).
 \end{equation*} 
 
 \item [(c)] for any $Y\subseteq\mathbb{R}^d$ compact, there exists $L(Y,l)>0$, such that for every $T>0$, for every $u:[0,T]\rightarrow U$, the map $h:Y\times[0,T]\rightarrow\mathbb{R}^d$, given by $h(x,t):=f^{(l)}(x,u(t))$ for every $(x,t)\in Y\times [0,T]$, satisfies,
 \begin{equation*}
 \|h(x,t)-h(x',t)\|\leq L(Y,l)\|x-x'\|,
 \end{equation*}      
 for every $x,\ x'\in Y$ and for every $t\in[0,T]$.
 \item [(d)] for every $T>0$, for every $u:[0,T]\rightarrow U$, for every initial condition, o.d.e. $\eqref{pmfld}$ admits a unique solution.
\end{itemize}
\end{lemma}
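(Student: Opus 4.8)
The plan is to deduce all four statements from the classical theory of ordinary differential equations with a Carath\'eodory right-hand side, the only structural inputs being the three properties of the parametrization $f^{(l)}$ recorded in Lemma \ref{param}: joint continuity on $\mathbb{R}^d\times U$; the linear growth bound $\|f^{(l)}(x,u)\|\leq K^{(l)}(1+\|x\|)$, which holds uniformly over $u\in U$; and the local Lipschitz property in $x$, which is also uniform over $u\in U$. Throughout fix $l\geq1$.

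For part (a), fix $T>0$, a measurable $u:[0,T]\to U$ and $x_0\in\mathbb{R}^d$, and set $g(x,t):=f^{(l)}(x,u(t))$. For each fixed $x$ the map $t\mapsto g(x,t)$ is measurable, being the composition of the continuous map $f^{(l)}(x,\cdot)$ with the measurable map $u$; for each $t$ the map $x\mapsto g(x,t)$ is continuous; and $\|g(x,t)\|\leq K^{(l)}(1+\|x\|)$, which on any bounded set of $x$'s is dominated by a constant, hence by an $L^1[0,T]$ function. These are precisely the Carath\'eodory conditions, so Carath\'eodory's existence theorem yields a local absolutely continuous solution, and the linear growth bound together with Gronwall's inequality rules out finite-time blow-up of any maximal solution, so it extends to all of $[0,T]$. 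For part (b), write the solution in integral form $\bm{\mathrm{x}}(t)=x_0+\int_0^t f^{(l)}(\bm{\mathrm{x}}(s),u(s))\,ds$, estimate $\|\bm{\mathrm{x}}(t)\|\leq\|x_0\|+K^{(l)}T+K^{(l)}\int_0^t\|\bm{\mathrm{x}}(s)\|\,ds$ using the linear growth bound, and apply Gronwall's inequality to get $\sup_{t\in[0,T]}\|\bm{\mathrm{x}}(t)\|\leq(\sup_{y\in Y}\|y\|+K^{(l)}T)e^{K^{(l)}T}$. Taking $C_1(Y,T,l)$ to be this last quantity gives a bound depending only on $Y$, $T$ and $l$ (through $K^{(l)}$), not on $u$ or on the particular solution.

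Part (c) is the step I expect to require the most care, since Lemma \ref{param}(c) only furnishes \emph{local} Lipschitz continuity in $x$ and I must upgrade it to a genuine Lipschitz estimate on the whole compact set $Y$, uniform in $u$ and independent of $T$. First enlarge $Y$ to the closed ball $B:=RU$ with $R:=\sup_{y\in Y}\|y\|$; it suffices to prove the bound on $B\supseteq Y$. By Lemma \ref{param}(c) every $x\in B$ has a radius $\delta_x>0$ and a constant $L_x>0$ such that $f^{(l)}(\cdot,u)$ is $L_x$-Lipschitz on $x+\delta_x U$ for every $u\in U$; the open balls $\{\mathrm{int}(x+\tfrac{\delta_x}{2}U):x\in B\}$ cover the compact $B$, so finitely many of them, centred at $x_1,\dots,x_m$, already cover $B$. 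Put $\delta^*:=\min_i\delta_{x_i}/2$, $L^*:=\max_i L_{x_i}$ and $M^*:=\sup_{(x,u)\in B\times U}\|f^{(l)}(x,u)\|$, the last being finite by the linear growth bound and compactness of $B$. If $x,x'\in B$ with $\|x-x'\|<\delta^*$ then both lie in a common ball $x_i+\delta_{x_i}U$, whence $\|f^{(l)}(x,u)-f^{(l)}(x',u)\|\leq L^*\|x-x'\|$; if $\|x-x'\|\geq\delta^*$ then crudely $\|f^{(l)}(x,u)-f^{(l)}(x',u)\|\leq 2M^*\leq(2M^*/\delta^*)\|x-x'\|$. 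Hence $L(Y,l):=\max\{L^*,2M^*/\delta^*\}$ works, and it is manifestly independent of $T$ and of $u$.

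Finally, for part (d), fix $T>0$ and $u:[0,T]\to U$, and suppose $\bm{\mathrm{x}}_1,\bm{\mathrm{x}}_2$ both solve o.d.e. \eqref{pmfld} with $\bm{\mathrm{x}}_1(0)=\bm{\mathrm{x}}_2(0)=x_0$. By part (b) with $Y=\{x_0\}$, both solutions stay inside the ball $B$ of radius $C_1(\{x_0\},T,l)$, and by part (c) the map $f^{(l)}(\cdot,u)$ is $L(B,l)$-Lipschitz on $B$ uniformly in $u$. Subtracting the integral equations and estimating gives $\|\bm{\mathrm{x}}_1(t)-\bm{\mathrm{x}}_2(t)\|\leq L(B,l)\int_0^t\|\bm{\mathrm{x}}_1(s)-\bm{\mathrm{x}}_2(s)\|\,ds$ for $t\in[0,T]$, so Gronwall's inequality forces $\bm{\mathrm{x}}_1\equiv\bm{\mathrm{x}}_2$. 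Apart from the finite-covering argument in part (c), every step is a routine application of Carath\'eodory's existence theorem and of Gronwall's inequality.
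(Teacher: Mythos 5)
Your proposal is correct and follows essentially the same route as the paper: Carath\'eodory's existence theorem plus the linear growth bound for (a), the integral equation and Gronwall's inequality for (b), a finite-subcover argument upgrading the local Lipschitz property of Lemma \ref{param}(c) to a uniform one on a compact set for (c), and Gronwall-based uniqueness for (d). The only (harmless) variation is in part (c), where for pairs at distance at least $\delta^*$ you use the crude bound $2M^*/\delta^*$ with $M^*:=\sup_{(x,u)\in B\times U}\|f^{(l)}(x,u)\|$, whereas the paper maximizes the continuous ratio $\sup_{u\in U}\|f^{(l)}(y_1,u)-f^{(l)}(y_2,u)\|/\|y_1-y_2\|$ over the compact set of such pairs; both yield a valid constant independent of $T$ and $u$.
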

\begin{proof} Fix $l\geq1$.
\begin{itemize}
\item [(a)] Fix $T>0$ and $u:[0,T]\rightarrow U$ measurable. The proof of this part is a direct application of \cite[Thm.~3.8]{regan}. We show here that the sufficient conditions required to apply the said theorem are satisfied by the vector field of the o.d.e. $\eqref{pmfld}$. First, we show that $f^{(l)}(\cdot,u(\cdot))$ is a Caratheodary function (see \cite[Defn.~3.2]{regan}). By Lemma \ref{param}, it is clear that for every $t\in[0,T]$, the map $x\rightarrow f^{(l)}(x,u(t))$ is continuous and for every $x\in\mathbb{R}^d$, the map $t\rightarrow f^{(l)}(x,u(t))$ is measurable. Further by Lemma \ref{param}$(b)$, we have that for any $c>0$, for every $x\in\mathbb{R}^d$ with $\|x\|\leq c$, for every $t\in[0,T]$, $\|f^{(l)}(x,t)\|\leq K^{(l)}(1+c)$. Thus $f^{(l)}(\cdot,u(\cdot))$ is a Caratheodary function. Final condition to verify is on the rate of growth of solutions. By Lemma \ref{param}$(b)$, $\|f^{(l)}(x,u(t))\|\leq \psi(\|x\|):=K^{(l)}(1+\|x\|)$. The function $\psi:[0,\infty)\rightarrow [0,\infty)$, is clearly positive everywhere and the function $\frac{1}{\psi}$ is locally integrable on $[0,\infty)$. A simple argument gives us that  for every $r>0$, the integral $\int_{r}^{\infty}\frac{d\tilde{r}}{\psi(\tilde{r})}$ can be lower bounded by the tail of $\frac{1}{K^{(l)}}\sum_{n=1}^{\infty}\frac{1}{n}$. Hence for every $r>0$, $\int_{r}^{\infty}\frac{d\tilde{r}}{\psi(\tilde{r})}=\infty$. Now \cite[Thm.~3.8]{regan} can be applied to obtain the required result.
\item [(b)] Fix $T>0$ and $Y\subseteq\mathbb{R}^d$ compact. Since $Y$ is compact, there exists $r>0$ such that $\sup_{y\in Y}\|y\|\leq r$. Set $C_1(Y,T,l):=(r+K^{(l)}T)e^{K^{(l)}T}$, where $K^{(l)}>0$ is as in Lemma \ref{param}$(b)$. For some $u:[0,T]\rightarrow U$ measurable and for some $x_0\in Y$, let $\bm{\mathrm{x}}:[0,T]\rightarrow \mathbb{R}^d$ be a solution of o.d.e. $\eqref{pmfld}$ with initial condition $x_0$. Then, for every $t\in[0,T]$, $\bm{\mathrm{x}}(t)=x_0+\int_{0}^{t}f^{(l)}(x(s),u(s))ds$ and hence for every $t\in[0,T]$
\begin{align}
\|\bm{\mathrm{x}}(t)\|&\leq \|x_0\|+\int_{0}^{t}\|f^{(l)}(\bm{\mathrm{x}}(s),u(s))\|ds\nonumber\\
                                      \label{tmp4}
                                     &\leq r+K^{(l)}T+K^{(l)}\int_{0}^{t}\|\bm{\mathrm{x}}(s)\|ds
\end{align}
where, $\eqref{tmp4}$ follows from the fact that $x_0\in Y$ and Lemma \ref{param}$(b)$. The required bound follows from $\eqref{tmp4}$ and Gronwall's result (see \cite[Sec.~11.2.1, Lemma~6]{borkartxt}).                                   
\item [(c)] Fix $Y\subseteq\mathbb{R}^d$ compact. It is enough to show that there exists $L(Y,l)>0$, such that for every $y_1,\ y_2\in Y$, $\sup_{u\in U}\|f^{(l)}(y_1,u)-f^{(l)}(y_2,u)\|\leq L(Y,l)\|y_1-y_2\|$. From Lemma \ref{param}$(c)$, we know that for every $x_0\in Y$, there exists $\delta(x_0,l)>0$ and $L(x_0,l)>0$, such that for every $x,x'\in x_0+\delta(x_0,l)U$, for every $u\in U$, $\|f^{(l)}(y_1,u)-f^{(l)}(y_2,u)\|\leq L(Y,l)\|y_1-y_2\|$. Let $\mathcal{G}:=\{x_0+\frac{\delta(x_0,l)}{2}\mathring{U}:x_0\in Y\}$, where, $\mathring{U}$ denotes the interior of $U$. Since $Y$ is compact and $\mathcal{G}$ is an open cover of $Y$, there exists $\{x_1,x_2,\dots,x_k\}\subseteq Y$, such that $Y\subseteq\cup_{i=1}^{k}(x_i+\frac{\delta(x_i,l)}{2}\mathring{U})$. Set $\delta(Y,l):=\min_{1\leq i\leq k}\frac{\delta(x_i,l)}{2}$ and $L_0(Y,l):=\max_{1\leq i\leq k}L(x_i,l)$.  

Let $y_1,y_2\in (Y\times Y)\cap\{(y_1,y_2): \|y_1-y_2\|<\delta(Y,l)\}$. Then we know that there exists $i\in\{1,\dots,k\}$, such that $y_1\in x_i+\frac{\delta(x_i,l)}{2}\mathring{U}$. Further since $\|y_1-y_2\|<\delta(Y,l)\leq \frac{\delta(x_i,l)}{2}$, we have that $y_2\in x_i +\delta(x_i,l)\mathring{U}$. Therefore $y_1,y_2\in x_i+\delta(x_i,l)U$ and hence, for every $u\in U$, $\|f^{(l)}(y_1,u)-f^{(l)}(y_2,u)\|\leq L(x_i,l)\|y_1-y_2\|\leq L_0(Y,l)\|y_1-y_2\|$. Thus for every $y_1,y_2\in (Y\times Y)\cap\{(y_1,y_2): \|y_1-y_2\|<\delta(Y,l)\}$, $\sup_{u\in U}\|f^{(l)}(y_1,u)-f^{(l)}(y_2,u)\|\leq L_0(Y,l)\|y_1-y_2\|$.     

Let $E:=(Y\times Y)\cap \{(y_1,y_2):\|y_1-y_2\|\geq\delta(Y,l)\}$. By Lemma \ref{param}$(b)$, the map $(y_1,y_2)\in Y\times Y\rightarrow \sup_{u\in U}\|f^{(l)}(y_1,u)-f^{(l)}(y_2,u)\|$ is well defined. Further using the fact that for every $(y_1,y_2), (y_1',y_2')\in Y\times Y$, $|\sup_{u\in U}\|f^{(l)}(y_1,u)-f^{(l)}(y_2,u)\|-\sup_{u\in U}\|f^{(l)}(y_1',u)-f^{(l)}(y_2',u)\||\leq\sup_{u\in U}\|f^{(l)}(y_1,u)-f^{(l)}(y_1',u)\|+\sup_{u\in U}\|f^{(l)}(y_2,u)-f^{(l)}(y_2',u)\|$ and Lemma \ref{param}$(c)$, we have that the map $(y_1,y_2)\rightarrow\sup_{u\in U}\|f^{(l)}(y_1,u)-f^{(l)}(y_2,u)\|$ is continuous. Thus the map $(y_1,y_2)\in E\rightarrow \frac{\sup_{u\in U}\|f^{(l)}(y_1,u)-f^{(l)}(y_2,u)\|}{\|y_1-y_2\|}$ is a continuous function on a compact set $E$ and hence achieves a maximum, say $L_1(Y,l)\geq0$. Therefore for every $(y_1,y_2)\in (Y\times Y)\cap \{(y_1,y_2):\|y_1-y_2\|\geq\delta(Y,l)\}$, $\sup_{u\in U}\|f^{(l)}(y_1,u)-f^{(l)}(y_2,u)\|\leq L_1(y,l)\|y_1-y_2\|$.

Thus from the arguments in the two preceding paragraphs we have that there exists $L(Y,l):=\max\{L_0(Y,l),L_1(Y,l)\}$, such that, for every $y_1,y_2\in Y$, $\sup_{u\in U}\|f^{(l)}(y_1,u)-f^{(l)}(y_2,u)\|\leq L(Y,l)\|y_1-y_2\|$.  
\item [(d)] Using parts $(b)$ and $(c)$ of this lemma, the proof of uniqueness follows from arguments similar to that of \cite[Thm.~3.4]{regan}.\qed                    
\end{itemize}
\end{proof}
%----------------------------------------------------------------------------------------------------------------------------------------------------------------------------------------------------------------------------------

\subsection{Bounding procedure}
\label{prf_lip3}
In this section we show that the lower bound on the probability of the event that the iterates converge to an attracting set given that after a large number of iterations the iterates lies in a neighborhood of it depends mainly on the additive noise terms. 

In order to accomplish this we first define some terms which are a measure of the distance of the linearly interpolated trajectory of recursion $\eqref{rec}$, that is $\bar{X}$ (see $\eqref{lit})$ to the solutions of the DI $\eqref{mfld}$ over a $T>0$ length time interval, among others. Recall from section \ref{mr} that $\mathcal{O}'\subseteq \mathbb{R}^d$, is an open neighborhood of the attracting set $A$ (as in assumption $(A4)$) with compact closure, such that $A\subseteq \mathcal{O}'\subseteq \bar{\mathcal{O}'}\subseteq \mathcal{O}$, where $\mathcal{O}$ denotes the fundamental neighborhood of $A$. Thus we can find an $\epsilon_0>0$, such that $N^{\epsilon_0}(\bar{\mathcal{O}'})\subseteq \mathcal{O}$ and $N^{2\epsilon_0}(A)\subseteq \mathcal{O}'$, where for any $\epsilon>0$, $N^{\epsilon}(\cdot)$ denotes the $\epsilon$-neighborhood of a set. Further, since $A$ is an attracting set for the flow of DI $\eqref{mfld}$, for $\epsilon_0>0$ as obtained above, there exists $T_A>0$, such that for every $x\in \mathcal{O}$, for every $t\geq T_A$, $\Phi(t,x)\in N^{\epsilon_0}(A$.
Throughout the rest of this paper $\epsilon_0$ and $T_A$ will denote the constants as obtained above.

For every $T>0$, for every $n\geq0$, 
\begin{enumerate}[label=\textbf{Definition~\arabic*},wide, itemsep =0pt, topsep=0pt]
\item \label{d1} : let $\tau(n,T):=\min\{k\geq n: t(k)\geq t(n)+T\}$, where $t(n)$, for every $n\geq0$ are as defined in section \ref{mtvtn}. That is $\tau(n,T)$ denotes the first iterate such that, at least time $T$ has elapsed since the $n^{th}$ iteration. Further the time elapsed from iteration $n$ to iteration $\tau(n,T)$, be denoted by $\Delta(n,T)$, that is $\Delta(n,T):=t(\tau(n,T))-t(n)$. Then by the choice of our step sizes we have that $T\leq \Delta(n,T)\leq T+1$.
\item \label{d2} : for every $\omega\in \Omega$,  $\rho(\omega,n,T):=\inf_{\bm{\mathrm{x}}\in S(T,\bar{\mathcal{O}'})}\sup_{t\in[0,T]}\|\bar{X}(\omega,t+t(n))-\bm{\mathrm{x}}(t)\|$, where $S(T,\bar{\mathcal{O}'})$ denotes the set of solutions of DI $\eqref{mfld}$ as defined in equation $\eqref{ss0}$.
\item \label{d3} : for every $\omega \in \Omega$, for every $l\geq1$, let $\bar{\bm{\mathrm{x}}}^{(l)}(\cdot;n,T,\omega):[0,T]\rightarrow\mathbb{R}^d$ denote the unique solution of the  o.d.e. 
\begin{equation}
\label{pmfld1}
\frac{dx}{dt}=f^{(l)}(x,u(t;n,T,\omega)),
\end{equation}
with initial condition $\bar{\bm{\mathrm{x}}}^{(l)}(0;n,T,\omega)=X_n(\omega)$, where $u(\cdot;n,T,\omega):[0,T]\rightarrow U$ is defined such that, for every $t\in[0,T]$, $u(t;n,T,\omega):=U_k^{(l)}(\omega)$, where $U_k^{(l)}$ is as in equation $\eqref{rec1}$ and $k$ is such that $t+t(n)\in [t(k),t(k+1))$ (for a proof of existence and uniqueness of solutions to o.d.e. $\eqref{pmfld1}$, see Lemma \ref{pvf}).  It is easy to see that for every $l\geq1$, $\bar{\bm{\mathrm{x}}}^{(l)}(\cdot;n,T,\omega)\in S^{(l)}(T,X_n(\omega))$, where $S^{(l)}(T,X_n(\omega))$ denotes the set of solutions of DI $\eqref{amfld}$, as defined in $\eqref{ss1}$.
\item \label{d4} : for every $\omega\in \Omega$, for every $l\geq1$, $\rho^{(l)}_1(\omega,n,T):=\sup_{t\in[0,T]}\|\bar{X}(\omega,t+t(n))-\bar{\bm{\mathrm{x}}}^{(l)}(t;n,T,\omega)\|$ and $\rho^{(l)}_2(\omega,n,T):=\inf_{\bm{\mathrm{x}}\in S(T,\bar{\mathcal{O}'})}\sup_{t\in[0,T]}\|\bar{\bm{\mathrm{x}}}^{(l)}(t;n,T,\omega)-\bm{\mathrm{x}}(t)\|$.
\item \label{d5} : for any $T_u\geq T_A$, for any $n_0\geq0$, let $\{n_m\}_{m\geq1}$ denote a subsequence of natural numbers defined such that for every $m\geq0$, $T_A\leq T_m:= t(n_{m+1})-t(n_m)\leq T_u$.   
\end{enumerate}  

Now we collect sample paths of interest using the quantities $\rho$, $\rho_1^{(l)}$ and $\rho_2^{(l)}$. The next lemma summarizes results in this regard.
\begin{lemma}
\label{events}
For every $T_u\geq T_A$, for every $n_0\geq0$, for every $l\geq1$, for every event $E\in \mathscr{F}_{n_0}$, such that $E\subseteq \{\omega : X_{n_0}(\omega)\in \mathcal{O}'\}$, for every $\{n_m\}_{m\geq1}$ as in \ref{d5},
\begin{itemize}
\item [(a)] for every $M\geq0$,
\begin{small} 
\begin{align*}
E \cap \left(\cap_{m=0}^{M}\left\{\omega\in\Omega:\rho_1^{(l)}(\omega,n_m,T_m)+\rho_2^{(l)}(\omega,n_m,T_m)<\epsilon_0\right\}\right)&\subseteq E \cap \left(\cap_{m=0}^{M}\left\{\omega\in\Omega:\rho(\omega,n_m,T_m)<\epsilon_0\right\}\right)\\
&\subseteq \left\{\omega\in\Omega: X_{n_{M+1}}(\omega)\in\mathcal{O}'\right\},
\end{align*}
\end{small}
\item [(b)]
\begin{small}
\begin{align*}
\mathbb{P}\!\left(\!E \cap\! \left(\cap_{m\geq0}\!\left\{\omega\!\in\!\Omega:\rho_1^{(l)}(\omega,n_m,T_m)+\rho_2^{(l)}(\omega,n_m,T_m)<\epsilon_0\!\right\}\!\right)\!\right)\!&\leq \mathbb{P}\!\left(E \cap\! \left(\cap_{m\geq0}\!\left\{\omega\!\in\!\Omega:\rho(\omega,n_m,T_m)\!<\epsilon_0\!\right\}\right)\right)\\
&\leq\mathbb{P}\left(E\cap\left\{\omega\in\Omega: X_{n}(\omega)\to A\ \text{as }n\to\infty\right\}\right),
\end{align*}
\end{small}

\end{itemize}
where, $\{T_m\}_{m\geq0}$ is as in \ref{d5}.
\end{lemma}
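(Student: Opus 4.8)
The plan is to prove part (a) first by a finite induction on $M$, and then derive part (b) as an immediate consequence by taking intersections over all $m$ and using the attracting-set property. For part (a), the first inclusion in the displayed chain is the easy one: by Definition \ref{d4}, $\rho(\omega,n_m,T_m)\le\rho_1^{(l)}(\omega,n_m,T_m)+\rho_2^{(l)}(\omega,n_m,T_m)$, since for any solution $\bm{\mathrm{x}}\in S(T_m,\bar{\mathcal{O}'})$ the triangle inequality gives $\sup_{t\in[0,T_m]}\|\bar X(\omega,t+t(n_m))-\bm{\mathrm{x}}(t)\|\le\rho_1^{(l)}(\omega,n_m,T_m)+\sup_{t}\|\bar{\bm{\mathrm{x}}}^{(l)}(t;n_m,T_m,\omega)-\bm{\mathrm{x}}(t)\|$, and infimizing over $\bm{\mathrm{x}}$ yields the bound; here one must check that $\bar{\bm{\mathrm{x}}}^{(l)}(\cdot;n_m,T_m,\omega)$ really is being compared against $S(T_m,\bar{\mathcal{O}'})$, which requires knowing $X_{n_m}(\omega)\in\bar{\mathcal{O}'}$ — this is exactly what the induction hypothesis supplies.

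The substantive content is the second inclusion. The induction is set up as follows: the base case is the hypothesis $E\subseteq\{X_{n_0}\in\mathcal{O}'\}$. Assume that on the event $E\cap\bigcap_{m=0}^{M-1}\{\rho(\omega,n_m,T_m)<\epsilon_0\}$ we have $X_{n_M}(\omega)\in\mathcal{O}'$. On this event, since $\mathcal{O}'\subseteq\bar{\mathcal{O}'}$, the solution set $S(T_M,\bar{\mathcal{O}'})$ contains a solution $\bm{\mathrm{x}}$ of DI \eqref{mfld} started at $X_{n_M}(\omega)$, and the condition $\rho(\omega,n_M,T_M)<\epsilon_0$ forces $\bar X(\omega,\cdot+t(n_M))$ to stay within $\epsilon_0$ of some such $\bm{\mathrm{x}}$ on $[0,T_M]$. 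The key geometric step is: (i) $X_{n_M}(\omega)\in\mathcal{O}'\subseteq\mathcal{O}$, so since $T_M\ge T_A$, the flow value $\Phi(T_M,X_{n_M}(\omega))\subseteq N^{\epsilon_0}(A)$ by the definition of $T_A$; in particular the endpoint $\bm{\mathrm{x}}(T_M)\in N^{\epsilon_0}(A)$; (ii) therefore $X_{n_{M+1}}(\omega)=\bar X(\omega,t(n_{M+1}))=\bar X(\omega,T_M+t(n_M))$ lies within $\epsilon_0$ of $\bm{\mathrm{x}}(T_M)$, hence within $N^{2\epsilon_0}(A)\subseteq\mathcal{O}'$ by the choice of $\epsilon_0$. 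One additional point needs care: to invoke the attracting-set estimate one needs the \emph{whole} segment of $\bm{\mathrm{x}}$ — not just its endpoint — to remain in $\mathcal{O}$; this follows because $\bar X(\omega,\cdot+t(n_M))$ starts in $\mathcal{O}'$ and stays within $\epsilon_0$ of $\bm{\mathrm{x}}$, and $N^{\epsilon_0}(\bar{\mathcal{O}'})\subseteq\mathcal{O}$, provided we also know the solution $\bm{\mathrm{x}}$ itself never leaves $\bar{\mathcal{O}'}$ before time $T_M$ — which, again, holds because once $\bm{\mathrm{x}}$ started in $\bar{\mathcal{O}'}\subseteq\mathcal{O}$ the forward flow stays in $N^{\epsilon_0}(A)\subseteq\mathcal{O}'\subseteq\bar{\mathcal{O}'}$ by the fundamental-neighborhood property applied on $[T_A,T_M]$ and continuity on $[0,T_A]$. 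This interplay between "the interpolated trajectory shadows a solution" and "every solution from $\bar{\mathcal{O}'}$ is trapped in $\mathcal{O}'$ after $T_A$" is the main obstacle, and it is what forces the specific choice $N^{2\epsilon_0}(A)\subseteq\mathcal{O}'$ and $N^{\epsilon_0}(\bar{\mathcal{O}'})\subseteq\mathcal{O}$ made just before the lemma; the argument must be written so that each visit to $\mathcal{O}'$ re-triggers the trapping property for the next window.

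For part (b), the first inequality follows from part (a)'s first inclusion together with monotonicity of $\mathbb{P}$, taking the intersection over all $m\ge0$ rather than a finite truncation. For the second inequality, on the event $E\cap\bigcap_{m\ge0}\{\rho(\omega,n_m,T_m)<\epsilon_0\}$, part (a) gives $X_{n_m}(\omega)\in\mathcal{O}'$ for every $m\ge0$; combined with the shadowing estimate and the fact that the flow pulls trajectories from $\mathcal{O}$ into $N^{\epsilon}(A)$ for \emph{every} $\epsilon>0$ after a time $T(\epsilon)$ (not just for $\epsilon_0$), a standard argument — choose $m$ large enough that $t(n_m)$ exceeds the lag needed, note $T_m\le T_u$ bounds inter-visit times so consecutive windows overlap enough to cover all large $t$ — shows that $\bar X(\omega,t)\to A$, hence $X_n(\omega)\to A$. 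Thus the event on the left is contained in $E\cap\{X_n\to A\}$ and monotonicity of $\mathbb{P}$ finishes it. I would keep this last step brief, citing the asymptotic-pseudotrajectory machinery of \cite{benaim1} for the "shadowing on every window implies convergence" implication, since the quantities $\rho(\omega,n_m,T_m)$ are precisely designed to make $\bar X(\omega,\cdot)$ an asymptotic pseudotrajectory confined to $\mathcal{O}'$.
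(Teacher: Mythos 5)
Your part (a) is essentially the paper's argument (induction over the windows, triangle inequality for the first inclusion, the attractor property plus $N^{2\epsilon_0}(A)\subseteq\mathcal{O}'$ for the second), but two of your side remarks are off. First, the inequality $\rho\leq\rho_1^{(l)}+\rho_2^{(l)}$ holds for every $\omega$ with no induction hypothesis, because $\rho$ and $\rho_2^{(l)}$ are infima over the \emph{same} set $S(T_m,\bar{\mathcal{O}'})$; relatedly, the solution nearly realizing $\rho(\omega,n_M,T_M)<\epsilon_0$ need not start at $X_{n_M}(\omega)$, so the attractor estimate should be applied to its actual initial point, which lies in $\bar{\mathcal{O}'}\subseteq\mathcal{O}$ — the endpoint argument then goes through unchanged. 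Second, your ``additional point'' that the whole solution segment must stay in $\mathcal{O}$ is not required: the fundamental-neighborhood property quantifies over all initial conditions in $\mathcal{O}$ and all $t\geq T_A$, irrespective of where the solution travels in between; and your justification of it (solutions started in $\bar{\mathcal{O}'}$ are trapped in $\bar{\mathcal{O}'}$ up to time $T_A$ ``by continuity'') is false in general, since no forward invariance is assumed. Fortunately that requirement can simply be dropped, so part (a) survives.

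The genuine gap is in part (b). You assert the pathwise inclusion of $E\cap\bigl(\cap_{m\geq0}\{\rho(\cdot,n_m,T_m)<\epsilon_0\}\bigr)$ into $E\cap\{X_n\to A\}$, arguing that shadowing on every window plus attraction to $N^{\epsilon}(A)$ for every $\epsilon$ forces $\bar{X}(\omega,t)\to A$. This does not work: the tracking error on each window is bounded only by the \emph{fixed} constant $\epsilon_0$ and never shrinks, so deterministically the event yields only that the path is bounded and re-enters $\mathcal{O}'$ (an $O(\epsilon_0)$-neighborhood of $A$) infinitely often; it does not make $\bar{X}(\omega,\cdot)$ an asymptotic pseudotrajectory, and convergence to the set $A$ itself is not a pathwise consequence of the event. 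The paper's proof instead extracts from the event that $\sup_{n}\|X_n(\omega)\|<\infty$ (via compactness of $S(T_u,\bar{\mathcal{O}'})$, Lemma \ref{sola}(d)) and that $\lambda(\omega)\cap B(A)\neq\emptyset$ (since $X_{n_m}(\omega)\in\mathcal{O}'$ for all $m$), and then invokes the almost-sure convergence result, Theorem \ref{aptls}(a), to conclude that almost every such $\omega$ has $X_n(\omega)\to A$; this gives exactly the stated inequality between probabilities (note the lemma claims no event inclusion). Your write-up gestures at citing the asymptotic-pseudotrajectory machinery, but as formulated the claimed deterministic step would fail; the almost-sure argument via Theorem \ref{aptls}(a) is the missing ingredient.
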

\begin{proof}
Fix $n_0\geq0$, $l\geq1$ and $E\in\mathscr{F}_{n_0}$, such that $E\subseteq \{\omega\in\Omega:X_{n_0}(\omega)\in \mathcal{O}'\}$.
\begin{itemize}
\item [(a)] For every $m\geq0$, for every $\omega\in\Omega$, from \ref{d2} and \ref{d4}, it is clear that,
\begin{equation*}
\rho(\omega,n_m,T_m)\leq\rho_1^{(l)}(\omega,n_m,T_m)+\rho_2^{(l)}(\omega,n_m,T_m),
\end{equation*}
from which we get that for every $m\geq0$, 
\begin{equation*}
\{\omega\in\Omega:\rho_1^{(l)}(\omega,n_m,T_m)+\rho_2^{(l)}(\omega,n_m,T_m)<\epsilon_0\}\subseteq\{\omega\in\Omega:\rho(\omega,n_m,T_m)<\epsilon_0\}.
\end{equation*}
Therefore,
\begin{small}
\begin{equation*}
E \cap \left(\cap_{m=0}^{M}\left\{\omega\in\Omega:\rho_1^{(l)}(\omega,n_m,T_m)+\rho_2^{(l)}(\omega,n_m,T_m)<\epsilon_0\right\}\right)\subseteq E \cap \left(\cap_{m=0}^{M}\left\{\omega\in\Omega:\rho(\omega,n_m,T_m)<\epsilon_0\right\}\right).
\end{equation*}
\end{small}
The proof of the second inclusion follows from induction. Fix $M=0$ and $\omega\in E\cap \{\omega\in\Omega:\rho(\omega,n_0,T_0)<\epsilon_0\}$. Then $X_{n_0}(\omega)\in \mathcal{O}'$. Since $T_0\geq T_A$, we have that for every $\bm{\mathrm{x}}\in S(T_0,\bar{\mathcal{O}'})$, $\bm{\mathrm{x}}(T_0)\in N^{\epsilon_0}(A)$. Further, since $\rho(\omega,n,T_0)<\epsilon_0$ and by Lemma \ref{sola}$(d)$, we get that there exists $\bm{\mathrm{x}}\in S(T_0,\bar{\mathcal{O}'})$, such that $\|\bar{X}(\omega,t(n_1))-\bm{\mathrm{x}}(T_0)\|=\|X_{n_1}(\omega)-\bm{\mathrm{x}}(T_0)\|<\epsilon_0$ and hence $X_{n_1}(\omega)\in N^{2\epsilon_0}(A)\subseteq \mathcal{O}'$. Therefore $\omega\in \{\omega\in\Omega: X_{n_1}(\omega)\in \mathcal{O}'\}$. Thus the inclusion is true for $M=0$. Suppose the inclusion is true for some $M>0$. Let $\omega\in E \cap \left(\cap_{m=0}^{M+1}\left\{\omega\in\Omega:\rho(\omega,n_m,T_m)<\epsilon_0\right\}\right)$. Since the inclusion is true for $M$, we have that $X_{n_{M+1}}(\omega)\in\mathcal{O}'$. Now by arguments exactly same as those for the base case (that is for $M=0$) we get that $X_{n_{M+2}}(\omega)\in\mathcal{O}'$. Therefore the inclusion is true for $M+1$.
\item [(b)] The first inequality follows from part $(a)$ of this lemma. We shall provide a proof of the second inequality. Let $\omega\in E \cap \left(\cap_{m\geq0}\left\{\omega\in\Omega:\rho(\omega,n_m,T_m)<\epsilon_0\right\}\right)$. Then by part $(a)$ of this lemma we have that for every $m\geq0$, $X_{n_m}(\omega)\in \mathcal{O}'$. Since $\bar{\mathcal{O}'}$ is compact, by Lemma \ref{sola}$(d)$, we have that, $S(T_u,\bar{\mathcal{O}'})$ is a compact subset of $\mathcal{C}([0,T_u],\mathbb{R}^d)$, and hence there exists $C(\bar{\mathcal{O}'},T_m)>0$ such that, $\sup_{\bm{\mathrm{x}}\in S(T_u,\bar{\mathcal{O}'})}\sup_{t\in[0,T_u]}\|\bm{\mathrm{x}}(t)\|\leq C(\bar{\mathcal{O}'},T_m)$. Further since for every $m\geq0$, $T_m\leq T_u$, we get that $\sup_{\bm{\mathrm{x}}\in S(T_m,\bar{\mathcal{O}'})}\sup_{t\in[0,T_m]}\|\bm{\mathrm{x}}(t)\|\leq \sup_{\bm{\mathrm{x}}\in S(T_u,\bar{\mathcal{O}'})}\sup_{t\in[0,T_u]}\|\bm{\mathrm{x}}(t)\|\leq C(\bar{\mathcal{O}'},T_m)$. By our choice of $\omega$, we have that for every $m\geq0$, $\rho(\omega,n_m,T_m)<\epsilon_0$ and by \ref{d2}, we get that for every $m\geq0$, $\sup_{t\in [0,T_m]}\|\bar{X}(\omega,t+t(n_m))\|\leq C(\bar{\mathcal{O}'},T_m)+\epsilon_0$. Therefore $\omega$, is such that $\sup_{n\geq0}\|X_n(\omega)\|<\infty$ and for every $m\geq0$, $X_{n_m}(\omega)\in\mathcal{O}'$. Thus $\lambda(\omega)$ (see equation $\eqref{ls}$ for definition), is non-empty, compact and $\lambda(\omega)\cap \bar{\mathcal{O}'}\subseteq \lambda(\omega)\cap B(A)\neq \emptyset$, where $B(A)$ denotes the basin of attraction of the attracting set $A$. By  Theorem \ref{aptls}$(a)$, we have that for almost every $\omega$ in $ E \cap \left(\cap_{m\geq0}\left\{\omega\in\Omega:\rho(\omega,n_m,T_m)<\epsilon_0\right\}\right)$ the iterates converge to the attracting set $A$. Therefore we get that $\mathbb{P}\left( E \cap \left(\cap_{m\geq0}\left\{\omega\in\Omega:\rho(\omega,n_m,T_m)<\epsilon_0\right\}\right)\right)\leq \mathbb{P}\left(E\cap\{\omega\in\Omega: X_n(\omega)\to A\ \text{as }n\to \infty)\}\right)$.\qed
\end{itemize}
\end{proof}   

The quantity $\rho_1^{(l)}$ as in \ref{d4}, captures the difference between the linearly interpolated trajectory of recursion $\eqref{rec}$ and the solution of the o.d.e. $\eqref{pmfld1}$ over a $T>0$ length time interval. This difference can be shown to comprise of two components namely, the error due to discretization and the error due to additive noise terms. By the step size assumption, that is $(A2)$, we know that the step sizes are converging to zero. Hence intuition suggests that after a large number of iterations have elapsed the discretization error must be negligible and the contribution to the difference term $\rho_1^{(l)}$ is mainly due to the additive noise terms. The following is made precise in the lemma below. A brief outline of the proof of this lemma which follows from Lemma \ref{pvf}$(c)$ and \cite[Ch.~2, Lemma~1]{borkartxt}, is presented in Appendix \ref{prf_int}.

\begin{lemma}
\label{bd0}
For every $l\geq1$, for every $T_u\geq T_A$, there exists $N_0'\geq1$, such that for every $n_0\geq N_0'$, for every $E\in\mathscr{F}_{n_0}$ such that, $E\subseteq \{\omega\in\Omega:X_{n_0}(\omega)\in\mathcal{O}'\}$, for every sequence $\{n_m\}_{m\geq0}$ as in \ref{d5}, for every $m\geq0$, we have,
%\begin{small}
\begin{equation*}
\mathbb{P}\!\left(\mathcal{B}_{m-1}^{(l)}\cap\{\omega\!\in\! \Omega:\rho_{1}^{(l)}(\omega,n_m,T_m)\geq \frac{\epsilon_0}{2}\}\!\right)\!\leq
\mathbb{P}\left(\!\{\omega\!\in\!\Omega:\!\!\!\max_{n_m\leq j\leq n_{m+1}}\!\!\!\|\zeta_j(\omega)-\zeta_{n_m}(\omega)\|\geq \frac{\epsilon_0}{4K_0(T_u)}\}\!\cap\mathcal{B}^{(l)}_{m-1} \!\right),
\end{equation*}
%\end{small}
where,
\begin{itemize}
\item $\mathcal{B}_{-1}^{(l_0)}:=E$ and for every $M\geq0$, $\mathcal{B}_{M}^{(l_0)}\!:=\!E\cap\left(\cap_{m=0}^{M}\{\omega\!\in\!\Omega:\rho_1^{(l_0)}(\omega,n_m,T_m)+\rho_2^{(l_0)}(\omega,n_m,T_m)\!<\!\epsilon_0\}\!\right)$, 
\item for every $j\geq1$, $\zeta_j:=\sum_{n=0}^{j-1}a(n)M_{n+1}$, where $\{M_{n}\}_{n\geq1}$ denote the additive noise terms as defined in assumption $(A3)$,
\item $\{T_m\}_{m\geq0}$ is as in \ref{d5} and $K_0(T_u)>0$ is a positive constant increasing in $T_u$
\end{itemize}
\end{lemma}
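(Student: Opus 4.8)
The plan is to rewrite $\rho_1^{(l)}(\omega,n_m,T_m)$ explicitly as the sup-norm difference between the interpolated iterate trajectory and the o.d.e.\ solution $\bar{\bm{\mathrm{x}}}^{(l)}(\cdot;n_m,T_m,\omega)$, and then decompose this difference into a discretization term and a martingale-noise term. Using recursion \eqref{rec1}, for $t+t(n_m)\in[t(j),t(j+1))$ with $n_m\le j<n_{m+1}$, we have $X_{j+1}(\omega)=X_{n_m}(\omega)+\sum_{k=n_m}^{j}a(k)f^{(l)}(X_k(\omega),U_k^{(l)}(\omega))+(\zeta_{j+1}(\omega)-\zeta_{n_m}(\omega))$, while $\bar{\bm{\mathrm{x}}}^{(l)}(t-t(n_m);n_m,T_m,\omega)=X_{n_m}(\omega)+\int_0^{t-t(n_m)}f^{(l)}(\bar{\bm{\mathrm{x}}}^{(l)}(s;\dots),u(s;\dots))\,ds$ with the piecewise-constant control $u(\cdot)$. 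Subtracting and using that on each subinterval $u(s;\dots)=U_k^{(l)}(\omega)$ is the \emph{same} control argument as appears in the recursion, the integrand difference is $f^{(l)}(\bar{\bm{\mathrm{x}}}^{(l)}(s),U_k^{(l)})-f^{(l)}(X_k,U_k^{(l)})$ plus the replacement of $X_k$ by the interpolated value $\bar X(s+t(n_m))$; the first piece is controlled by the Lipschitz bound $L(Y,l)$ from Lemma \ref{pvf}(c) (applied on the compact set in which both trajectories live, which is where the restriction to $\mathcal{B}_{m-1}^{(l)}$ and hence $X_{n_m}\in\mathcal{O}'$ together with Lemma \ref{pvf}(b) is used), and the second piece is an $O(a(k))$ discretization error.

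Next I would apply the discrete Gronwall inequality of \cite[Ch.~2, Lemma~1]{borkartxt}: setting $z_j:=\|X_j(\omega)-\bar{\bm{\mathrm{x}}}^{(l)}(t(j)-t(n_m);\dots)\|$, the above yields a recursion of the form $z_{j+1}\le (1+L\,a(j))z_j+C a(j)^2(1+\|X_j\|)+\delta_j$ where $\delta_j:=\sup_{n_m\le i\le j}\|\zeta_{i+1}(\omega)-\zeta_{n_m}(\omega)\|$ and the $a(j)^2$ terms come from both the discretization of $\bar X$ versus $X_j$ and the step-size bound on $\|M_{j+1}\|$ via $(A3)$. Since on $\mathcal{B}_{m-1}^{(l)}$ the iterates stay in a compact set over $[t(n_m),t(n_{m+1})]$, and since $T_m\le T_u$, Gronwall gives $\rho_1^{(l)}(\omega,n_m,T_m)\le e^{L(T_u+1)}\big(\sum_{k=n_m}^{n_{m+1}-1}C a(k)^2(1+\|X_k\|)+\max_{n_m\le j\le n_{m+1}}\|\zeta_j(\omega)-\zeta_{n_m}(\omega)\|\big)$. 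The tail-sum $\sum_{k\ge n_0}a(k)^2$ can be made arbitrarily small by choosing $N_0'$ large (this is where $(A2)$(ii) enters), so for $n_0\ge N_0'$ the discretization contribution is below $\epsilon_0/4$ on $\mathcal{B}_{m-1}^{(l)}$; setting $K_0(T_u):=e^{L(T_u+1)}$ (increasing in $T_u$ as claimed, since $L=L(Y,l)$ can be taken on a fixed compact set and the exponent grows with $T_u$), one obtains that on the event $\mathcal{B}_{m-1}^{(l)}\cap\{\rho_1^{(l)}(\cdot,n_m,T_m)\ge\epsilon_0/2\}$ the noise oscillation must be at least $\epsilon_0/(4K_0(T_u))$, which is exactly the stated set inclusion, and taking probabilities finishes the argument.

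The main obstacle I anticipate is handling the compactness carefully enough to get a \emph{uniform} Lipschitz constant: to invoke Lemma \ref{pvf}(c) I need a single compact set $Y$ containing both the interpolated iterate trajectory $\bar X(\cdot+t(n_m),\omega)$ and the o.d.e.\ solution $\bar{\bm{\mathrm{x}}}^{(l)}$ over $[0,T_m]$, for all $m\ge0$ simultaneously and for all $\omega\in\mathcal{B}_{m-1}^{(l)}$. On $\mathcal{B}_{m-1}^{(l)}$ one has $X_{n_m}(\omega)\in\mathcal{O}'$ (via Lemma \ref{events}(a)), so the o.d.e.\ solutions lie in the compact set $C_1(\bar{\mathcal{O}'},T_u,l)U$ by Lemma \ref{pvf}(b); the interpolated iterates then stay within an $a(j)$-perturbation of that set plus a noise term, but bounding the noise term here is circular. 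The clean resolution — and the reason the lemma is stated as a set inclusion rather than a clean a.s.\ bound — is to argue conditionally: either the noise oscillation over $[t(n_m),t(n_{m+1})]$ exceeds the threshold (in which case we are already in the right-hand event), or it does not, in which case the iterates are genuinely trapped in a fixed compact neighbourhood of $\bar{\mathcal{O}'}$ and the Lipschitz/Gronwall estimate applies to force $\rho_1^{(l)}<\epsilon_0/2$. Making this dichotomy precise — essentially a stopping-time argument on the first $j\in[n_m,n_{m+1}]$ at which either the noise oscillation or $z_j$ exceeds its threshold — is the delicate bookkeeping step, and it is exactly the step inherited in spirit from \cite{borkarlkp}.
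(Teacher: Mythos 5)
Your overall route---rewriting $\rho_1^{(l)}$ via recursion \eqref{rec1} and the integral form of o.d.e.\ \eqref{pmfld1}, separating a discretization error from the martingale term $\zeta_j-\zeta_{n_m}$, invoking the uniform-in-$u$ Lipschitz bound of Lemma \ref{pvf}$(c)$ together with the discrete Gronwall estimate of \cite[Ch.~2, Lemma~1]{borkartxt}, choosing $N_0'$ through $(A2)$(ii) so the discretization part is small, and setting $K_0(T_u)$ equal to the Gronwall exponential---is exactly the paper's (Appendix \ref{prf_int}). The genuine gap is in the step you yourself flag as unresolved: the uniform compact set on which Lemma \ref{pvf}$(c)$ is applied. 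You claim that bounding the interpolated iterates there is ``circular'' and propose an unexecuted stopping-time dichotomy (either the noise oscillation already exceeds $\epsilon_0/(4K_0(T_u))$, or the iterates remain trapped). No dichotomy is needed, and the diagnosis of circularity is incorrect: assumption $(A3)$ gives $\|M_{j+1}\|\leq K(1+\|X_j\|)$ almost surely, so \eqref{rec} yields $\|X_{j+1}\|\leq\|X_j\|(1+2a(j)K)+2a(j)K$ a.s., and a discrete Gronwall argument over the window $[t(n_m),t(n_{m+1})]$ (of length at most $T_u+1$) gives the deterministic bound $\|X_j\|\leq e^{2KT_u}(\|X_{n_m}\|+2KT_u)$ for all $n_m\leq j\leq n_{m+1}$, almost surely. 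Since on $\mathcal{B}_{m-1}^{(l)}$ Lemma \ref{events}$(a)$ places $X_{n_m}\in\mathcal{O}'$, whose closure is compact, this together with Lemma \ref{pvf}$(b)$ fixes the radius $r:=\max\{C_1(\bar{\mathcal{O}'},T_u,l),\,e^{2KT_u}(M_1+2KT_u)\}$ on which the Lipschitz constant $L(r,l)$ is taken. This pathwise use of $(A3)$ is precisely why the paper's noise assumption is stronger than that of \cite{borkarlkp}; your proposal never exploits it, and without it (or the delicate stopping-time bookkeeping you defer) the Lipschitz/Gronwall step does not close.

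A secondary issue: your intermediate recursion $z_{j+1}\leq(1+La(j))z_j+Ca(j)^2(1+\|X_j\|)+\delta_j$ injects the noise oscillation $\delta_j$ afresh at every step; iterating it would multiply the noise contribution by the number of steps in the window, which is unbounded as $n_0\to\infty$. The correct comparison, as in \cite[Ch.~2, Lemma~1]{borkartxt} and in the paper, keeps $\zeta_j-\zeta_{n_m}$ as a single additive term, so the final estimate carries only $e^{L(r,l)T_u}\max_{n_m\leq j\leq n_{m+1}}\|\zeta_j-\zeta_{n_m}\|$, which is what makes the choice $K_0(T_u):=e^{L(r,l)T_u}$ legitimate. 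Note also that $L$ cannot be taken on a compact set ``fixed'' independently of $T_u$: the radius $r$ grows with $T_u$, so $L(r,l)$ and hence $K_0(T_u)$ depend on $T_u$, consistently with the statement that $K_0(T_u)$ is increasing in $T_u$.
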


Suppose event $E$ as in the lemma above occurs with some positive probability. Then the next lemma says that the lower bound of $\mathbb{P}\left(\{\omega\in\Omega : X_n(\omega)\to A\ \text{as }n\to\infty\}|E\right)$ depends mainly on the additive noise terms for $n_0$ large. 

\begin{lemma}
\label{bd1}
For every $T_u\geq T_A$, there exists $l_0\geq1$ and $N_0'\geq1$, such that for every $n_0\geq N_0'$, for every $E\in\mathscr{F}_{n_0}$ such that, $E\subseteq \{\omega\in\Omega:X_{n_0}(\omega)\in\mathcal{O}'\}$ and $\mathbb{P}(E)>0$, for every sequence $\{n_m\}_{m\geq0}$ as in \ref{d5}, we have,
\begin{equation}
\label{mbnd0}
\mathbb{P}\left(\{\omega\in\Omega : X_n(\omega)\to A\ \text{as }n\to\infty\}|E\right)\geq1-\sum_{m=0}^{\infty}\mathbb{P}\left(\max_{n_m\leq j\leq n_{m+1}}\|\zeta_j-\zeta_{n_m}\|\geq \frac{\epsilon_0}{4K_0(T_u)}|\mathcal{B}^{(l_0)}_{m-1}\right),
\end{equation}
where, the sequence of events $\{\mathcal{B}_m^{(l_0)}\}_{m\geq-1}$, the sequence of random vectors $\{\zeta_j\}_{j\geq1}$ and the constant $K_0(T_u)$ are as defined in Lemma \ref{bd0}.
\end{lemma}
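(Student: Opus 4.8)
The plan is to assemble Lemma~\ref{events}, Lemma~\ref{bd0} and the solution–approximation bound Lemma~\ref{sola}$(f)$ and then telescope over a nested family of events. First I would fix $T_u\geq T_A$ and pin down the index $l_0$: applying Lemma~\ref{sola}$(f)$ with $T:=T_u$, the compact set $Y:=\bar{\mathcal{O}'}$ and $\epsilon:=\epsilon_0/2$ yields an $l'\geq1$ so that, for every $l\geq l'$, each solution of DI~\eqref{amfld} on $[0,T_u]$ started in $\bar{\mathcal{O}'}$ lies within sup-norm distance $\epsilon_0/2$ of some solution of DI~\eqref{mfld} started in $\bar{\mathcal{O}'}$. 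Set $l_0:=l'$, and let $N_0'\geq1$ be the threshold furnished by Lemma~\ref{bd0} for this $l_0$ and this $T_u$. Fix $n_0\geq N_0'$, an event $E\in\mathscr{F}_{n_0}$ with $E\subseteq\{\omega:X_{n_0}(\omega)\in\mathcal{O}'\}$ and $\mathbb{P}(E)>0$, and a sequence $\{n_m\}_{m\geq1}$ as in \ref{d5}. Recall the decreasing family $\{\mathcal{B}_M^{(l_0)}\}_{M\geq-1}$ from Lemma~\ref{bd0}, with $\mathcal{B}_{-1}^{(l_0)}=E$.

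The one substantive step is to show that on $\mathcal{B}_{m-1}^{(l_0)}$ one has $\rho_2^{(l_0)}(\omega,n_m,T_m)<\epsilon_0/2$. By Lemma~\ref{events}$(a)$ (applied with $M=m-1$), $\mathcal{B}_{m-1}^{(l_0)}\subseteq\{\omega:X_{n_m}(\omega)\in\mathcal{O}'\}$, so the solution $\bar{\bm{\mathrm{x}}}^{(l_0)}(\cdot;n_m,T_m,\omega)$ from \ref{d3} starts at $X_{n_m}(\omega)\in\bar{\mathcal{O}'}$ and hence lies in $S^{(l_0)}(T_m,\bar{\mathcal{O}'})$. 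Since $T_m\leq T_u$ and $F^{(l_0)}$ is a Marchaud map, this trajectory can be concatenated with a solution of DI~\eqref{amfld} through its endpoint at time $T_m$ to give a solution on $[0,T_u]$ lying in $S^{(l_0)}(T_u,\bar{\mathcal{O}'})$; the choice of $l_0$ then produces a solution $\bm{\mathrm{x}}\in S(T_u,\bar{\mathcal{O}'})$ with $\sup_{t\in[0,T_u]}\|\hat{\bm{\mathrm{x}}}(t)-\bm{\mathrm{x}}(t)\|<\epsilon_0/2$ (writing $\hat{\bm{\mathrm{x}}}$ for the extended trajectory), and restricting $\bm{\mathrm{x}}$ to $[0,T_m]$ gives an element of $S(T_m,\bar{\mathcal{O}'})$ witnessing $\rho_2^{(l_0)}(\omega,n_m,T_m)<\epsilon_0/2$. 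Consequently, on $\mathcal{B}_{m-1}^{(l_0)}$ the event $\{\rho_1^{(l_0)}(\cdot,n_m,T_m)+\rho_2^{(l_0)}(\cdot,n_m,T_m)\geq\epsilon_0\}$ forces $\rho_1^{(l_0)}(\cdot,n_m,T_m)\geq\epsilon_0/2$, so
\begin{equation*}
\mathcal{B}_{m-1}^{(l_0)}\setminus\mathcal{B}_m^{(l_0)}\ \subseteq\ \mathcal{B}_{m-1}^{(l_0)}\cap\Big\{\omega:\rho_1^{(l_0)}(\omega,n_m,T_m)\geq\tfrac{\epsilon_0}{2}\Big\}.
\end{equation*}

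Next I would invoke Lemma~\ref{bd0} (legitimate since $n_0\geq N_0'$) to bound $\mathbb{P}\big(\mathcal{B}_{m-1}^{(l_0)}\cap\{\rho_1^{(l_0)}(\omega,n_m,T_m)\geq\epsilon_0/2\}\big)$ by $\mathbb{P}\big(\{\max_{n_m\leq j\leq n_{m+1}}\|\zeta_j-\zeta_{n_m}\|\geq\epsilon_0/(4K_0(T_u))\}\cap\mathcal{B}_{m-1}^{(l_0)}\big)$, and then telescope: since $\mathcal{B}_{-1}^{(l_0)}=E$ and the $\mathcal{B}_M^{(l_0)}$ decrease to $\cap_{m\geq0}\mathcal{B}_m^{(l_0)}$,
\begin{equation*}
\mathbb{P}(E)-\mathbb{P}\Big(\cap_{m\geq0}\mathcal{B}_m^{(l_0)}\Big)=\sum_{m=0}^{\infty}\mathbb{P}\big(\mathcal{B}_{m-1}^{(l_0)}\setminus\mathcal{B}_m^{(l_0)}\big)\leq\sum_{m=0}^{\infty}\mathbb{P}\Big(\Big\{\max_{n_m\leq j\leq n_{m+1}}\|\zeta_j-\zeta_{n_m}\|\geq\tfrac{\epsilon_0}{4K_0(T_u)}\Big\}\cap\mathcal{B}_{m-1}^{(l_0)}\Big).
\end{equation*}
Finally, Lemma~\ref{events}$(b)$ gives $\mathbb{P}\big(\cap_{m\geq0}\mathcal{B}_m^{(l_0)}\big)\leq\mathbb{P}\big(E\cap\{\omega:X_n(\omega)\to A\}\big)$, so rearranging, dividing by $\mathbb{P}(E)>0$, and using $\mathcal{B}_{m-1}^{(l_0)}\subseteq E$ to replace each $\mathbb{P}(\,\cdot\,\cap\mathcal{B}_{m-1}^{(l_0)})/\mathbb{P}(E)$ by $\mathbb{P}(\,\cdot\mid\mathcal{B}_{m-1}^{(l_0)})$ (any term with $\mathbb{P}(\mathcal{B}_{m-1}^{(l_0)})=0$ simply vanishing), yields exactly \eqref{mbnd0}.

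The hard part is the $\rho_2^{(l_0)}$ estimate on $\mathcal{B}_{m-1}^{(l_0)}$: it is the only place where Lemma~\ref{events}$(a)$, the Marchaud extension of $l_0$-approximating solutions to the common horizon $T_u$, and the uniform approximation of Lemma~\ref{sola}$(f)$ must be combined, and it is what lets the single choice of $l_0$ work simultaneously for all $m$; everything after that is the telescoping identity for a decreasing sequence of events together with the definition of conditional probability.
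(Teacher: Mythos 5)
Your proposal is correct and follows essentially the same route as the paper: choose $l_0$ via Lemma \ref{sola}$(f)$ with tolerance $\epsilon_0/2$, use Lemma \ref{events}$(a)$ to get $X_{n_m}\in\mathcal{O}'$ on $\mathcal{B}_{m-1}^{(l_0)}$ and hence $\rho_2^{(l_0)}<\epsilon_0/2$ there, reduce to $\{\rho_1^{(l_0)}\geq\epsilon_0/2\}$, invoke Lemma \ref{bd0}, and combine with Lemma \ref{events}$(b)$. Your telescoping of the decreasing events $\mathcal{B}_m^{(l_0)}$ (and your extension of the approximating trajectory to horizon $T_u$ by concatenating a solution of DI \eqref{amfld}, where the paper instead uses the o.d.e. solution run up to $T_u$) is just different bookkeeping for the same disjoint decomposition the paper uses in conditional form.
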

\begin{proof} By Lemma \ref{sola}$(f)$, we get that there exists $l_0\geq1$ (depending on $\bar{\mathcal{O}'}$, $T_u$ and $\epsilon_0$) such that for every $\bm{\mathrm{x}}^{(l_0)}\in S^{(l_0)}(T_u,\bar{\mathcal{O}'})$, there exists $\bm{\mathrm{x}}\in S(T_u,\bar{\mathcal{O}'})$ such that $\sup_{t\in[0,T_u]}\|\bm{\mathrm{x}}^{(l_0)}(t)-\bm{\mathrm{x}}(t)\|<\frac{\epsilon_0}{2}$. Further by Lemma \ref{events}$(a)$ and definition of $E$, we get that for every $m\geq0$, $\mathcal{B}_{m-1}^{(l_0)}\subseteq\{\omega\in\Omega:X_{n_{m}}(\omega)\in \mathcal{O}'\}$. Therefore, for every $\omega\in \mathcal{B}_{m-1}^{(l_0)}$, $\bar{\bm{\mathrm{x}}}^{(l_0)}(\cdot;n_m,T_m,\omega)\in S^{(l_0)}(T_m,\bar{\mathcal{O}'})$ and
\begin{align}
\rho_2^{(l_0)}(\omega,n_m,T_m)&=\inf_{\bm{\mathrm{x}}\in S(T_m,\bar{\mathcal{O}'})}\sup_{t\in[0,T_m]}\|\bar{\bm{\mathrm{x}}}^{(l_0)}(t;n_m,T_m, 							\omega)-\bm{\mathrm{x}}(t)\|\nonumber\\
\label{tmp5}
                                                      &\leq\inf_{\bm{\mathrm{x}}\in S(T_u,\bar{\mathcal{O}'})}\sup_{t\in[0,T_u]}\|\bar{\bm{\mathrm{x}}}^{(l_0)}(t;n,T_u,\omega)-                                                         					\bm{\mathrm{x}}(t)\|\\
                                                      \label{tmp6}
                                                      &<\frac{\epsilon_0}{2},
\end{align} 
where $\eqref{tmp5}$ follows from the fact that $T_m\leq T_u$ and $\eqref{tmp6}$ follows from our choice of $l_0$ and \ref{d3}. Therefore for every $m\geq-1$,
\begin{equation*}
\mathcal{B}_m^{(l_0)}\cap\{\omega\in\Omega:\rho_1^{(l_0)}(\omega,n_{m+1},T_m)+\rho_2^{(l_0)}(\omega,n_{m+1},T_m)\geq\epsilon_0\}\subseteq\mathcal{B}_m^{(l_0)}\cap\{\omega\in\Omega:\rho_1^{(l_0)}(\omega,n_{m+1},T_m)\geq\frac{\epsilon_0}{2}\},
\end{equation*}
and hence, 
\begin{small}
\begin{equation}
\label{tmp7}
\mathbb{P}\left(\{\omega\in\Omega:\rho_1^{(l_0)}(\omega,n_{m+1},T_m)+\rho_2^{(l_0)}(\omega,n_{m+1},T_m)\geq\epsilon_0\}|\mathcal{B}_m^{(l_0)}\right)\leq\mathbb{P}\left(\{\omega\in\Omega:\rho_1^{(l_0)}(\omega,n_{m+1},T_m)\geq\frac{\epsilon_0}{2}\}|\mathcal{B}_m^{(l_0)}\right).
\end{equation} 
\end{small}
By Lemma \ref{bd0}, we know that there exists $N_0'\geq1$ such that, for every $n_0\geq N_0'$, for every $m\geq0$, 
\begin{equation*}
\mathbb{P}\!\left(\!\mathcal{B}_{m-1}^{(l_0)}\cap\!\{\omega\!\in\!\Omega:\rho_1^{(l_0)}(\omega,n_{m},T_m)\geq\frac{\epsilon_0}{2}\}\!\right)\!\leq\mathbb{P}\!\left(\!\mathcal{B}_{m-1}^{(l_0)}\cap\!\{\omega\!\in\!\Omega:\!\!\!\!\max_{n_m\leq j\leq n_{m+1}}\!\!\!\|\zeta_j(\omega)-\zeta_{n_m}(\omega)\|\geq \frac{\epsilon_0}{4K_0(T_u)}\}\!\right),
\end{equation*} 
from which it follows that,
\begin{equation}
\label{tmp8}
\mathbb{P}\!\left(\!\{\omega\!\in\!\Omega:\rho_1^{(l_0)}(\omega,n_{m},T_m)\geq\frac{\epsilon_0}{2}\}|\mathcal{B}_{m-1}^{(l_0)}\!\right)\!\leq\mathbb{P}\!\left(\!\{\omega\!\in\!\Omega:\!\!\!\max_{n_m\leq j\leq n_{m+1}}\!\!\!\|\zeta_j(\omega)-\zeta_{n_m}(\omega)\|\geq \frac{\epsilon_0}{4K_0(T_u)}\}|\mathcal{B}_{m-1}^{(l_0)}\!\right).
\end{equation}
For $l_0\geq1$ as obtained above and for $n_0\geq N_0'$, we have that,
\begin{align}
\label{tmp9}
\mathbb{P}\left(X_n\to A\ \text{as }n\to\infty|E\right)&\geq \mathbb{P}\left(\cap_{m\geq0}\{\omega\in\Omega:\rho_1^{(l_0)}(\omega,n_m,T_m)+\rho_2^{(l_0)}(\omega,n_m,T_m)< 										   \epsilon_0\}|E\right)\\
                                                                                      &=1-\mathbb{P}\left(\cup_{m\geq0}\{\omega\in\Omega:\rho_1^{(l_0)}(\omega,n_m,T_m)+\rho_2^{(l_0)}(\omega,n_m,T_m)\geq 										   \epsilon_0\}|E\right)\nonumber\\
                                                                                      &=1-\mathbb{P}\left(\{\omega\in\Omega:\rho_1^{(l_0)}(\omega,n_0,T_m)+\rho_2^{(l_0)}(\omega,n_0,T_m)\geq 										                        \epsilon_0\}|\mathcal{B}^{(l_0)}_{-1}\right)\nonumber\\
                                                                                      \label{tmp10}
                                                                                       -\sum_{m=1}^{\infty}&\mathbb{P}\left(\{\omega\in\Omega:\rho_1^{(l_0)}(\omega,n_m,T_m)+\rho_2^{(l_0)}(\omega,n_m,T_m)\geq                								\epsilon_0\}|\mathcal{B}_{m-1}^{(l_0)}\right)\mathbb{P}\left(\mathcal{B}_{m-1}^{(l_0)}|\mathcal{B}_{-1}^{(l_0)}\right)\\
                                                                                      \label{tmp11}
                                                                                      &\geq1- \sum_{m=0}^{\infty}\mathbb{P}\left(\{\omega\in\Omega:\rho_1^{(l_0)}(\omega,n_m,T_m)+\rho_2^{(l_0)}(\omega,n_m,T_m)									\geq\epsilon_0\}|\mathcal{B}_{m-1}^{(l_0)}\right),
\end{align} 
where, $\eqref{tmp9}$ follows from Lemma \ref{events}$(b)$, $\eqref{tmp10}$ follows from the observation that,
\begin{align*}
\big(\cup_{m\geq0}\{\omega\in\Omega:\rho_1^{(l_0)}(\omega,n_m,T_m)&+\rho_2^{(l_0)}(\omega,n_m,T_m)\geq\epsilon_0\}\big)\cap E=\\
&\cup_{m\geq0}\left(\{\omega\in\Omega:\rho_1^{(l_0)}(\omega,n_m,T_m)+\rho_2^{(l_0)}(\omega,n_m,T_m)\geq\epsilon_0\}\cap\mathcal{B}_{m-1}^{(l_0)}\right),
\end{align*} 
(where the union in R.H.S. is disjoint) and $\eqref{tmp11}$ follows from the fact that for every $m\geq0$, $\mathbb{P}(\mathcal{B}_{m-1}^{(l_0)}|\mathcal{B}_{-1}^{(l_0)})\leq 1$. Using $\eqref{tmp7}$ and $\eqref{tmp8}$ in $\eqref{tmp11}$, we get that there exists $l_0\geq1$ and $N_0'\geq1$, such that for every $n_0\geq N_0'$, for every $E\in\mathscr{F}_{n_0}$ such that $E\subseteq\{\omega\in\Omega: X_{n_0}(\omega)\in \mathcal{O}'\}$ and $\mathbb{P}(E)>0$,
\begin{align*}
\mathbb{P}\left(X_n\to A\ \text{as }n\to\infty|E\right) &\geq1- \sum_{m=0}^{\infty}\mathbb{P}\left(\{\omega\in\Omega:\rho_1^{(l_0)}(\omega,n_m,T_m)+\rho_2^{(l_0)}         											   (\omega,n_m,T_m)\geq\epsilon_0\}|\mathcal{B}_{m-1}^{(l_0)}\right)\\
                                                                                         &\geq1-\sum_{m=0}^{\infty}\mathbb{P}\left(\{\omega\in\Omega:\rho_1^{(l_0)}(\omega,n_{m},T_m)\geq\frac{\epsilon_0}{2}\}|              								    \mathcal{B}_{m-1}^{(l_0)}\right)\\
                                                                                         &\geq 1-\sum_{m=0}^{\infty}\mathbb{P}\left(\{\omega\in\Omega:\max_{n_m\leq j\leq n_{m+1}}\|\zeta_j(\omega)-\zeta_{n_m}										    (\omega)\|\geq \frac{\epsilon_0}{4K_0(T_u)}\}|\mathcal{B}_{m-1}^{(l_0)}\right).\qed
\end{align*}
\end{proof}
%-----------------------------------------------------------------------------------------------------------------------------------------------------------------------------------------------------------------------------------

\subsection{Review of the probability bounding procedure for the additive noise terms}
\label{nbd}
In this section we fix $l_0$ and $n_0\geq N_0$, where $l_0$ and $N_0$ are as in Lemma \ref{bd1} and provide an upper bound for $\mathbb{P}\left(\{\omega\in\Omega:\max_{n_m\leq j\leq n_{m+1}}\|\zeta_j(\omega)-\zeta_{n_m}(\omega)\|\geq \frac{\epsilon_0}{4K_0(T_u)}\}|\mathcal{B}_{m-1}^{(l_0)}\right)$, for every $m\geq0$. The proof of the bounding procedure is similar to that of \cite[Ch.~4, Lemma.~10]{borkartxt} and we provide a brief outline here for the sake of completeness.
\begin{itemize}
\item [(a)] From recursion $\eqref{rec}$, we have that for every $m\geq0$, for every $n_m\leq j\leq n_{m+1}-1$, for every $\omega\in\Omega$, there exists $V_{j}(\omega)\in F(X_j(\omega))$, such that,
\begin{equation*}
X_{j+1}(\omega)-X_{j}(\omega)-a(n)M_{j+1}(\omega)=a(n)V_{j}(\omega).
\end{equation*}
By assumption $(A1)$(ii), we know that $\|V_j(\omega)\|\leq K(1+\|X_j(\omega)\|)$ and hence for $n_m\leq j\leq n_{m+1}-1$,
\begin{equation*}
\|X_{j+1}(\omega)\|\leq \|X_j(\omega)\|(1+a(j)K)+a(j)K+a(j)\|M_{j+1}\|.
\end{equation*}
Further by assumption $(A3)$, we get that, for every $m\geq0$, for almost every $\omega\in\Omega$, for $n_m\leq j\leq n_{m+1}-1$, 
\begin{equation*}
\|X_{j+1}(\omega)\|\leq \|X_j(\omega)\|(1+2a(j)K)+2a(j)K.
\end{equation*}
Now by arguments as in \cite[Lemma~9]{borkartxt}, we get that, for every $m\geq0$, for almost every $\omega\in\Omega$, for $n_m\leq j\leq n_{m+1}$,
\begin{equation}
\label{tmp13}
\|X_j(\omega)\|\leq e^{2K T_u}(\|X_{n_m}(\omega)\|+2KT_u).
\end{equation}
\item [(b)] Clearly $\{\zeta_j-\zeta_{n_m},\mathscr{F}_j\}_{n_m\leq j\leq n_{m+1}}$ is a martingale. By \eqref{tmp13} and $(A3)$, we get that for $n_m\leq j<n_{m+1}$,
$\|\zeta_{j+1}-\zeta_j\|=\|a(j)M_{j+1}\|\leq a(j)K(1+\|X_j\|)\leq a(j) K(1+e^{2KT_u}(1+2KT_u\|X_{n_m}\|))$. Since for every $\omega\in\mathcal{B}^{(l_0)}_{m-1}$, $X_{n_m}(\omega)\in \mathcal{O}'$ (whose closure is compact), there exists a $C>0$, such that $\|X_{n_m}(\omega)\|\leq C$. Therefore for every $m\geq0$, for every $\omega\in\mathcal{B}^{(l_0)}_{m-1}$, 
for every $n_m\leq j<n_{m+1}$, $\|\zeta_{j+1}-\zeta_j\|\leq a(j)K(1+e^{2KT_u}(1+2KT_uC))$. Thus applying the concentration inequality for martingales, by arguments exactly the same as in the proof of \cite[Lemma~10]{borkartxt}, we get that for every $m\geq0$,
\begin{equation}
\label{tmp14}
\mathbb{P}\left(\{\omega\in\Omega:\max_{n_m\leq j\leq n_{m+1}}\|\zeta_j(\omega)-\zeta_{n_m}(\omega)\|\geq \frac{\epsilon_0}{4K_0(T_u)}\}|\mathcal{B}_{m-1}^{(l_0)}\right)\leq 2de^{-\tilde{K}/(b(n_m)-b(n_{m+1}))}
\end{equation}
where, $\tilde{K}:=\epsilon_0^2/\left(32(K_0(T_u))^2dK(1+e^{2KT_u}(1+2KT_uC))\right)$.
\end{itemize}                               
\begin{proof}

\textbf{of Theorem \ref{mnres}:}
Let $l_0\geq1$ and $N_0'$ be as in Lemma \ref{bd1}. By definition of $b(\cdot)$, we get that there exists $N_0''\geq1$, such that for every $n\geq N_0''$, $b(n)<\tilde{K}$. Define $N_0:=\max\{N_0',N_0''\}$. Let $n_0\geq N_0$ and $\{n_m:=\tau(n_{m-1},T_A)\}_{m\geq1}$. $\{n_m\}_{m\geq1}$ as defined satisfies the conditions mentioned in \ref{d5}. Then by Lemma \ref{bd1} and $\eqref{tmp14}$, we get that for $n_0\geq N_0$, 
\begin{equation}
\label{tmp16}
\mathbb{P}\left(X_n\to A\ \text{as }n\to\infty|E\right)\geq 1-2d\sum_{m=0}^{\infty}e^{-\tilde{K}/(b(n_m)-b(n_{m+1}))}.
\end{equation}
We know that $e^{-\tilde{K}/x}/x\to0$ as $x\to0$ and increases with $x$ for $0<x<\tilde{K}$. Therefore by our choice of $n_0$, we get that, 
\begin{equation*}
\frac{e^{-\tilde{K}/(b(n_m)-b(n_{m+1}))}}{b(n_m)-b(n_{m+1})}\leq\frac{e^{-\tilde{K}/b(n_0)}}{b(n_0)},
\end{equation*}
from which it follows that for every $m\geq0$, $e^{-\frac{\tilde{K}}{b(n_m)-b(n_{m+1})}}\leq (b(n_m)-b(n_{m+1}))\frac{e^{-\frac{\tilde{K}}{b(n_0)}}}{b(n_0)}.$
Substituting the above in $\eqref{tmp16}$, we get that for every $n_0\geq N_0$,
\begin{align*}
\mathbb{P}\left(X_n\to A\ \text{as }n\to\infty|E\right) &\geq 1-2d\sum_{m=0}^{\infty}e^{-\frac{\tilde{K}}{b(n_m)-b(n_{m+1})}}\\
                                                                                        &\geq 1-2d\sum_{m=0}^{\infty} (b(n_m)-b(n_{m+1}))\frac{e^{-\tilde{K}/b(n_0)}}{b(n_0)}\\
                                                                                        &=1-2d\frac{e^{-\tilde{K}/b(n_0)}}{b(n_0)}\sum_{m=0}^{\infty} (b(n_m)-b(n_{m+1}))\\
                                                                                        &=1-2de^{-\tilde{K}/b(n_0)}.\qed
\end{align*}
\end{proof}
\section{Proof of finite resets theorem (Thm.~\ref{fnrst})}
\label{prf_fnrst}

From the definition of $\chi_n$ in equation $\eqref{indi}$, we know that the $\chi_n$ takes the value one if there is a reset of the $n^{th}$ iterate and is zero otherwise. Therefore $\sum_{n=1}^{\infty}\chi_n$ denotes the total number of resets. 

Suppose the event $\{\sum_{n=1}^{\infty}\chi_n\geq k\}$ has zero probability for some $k\geq1$. Then for $k\geq1$, such that $\mathbb{P}(\sum_{n=1}^{\infty}\chi_n\geq k)=0$, we have $\mathbb{P}(\sum_{n=1}^{\infty}\chi_n< k)=1$, from which Theorem \ref{fnrst} trivially follows. Therefore without loss of generality assume $\mathbb{P}(\sum_{n=1}^{\infty}\chi_n\geq k)>0$, for every $k\geq1$.

 For every $k\geq0$, let $G_k$ denote the event that there are at most $k$ resets and $G_{\infty}$ denote the event that there are finitely many resets. That is, for every $k\geq0$, $G_k:=\{\sum_{n=1}^{\infty}\chi_n\leq k\}$ and $G_{\infty}:=\{\sum_{n=1}^{\infty}\chi_n<\infty\}$. Then it is clear that, for every $k\geq 1$, $G_k\subseteq G_{k+1}$ and $G_{\infty}=\cup_{k\geq0}G_{k}$. Therefore $\lim_{k\to\infty}\mathbb{P}(G_k)$ exists and $\mathbb{P}(G_{\infty})=\lim_{k\to\infty}\mathbb{P}(G_k)$. For any $k\geq1$,
 \begin{equation}
 \label{tmp17}
 \mathbb{P}(G_k)=\mathbb{P}(\sum_{n=1}^{\infty}\chi_n\leq k)
                             =\mathbb{P}(\{\sum_{n=1}^{\infty}\chi_n\leq k-1\}\cup\{\sum_{n=1}^{\infty}\chi_n = k\})
                             =\mathbb{P}(G_{k-1})+\mathbb{P}(\sum_{n=1}^{\infty}\chi_n =k).
 \end{equation}
The event $\{\sum_{n=1}^{\infty}\chi_n=k\}$ can be written as a disjoint union of events as below. For every $k\geq1$
\begin{equation}
\label{tmp18}
\left\{\sum_{n=1}^{\infty}\chi_n=k\right\}=\cup_{n_0\geq1}\left[\left\{\sum_{n=1}^{n_0-1}\chi_n=k-1\right\}\cap\left\{\chi_{n_0}=1\right\}\cap\left\{\sum_{n=n_0+1}^{\infty}\chi_n=0\right\}\right],
\end{equation}
where, $\{\sum_{n=1}^{0}\chi_n=k-1\}:=\Omega$. Let $J(k):=\left\{n_0\geq 1: \mathbb{P}\left(\left\{\sum_{n=1}^{n_0-1}\chi_n=k-1\right\}\cap\left\{\chi_{n_0}=1\right\}\right)>0\right\}$. Then for every $k\geq1$,
\begin{itemize}
\item [(a)] By arguments in the second paragraph of this section we have that $\mathbb{P}(G_{k-1}^c)=\mathbb{P}(\sum_{n=1}^{\infty}\chi_n\geq k)>0$. Further the event $\{\sum_{n=1}^{\infty}\chi_n\geq k\}$ can be written as a disjoint union of events as below.
\begin{equation}
\label{tmp19}
\left\{\sum_{n=1}^{\infty}\chi_n\geq k\right\}=\cup_{n_0\geq1}\left[\left\{\sum_{n=1}^{n_0-1}\chi_n=k-1\right\}\cap\left\{\chi_{n_0}=1\right\}\right],
\end{equation}  
from which it follows that,
\begin{equation}
\label{tmp20}
0<\mathbb{P}\left(\left\{\sum_{n=1}^{\infty}\chi_n\geq k\right\}\right)=\sum_{n_0=1}^{\infty} \mathbb{P}\left(\left\{\sum_{n=1}^{n_0-1}\chi_n=k-1\right\}\cap\left\{\chi_{n_0}=1\right\}\right).
\end{equation}
Therefore $J(k)\neq\emptyset$.
\item [(b)] $\min\{n_0\in J(k)\}\geq k$, since there cannot be $k$ resets in less than $k$ iterations.
\end{itemize}
From $\eqref{tmp18}$ and definition of $J(k)$, we have that for every $k\geq1$,
\begin{equation}
\label{tmp21}
\mathbb{P}\left(\sum_{n=1}^{\infty}\chi_n=k\right)=\sum_{n_0\in J(k)}\mathbb{P}\left(\sum_{n=n_0+1}^{\infty}\chi_n=0\bigg{|}\sum_{n=1}^{n_0-1}\chi_n=k-1,\chi_{n_0}=1\right)
\mathbb{P}\left(\sum_{n=1}^{n_0-1}\chi_n=k-1,\chi_{n_0}=1\right).
\end{equation}

\emph{Step 1 \textbf{(Obtaining $\mathcal{O}',\ \epsilon_0$ and $T_A$)} : } By $(A4)'$, we have that $A$ is a globally attracting set of DI $\eqref{mfld}$. Let $\tilde{r}>0$ be such that $A\subseteq \tilde{r}\mathring{U}$. By definition of a globally attracting set and \cite[Lemma~3.13]{benaim1}, we get that for any $r\geq\tilde{r}$, $r\mathring{U}$ is a fundamental neighborhood of $A$. Let $k_1\geq1$ be such that $r_{k_1}\geq \tilde{r}$. Set the fundamental neighborhood $\mathcal{O}:=r_{k_1+1}\mathring{U}$ and $\mathcal{O}':=r_{k_1}\mathring{U}$. Obtain $\epsilon_0>0$ and $T_A>0$ as in section \ref{prf_lip3}. That is $\epsilon_0>0$ is such that $N^{2\epsilon_0}(A)\subseteq \mathcal{O'}\subseteq N^{\epsilon_0}(\bar{\mathcal{O'}})\subseteq \mathcal{O}$ and $T_A>0$, is such that for every $x\in\mathcal{O}$, for every $t\geq T_A$, $\Phi(t,x)\in N^{\epsilon_0}(A)$. 

\emph{Step 2 \textbf{(Obtaining $\{n_m\}_{m\geq1}$ as in \ref{d5})} : } Clearly there exists $k_2\geq1$, such that for every $k\geq k_2$, $T_A\leq2^kT_W$. For any $n_0\geq1$, for every $m\geq1$, define $n_m:=n_{2^{k_2},m-1}$, where for every $1\leq j\leq 2^{k_2}$, $n_{j,m-1}:=\tau(n_{j-1,m-1},T_W)$ with $n_{0,m-1}:=n_{m-1}$ and $\tau(\cdot,\cdot)$ is as defined in \ref{d1}. Therefore for every $m\geq1$, $T_{m-1}:=t(n_m)-t(n_{m-1})=\sum_{j=0}^{2^{k_2}-1}\Delta(n_{j,m-1},T_W)$, where $\Delta(\cdot,\cdot)$ is as defined in \ref{d1}. Thus for every $m\geq0$, $T_A\leq 2^{k_2}T_W\leq T_m\leq 2^{k_2}T_W+2^{k_2}$ and hence $T_u=2^{k_2}T_W+2^{k_2}$.

\emph{Step 3 \textbf{(Redefining trajectories)} : }  Define $\bar{X}$, as defined in $\eqref{lit}$, with the iterates $\{X_n\}_{n\geq0}$ (iterates before reset check) generated by Algorithm \ref{ssri}. For every $n\geq1$, define $\tilde{X}(\cdot,\cdot;n):\Omega\times[t(n),\infty)\rightarrow\mathbb{R}^d$ such that for every $(\omega,t)\in\Omega\times[t(n),t(n+1))$, 
\begin{equation}
\label{rec2a}
\tilde{X}(\omega,t;n):=\left(\frac{t-t(n)}{t(n+1)-t(n)}\right)X_{n+1}(\omega)+\left(\frac{t(n+1)-t}{t(n+1)-t(n)}\right)X'_n(\omega),
\end{equation}
and for every $(\omega,t)\in[t(n+1),\infty)$, $\tilde{X}(\omega,t;n)=\bar{X}(\omega,t)$. 

\emph{Step 4 \textbf{(Obtaining parameters)} : } By arguments exactly same as the ones used to obtain $\eqref{rec1}$, we get that, for every $l\geq1$, for every $n\geq0$, there exists a $U$-valued random variable on $\Omega$, say $\tilde{U}^{(l)}_n$ such that,  for every $\omega\in\Omega$,
\begin{equation}
\label{rec2}
X_{n+1}(\omega)-X_{n}'(\omega)-a(n)M_{n+1}(\omega)= a(n)f^{(l)}(X_n'(\omega),\tilde{U}^{(l)}_n(\omega)).
\end{equation}

\emph{Step 5 \textbf{(Redefining distance measures)} : }
For every $\omega\in\Omega$, for every $n\geq n'\geq1$, for every $T>0$, for every $l\geq1$,
\begin{itemize}
\item [(a)] let $\tilde{\bm{\mathrm{x}}}^{(l)}(\cdot;n,n',T,\omega):[0,T]\rightarrow\mathbb{R}^d$ denote the unique solution of the o.d.e. 
\begin{equation}
\label{pmfld2}
\frac{dx}{dt}=f^{(l)}(x,\tilde{u}(t;n,n',T,\omega)),
\end{equation}
with initial condition $\tilde{\bm{\mathrm{x}}}^{(l)}(0;n,n',T,\omega)=X_{n'}'(\omega)$, where $\tilde{u}(\cdot;n,n',T,\omega):[0,T]\rightarrow U$ is defined such that, for every $t\in[0,T]$, $\tilde{u}(t;n,n',T,\omega):=\tilde{U}_k^{(l)}(\omega)$, where $\tilde{U}_k^{(l)}$ is as in equation $\eqref{rec2}$ and $k$ is such that $t+t(n)\in [t(k),t(k+1))$ (for a proof of existence and uniqueness of solutions to o.d.e. $\eqref{pmfld2}$, see Lemma \ref{pvf}).  It is easy to see that for every $l\geq1$, $\tilde{\bm{\mathrm{x}}}^{(l)}(\cdot;n,n',T,\omega)\in S^{(l)}(T,X_{n'}'(\omega))$, the set of solutions of DI $\eqref{amfld}$, as defined in $\eqref{ss1}$.
\item [(b)] define,
\begin{itemize} 
\item [(1)] $\tilde{\rho}(\omega,n,n',T):=\inf_{\bm{\mathrm{x}}\in S(T,\bar{\mathcal{O}'})}\sup_{t\in[0,T]}\|\tilde{X}(\omega,t+t(n);n')-\bm{\mathrm{x}}(t)\|$,  
\item [(2)]$\tilde{\rho}^{(l)}_1(\omega,n,n',T):=\sup_{t\in[0,T]}\|\tilde{X}(\omega,t+t(n);n')-\tilde{\bm{\mathrm{x}}}^{(l)}(t;n,n',T,\omega)\|$, 
\item [(3)]$\tilde{\rho}^{(l)}_2(\omega,n,n',T):=\inf_{\bm{\mathrm{x}}\in S(T,\bar{\mathcal{O}'})}\sup_{t\in[0,T]}\|\tilde{\bm{\mathrm{x}}}^{(l)}(t;n,n',T,\omega)-\bm{\mathrm{x}}(t)\|$. 
\end{itemize}
\end{itemize}

\emph{Step 6 \textbf{(Collecting sample paths)} : } Fix $k> \max\{k_1,k_2\}$ and $n_0\in J(k)$. By our definition of $\mathscr{F}_{n_0}$ (see section \ref{appl}), we have that $E(k,n_0):=\{\sum_{n=1}^{n_0-1}\chi_n=k-1,\chi_{n_0}=1\}\in\mathscr{F}_{n_0}$ and is contained in $\{X_{n_0}'(\omega)\in\mathcal{O}'\}$. Given that there has been a reset at index $n_0$, the next reset check is performed by Algorithm \ref{ssri} at the iteration index $n_{2^{k-k_2}}$. So for $n_0+1\leq j< n_{2^{k-k_2}}$, $X_j(\omega)=X_j'(\omega)$. From arguments exactly the same as Lemma \ref{events}$(a)$, we get that,  
\begin{align}
E(k,n_0)\cap\big(\cap_{m=0}^{2^{k-k_2}-1}\{\tilde{\rho}^{(l)}_1(\omega,n_m,n_0,T_m)+\tilde{\rho}^{(l)}_2&(\omega,n_m,n_0,T_m)<\epsilon_0\}\big)\nonumber\\
&\subseteq E(k,n_0)\cap\left(\cap_{m=0}^{2^{k-k_2}-1}\{\tilde{\rho}(\omega,n_m,n_0,T_m)<\epsilon_0\}\right)\nonumber\\
&\subseteq\{\omega\in\Omega:X_{n_{2^{k-k_2}}}(\omega)\in\mathcal{O}'\}\nonumber \\
\label{tmp22}
&\subseteq \{\omega\in\Omega: X_{n_{2^{k-k_2}}}(\omega)=X_{n_{2^{k-k_2}}}'(\omega)\}\\
&\subseteq \{\sum_{n=n_0+1}^{n_{2^{k-k_2}}}\chi_n=0\}\nonumber
\end{align}
where, $\eqref{tmp22}$ follows from the fact that $k\geq k_1$ and hence $\mathcal{O}'= r_{k_1}\mathring{U}\subseteq r_{k}\mathring{U}$. It is also worth mentioning here that the proof of Lemma \ref{events}$(a)$ holds irrespective of how the iterates are generated. Given that $\omega\in E(k,n_0)\cap\left(\cap_{m=0}^{2^{k-k_2}-1}\{\tilde{\rho}^{(l)}_1(\omega,n_m,n_0,T_m)+\tilde{\rho}^{(l)}_2(\omega,n_m,n_0,T_m)<\epsilon_0\}\right)$, along this sample path there has been a reset at $n_0$ and at the next check performed at $n_{2^{k-k_2}}$ there has been no reset. Hence the next check for reset is performed by Algorithm \ref{ssri} at $n_{2^{(k-k_2)+1}}$. Again from arguments from Lemma \ref{events}$(a)$, we get that,
\begin{align*}
E(k,n_0)\cap\left(\cap_{m=0}^{2^{(k-k_2)+1}-1}\{\tilde{\rho}^{(l)}_1(\omega,n_m,n_0,T_m)+\tilde{\rho}^{(l)}_2(\omega,n_m,n_0,T_m)<\epsilon_0\}\right)&\subseteq\{\omega\in\Omega:X_{2^{(k-k_2)+1}}(\omega)\in\mathcal{O}'\}\\
&\subseteq \{\sum_{n=n_0+1}^{n_{2^{(k-k_2)+1}}}\chi_n=0\}.
\end{align*}
Repeating the above for the third reset check after $n_0$ and so on, we obtain that,
\begin{equation}
\label{tmp23}
E(k,n_0)\cap\left(\cap_{m\geq0}\{\tilde{\rho}^{(l)}_1(\omega,n_m,n_0,T_m)+\tilde{\rho}^{(l)}_2(\omega,n_m,n_0,T_m)<\epsilon_0\}\right)
\subseteq \{\sum_{n=n_0+1}^{\infty}\chi_n=0\}.
\end{equation}

\emph{Step 7 \textbf{(Bounding)} : } Define $\tilde{\mathcal{B}}^{(l)}_{-1}:=E(k,n_0)$ and for every $M\geq1$ define, 
\begin{equation*}
\tilde{\mathcal{B}}^{(l)}_M:=E(k,n_0)\cap\left(\cap_{m=0}^{M}\{\tilde{\rho}^{(l)}_1(\cdot,n_m,n_0,T_m)+\rho^{(l)}_2(\cdot,n_m,n_0,T_m)<\epsilon_0\}\right).
\end{equation*}
Note that as in Lemma \ref{bd1}, we can obtain $l_0\geq1$, such that for every $m\geq0$, for every $\omega\in\tilde{B}^{(l_0)}_{m-1}$ we have that $\tilde{\rho}^{(l_0)}_2(\omega,n_m,n_0,T_m)<\frac{\epsilon_0}{2}$, since for every $\omega\in\tilde{\mathcal{B}}^{(l_0)}_{m-1}$, $X_{n_m}(\omega)\in\mathcal{O}'$ and whether or not a reset check is performed at this index, we have that $X_{n_m}(\omega)=X_{n_m}'(\omega)$. Thus for such an $l_0$, mimicking the proof of Lemma \ref{bd1}, we obtain that,
\begin{equation*}
\mathbb{P}\left(\cap_{m=0}^{M}\{\tilde{\rho}^{(l)}_1(\cdot,n_m,n_0,T_m)+\rho^{(l)}_2(\cdot,n_m,n_0,T_m)<\epsilon_0\}|\tilde{\mathcal{B}}^{(l_0)}_{-1}\right)\geq 1-\sum_{m=0}^{\infty}\mathbb{P}\left(\tilde{\rho}^{(l_0)}_1(\cdot,n_m,n_0,T_m)\geq\frac{\epsilon_0}{2}|\tilde{\mathcal{B}}^{(l_0)}_{m-1}\right).
\end{equation*}

From Lemma \ref{bd0}, we have that for $k> \max\{k_1,k_2,N_0'\}$, for every $n_0\in J(k)$, 
\begin{small}
\begin{equation}
\label{tmp24}
  \mathbb{P}\!\left(\!\cap_{m=0}^{M}\{\tilde{\rho}^{(l)}_1(\cdot,n_m,n_0,T_m)+\rho^{(l)}_2(\cdot,n_m,n_0,T_m)\!<\epsilon_0\}|\tilde{\mathcal{B}}^{(l_0)}_{-1}\!\right)\!\geq 1-\sum_{m=0}^{\infty}\mathbb{P}\!\left(\max_{n_m\leq j\leq n_{m+1}}\!\!\!\!\|\zeta_j-\zeta_{n_m}\|\geq \frac{\epsilon_0}{4K_0(T_u)}|\tilde{\mathcal{B}}^{(l_0)}_{m-1}\!\right).
\end{equation}
\end{small}

\emph{Step 8 \textbf{(Noise bound)}} Similar to item $(a)$ in section \ref{nbd}, from Algorithm \eqref{ssri}, we have that for every $m\geq0$, for every $n_m\leq n_{m+1}-1$, $\|X_{j+1}\|\leq \|X_j'\|(1+a(j)K)+a(j)K+a(j)\|M_{j+1}\|$ and since $\|X_{j+1}'\|\leq\|X_{j+1}\|$, we get that for every $n_m\leq j\leq n_{m+1}-1$,
\begin{equation*}
\|X'_{j+1}\|\leq \|X_j'\|(1+a(j)K)+a(j)K+a(j)\|M_{j+1}\|.
\end{equation*}
Now by arguments exactly same as those item $(a)$ of section \ref{nbd}, we get that for every $m\geq0$, for every $n_m\leq j\leq n_{m+1}-1$, $\|X_{j+1}'\|\leq e^{2K T_u}(\|X_{n_m}\|+2KT_u)$. Now by using concentration inequality as in item $(b)$ of section \ref{nbd}, we get that for every $m\geq0$,
\begin{equation}
\label{tmp25}
\mathbb{P}(\max_{n_m\leq j\leq n_{m+1}}\|\zeta_j-\zeta_{n_m}\|\geq \frac{\epsilon_0}{4K_0(T_u)}|\tilde{\mathcal{B}}_{m-1}^{(l_0)})\leq2de^{-\tilde{K}/(b(n_m)-b(n_{m+1}))}.
\end{equation} 

Using $\eqref{tmp23}$, $\eqref{tmp24}$ and $\eqref{tmp25}$ we get that, for every $k\geq \max\{k_1,k_2,N_0\}$ (where $N_0$ is as defined in the proof of Theorem \ref{mnres}), for every $n_0\in J(k)$, 
\begin{equation*}
\mathbb{P}\left(\sum_{n=n_0+1}^{\infty}\chi_n=0\bigg{|}\sum_{n=1}^{n_0-1}\chi_n=k-1,\chi_{n_0}=1\right)\geq 1-2d\sum_{m=0}^{\infty}e^{-\tilde{K}/(b(n_m)-b(n_{m+1}))}\geq1-2de^{-\tilde{K}/b(n_0)}. 
\end{equation*} 
Substituting the above in $\eqref{tmp21}$ and using the fact that for $n\leq n'$, $b(n')\leq b(n)$, we get that, for every $k\geq\max\{k_1,k_2,N_0\}$,
\begin{align}
 \mathbb{P}\left(\sum_{n=1}^{\infty}\chi_n=k\right)&\geq(1-2de^{-\tilde{K}/b(k)})\sum_{n_0=1}^{\infty} \mathbb{P}\left(\left\{\sum_{n=1}^{n_0-1}\chi_n=k-1\right\}\cap\left\{\chi_{n_0}=1\right\}\right)\nonumber\\
 \label{tmp26}
 &=(1-2de^{-\tilde{K}/b(k)})\mathbb{P}\left(\left\{\sum_{n=1}^{\infty}\chi_n\geq k\right\}\right)
 \end{align}
 Substituting $\eqref{tmp26}$ in $\eqref{tmp17}$, we get that for every $k\geq\max\{k_1,k_2,N_0\}$,
 \begin{equation*}
 \mathbb{P}(G_k)\geq \mathbb{P}(G_{k-1})+(1-2de^{-\tilde{K}/b(k)})\mathbb{P}(G_{k-1}^c)
                            \geq 1-2de^{-\tilde{K}/b(k)}.
\end{equation*}
Letting $k\to\infty$ in the above equation and using the fact that $\mathbb{P}(G_{\infty})=\lim_{k\to\infty}\mathbb{P}(G_k)$, we get that $\mathbb{P}(G_{\infty})=1$.
\section{Conclusions and directions for future work}
\label{concl}
We have extended the lock-in probability result (Theorem \ref{mnres}) in \cite{borkarlkp} to stochastic approximation schemes with set-valued drift functions which serves as an important tool for analyzing recursions when their stability is not guaranteed. The extension to set-valued map allows one to obtain lock-in probability for stochastic approximation schemes with measurable drift functions and schemes where the drift function itself possess a non-additive unknown noise component (see \cite[Ch.~5.3]{borkartxt}). Further using Theorem \ref{mnres}, in the presence of a locally attracting set for the mean field, we have provided an alternate condition for verification of convergence in the absence of stability guarantee which involves verifying whether the iterates are entering infinitely often, an open neighborhood of the attractor with a compact closure.  In the presence of a globally attracting set our modified recursion as in Algorithm \ref{ssri}, converges almost surely to the globally attracting set, the proof of which relies on the method used to obtain the lock-in probability result.
 
In future we wish to consider other applications of the lock-in probability result such as sample complexity (see \cite[Ch.~4.2]{borkartxt}) and almost sure convergence under tightness of the iterates (see \cite{kamal}). Another interesting direction, is to explore various additive noise models where the above result can be extended for the case of set-valued drift functions.

\appendix

\section{Definitions of some topological concepts}
\label{topo}
Let $(\mathcal{M},\Gamma)$ be a topological space. $\{O_i\}_{i\in I}$ is an \it{covering }\rm of $\mathcal{M}$ if, for every $i\in I$, $O_i\subseteq\mathcal{M}$ and $\cup_{i\in I}O_i=\mathcal{M}$. Further a covering $\{O_i\}_{i\in I}$ is said to be \it{locally finite }\rm if for every $p\in\mathcal{M}$, there exists an $O\in \Gamma$, such $O_i\cap O\neq\emptyset$ for only finitely many $i\in I$. Given any two coverings $\mathcal{C}:=\{O_i\}_{i\in I}$ and $\mathcal{C}':=\{O_j'\}_{j\in J}$, $\mathcal{C}'$ is said to be a \it{refinement }\rm of $\mathcal{C}$ if, for every $i\in I$, there exists a $j\in J$ such that, $O_i\subseteq O_j'$. $\mathcal{C}'$ is said to be a locally finite refinement of $\mathcal{C}$, if $\mathcal{C}'$ is a refinement of $\mathcal{C}$ and is locally finite. The topological space $(\mathcal{M},\Gamma)$ is \it{paracompact }\rm if it is a Hausdorff space and if every open covering has a locally finite open refinement.

A family of functions $\{\psi_i\}_{i\in I}$ is called a \it{locally Lipschitz partition of unity }\rm if for all $i\in I$,
\begin{itemize}
\item $\psi_i$ is locally Lipschitz continuous and non negative,
\item the supports of $\psi_i$, defined as $\overline{\{p\in \mathcal{M}:\psi_i(p)\neq0\}}$, are a closed locally finite covering of $\mathcal{M}$,
\item for each $p\in\mathcal{M}$, $\sum_{i\in I}\psi_i(p)=1$.
\end{itemize}
A partition of unity $\{\psi_i\}_{i\in I}$ is said to be subordinated to the covering $\{O_i\}_{i\in I}$, if for every $i\in I$, $\overline{\{p\in \mathcal{M}:\psi_i(p)\neq0\}}\subseteq O_i$.

\section{Proof of Lemma \ref{bd0}}
\label{prf_int} 
Fix $l\geq1$ and $T_u\geq T_A$. Fix $n_0\geq 1$ and $\{n_m\}_{m\geq0}$ as in \ref{d5}. Fix $m\geq0$. Let $\omega\in\{\omega\in\Omega:\rho_1^{(l)}(\omega,n_m,T_m)\geq\frac{\epsilon_0}{2}\}\cap\mathcal{B}_{m-1}^{(l)}$. Then, by Lemma \ref{events}$(a)$ we have that, $X_{n_m}(\omega)\in\mathcal{O}'$. By $\eqref{lit}$ and \ref{d3}, we have that for every $0\leq t\leq T_m$, there exists $n_m\leq k\leq n_{m+1}-1$ such that $t\in [t(k),t(k+1)]$, and 
\begin{equation}
\label{t1}
\bar{X}(\omega,t+t(n_m))=\alpha X_{n_k}(\omega)+(1-\alpha)X_{n_{k+1}}(\omega),
\end{equation} 
for some $\alpha\in [0,1]$ and
\begin{equation}
\label{t2}
\tilde{x}^{(l)}(t;n_m,T_m,\omega)= X_{n_m}(\omega)+\int_{0}^{q}f^{(l)}(\tilde{x}^{(l)}(q;n_m,T_m,\omega),u(q;n_m,T_m,\omega))dq. 
\end{equation}
Therefore for any $t\in [0,T_m]$, 
\begin{align}
\label{t3}
\|\bar{X}(\omega,t+t(n_m))-\tilde{x}^{(l)}(t;n_m,T_m,\omega)\|&\leq \alpha\|X_k(\omega)-\tilde{X}^{(l)}(t(k);n_m,T_m,\omega)\|\nonumber\\& +\alpha \|\tilde{X}^{(l)}(t(k);n_m,T_m,\omega)-\tilde{X}^{(l)}(t;n_m,T_m,\omega)\|\nonumber\\
&+(1-\alpha)\|\tilde{X}^{(l)}(t(k+1);n_m,T_m,\omega)-\tilde{X}^{(l)}(t;n_m,T_m,\omega)\|\nonumber\\&+(1-\alpha)\|X_{k+1}(\omega)-\tilde{X}^{(l)}(t(k+1);n_m,T_m,\omega)\|
\end{align}
Now the aim is to provide an upper bound for each terms in the $R.H.S.$ of the above inequality which is independent of $m$ and $t$. In order to apply \cite[Ch.~2, Lemma~1]{borkartxt} the only additional condition needed is the Lipschitz continuity of $f^{(l)}(\cdot,\cdot)$ uniformly over $u$ and this is obtained using local Lipschitz continuity as follows. Let $M_1>0$ be such that, for every $x\in\mathcal{O}'$, $\|x\|\leq M_1$. By Lemma \ref{pvf}$(b)$ and item (a) in section \ref{nbd}, we have that for $r:=\max\{C_1(\bar{\mathcal{O}'},T_u,l), e^{2KT_u}(M_1+2KT_u)\}$, 
\begin{equation*}
\sup_{t\in[0,T_m]}\|\bar{X}(\omega,t+t(n_m))\|\leq r\ \mathrm{and}\ \sup_{t\in [0,T_u]}\|\tilde{x}^{(l)}(t;n_m,T_m,\omega)\|\leq r.
\end{equation*}
Further by Lemma \ref{pvf}$(c)$, we know that there exists $L(r,l)>0$, such that for every $x,x'\in r U$, for every $t\in [0, T_m]$
\begin{equation*}
\|f^{(l)}(x,u(t;n_m,T_m,\omega))-f^{(l)}(x',u(t;n_m,T_m,\omega))\|\leq L(r,l)\|x-x'\|.
\end{equation*}
The rest of the bounding procedure is exactly the same as \cite[Ch.~2, Lemma~1]{borkartxt}. We obtain that, for every $t\in [0,T_m]$, 
\begin{align*}
\sup_{t\in [0,T_m]}\|\bar{X}(\omega,t+t(n_m))-\tilde{x}^{(l)}(t;n_m,T_m,\omega)\|\leq (&M_1+KT_u)e^{2L(r,l)T_u}L(r,l)\sum_{j\geq0}a(n_0+j)^2\\
&+e^{L(r,l)T_u}\max_{n_m\leq j\leq n_{m+1}}\|\zeta_j(\omega)-\zeta_{n_m}(\omega)\|\\
&+(M_1+KT_u)e^{L(r,l)T_u}a(n_0).
\end{align*}

Now set $N_0'$ such that $(M_1+KT_u)e^{2L(r,l)T_u}L(r,l)\sum_{j\geq0}a(n_0+j)^2+(M_1+KT_u)e^{L(r,l)T_u}a(n_0)<\frac{\epsilon_0}{2}$ and define $K_0(T_u):=e^{L(r,l)T_u}$. Then, for every $n_0\geq N_0'$, we have that $\omega\in\{\omega\in\Omega:\max_{n_m\leq j\leq n_{m+1}}\|\zeta_j(\omega)-\zeta_{n_m}(\omega)\|\geq \frac{\epsilon_0}{4K_0(T_u)}\}\cap\mathcal{B}^{(l)}_{m-1}$, from which Lemma \ref{bd0} follows.
\bibliographystyle{IEEEtran}
\bibliography{Ref}

% Generated by IEEEtran.bst, version: 1.13 (2008/09/30)
\begin{thebibliography}{10}
\providecommand{\url}[1]{#1}
\csname url@samestyle\endcsname
\providecommand{\newblock}{\relax}
\providecommand{\bibinfo}[2]{#2}
\providecommand{\BIBentrySTDinterwordspacing}{\spaceskip=0pt\relax}
\providecommand{\BIBentryALTinterwordstretchfactor}{4}
\providecommand{\BIBentryALTinterwordspacing}{\spaceskip=\fontdimen2\font plus
\BIBentryALTinterwordstretchfactor\fontdimen3\font minus
  \fontdimen4\font\relax}
\providecommand{\BIBforeignlanguage}[2]{{%
\expandafter\ifx\csname l@#1\endcsname\relax
\typeout{** WARNING: IEEEtran.bst: No hyphenation pattern has been}%
\typeout{** loaded for the language `#1'. Using the pattern for}%
\typeout{** the default language instead.}%
\else
\language=\csname l@#1\endcsname
\fi
#2}}
\providecommand{\BIBdecl}{\relax}
\BIBdecl

\bibitem{benaim3}
M.~Bena{\"\i}m, ``Dynamics of stochastic approximation algorithms,'' in
  \emph{Seminaire de probabilites XXXIII}.\hskip 1em plus 0.5em minus
  0.4em\relax Springer, 1999, pp. 1--68.

\bibitem{benaim1}
M.~Bena{\"\i}m, J.~Hofbauer, and S.~Sorin, ``Stochastic approximations and
  differential inclusions,'' \emph{SIAM Journal on Control and Optimization},
  vol.~44, no.~1, pp. 328--348, 2005.

\bibitem{borkartxt}
V.~S. Borkar, \emph{{Stochastic approximation : A dynamical systems
  viewpoint}}.\hskip 1em plus 0.5em minus 0.4em\relax Cambridge University
  Press, 2008.

\bibitem{borkarmeyn}
V.~S. Borkar and S.~P. Meyn, ``The {ODE} method for convergence of stochastic
  approximation and reinforcement learning,'' \emph{SIAM Journal on Control and
  Optimization}, vol.~38, no.~2, pp. 447--469, 2000.

\bibitem{arunstab}
A.~Ramaswamy and S.~Bhatnagar, ``A generalization of the {Borkar-Meyn} theorem
  for stochastic recursive inclusions,'' \emph{arXiv preprint
  arXiv:1502.01953}, 2015.

\bibitem{borkarlkp}
V.~S. Borkar, ``On the lock-in probability of stochastic approximation,''
  \emph{Combinatorics, Probability and Computing}, vol.~11, no.~1, p. 11–20,
  Jan 2002.

\bibitem{kamal}
S.~Kamal, ``On the convergence, lock-in probability, and sample complexity of
  stochastic approximation,'' \emph{SIAM Journal on Control and Optimization},
  vol.~48, no.~8, pp. 5178--5192, 2010.

\bibitem{hfc_0}
H.-F. Chen, L.~Guo, and A.-J. Gao, ``{Convergence and robustness of the
  Robbins-Monro algorithm truncated at randomly varying bounds},''
  \emph{Stochastic Processes and their Applications}, vol.~27, pp. 217--231,
  1987.

\bibitem{hfc_1}
C.~HANFU and Z.~YUNMIN, ``Stochastic approximation procedures with randomly
  varying truncations,'' \emph{Science in China Series A-Mathematics, Physics,
  Astronomy \& Technological Science}, vol.~29, no.~9, pp. 914--926, 1986.

\bibitem{matti}
G.~Fort, E.~Moulines, A.~Schreck, and M.~Vihola, ``Convergence of {Markovian}
  stochastic approximation with discontinuous dynamics,'' \emph{SIAM Journal on
  Control and Optimization}, vol.~54, no.~2, pp. 866--893, 2016.

\bibitem{andvih}
C.~Andrieu, M.~Vihola \emph{et~al.}, ``Markovian stochastic approximation with
  expanding projections,'' \emph{Bernoulli}, vol.~20, no.~2, pp. 545--585,
  2014.

\bibitem{aubindi}
J.-P. Aubin and A.~Cellina, \emph{{Differential inclusions: Set-valued maps and
  viability theory}}.\hskip 1em plus 0.5em minus 0.4em\relax Springer Science
  \& Business Media, 2012, vol. 264.

\bibitem{shoumei}
S.~Li, Y.~Ogura, and V.~Kreinovich, \emph{Limit theorems and applications of
  set-valued and fuzzy set-valued random variables}.\hskip 1em plus 0.5em minus
  0.4em\relax Springer Science \& Business Media, 2013, vol.~43.

\bibitem{vin1t}
V.~Yaji and S.~Bhatnagar, ``Stochastic recursive inclusions with non-additive
  iterate-dependent {Markov} noise,'' \emph{arXiv preprint arXiv:1607.04735},
  2016.

\bibitem{aubinsva}
J.-P. Aubin and H.~Frankowska, \emph{Set-valued analysis}.\hskip 1em plus 0.5em
  minus 0.4em\relax Springer Science \& Business Media, 2009.

\bibitem{regan}
D.~O'Regan, \emph{Existence theory for nonlinear ordinary differential
  equations}.\hskip 1em plus 0.5em minus 0.4em\relax Springer Science \&
  Business Media, 1997, vol. 398.

\end{thebibliography}
\end{document}